\newcommand{\R}{\mathbb{R}}
\newcommand{\N}{\mathbb{N}}
\newcommand{\set}[1]{\{ #1 \}}
\newcommand{\fromto}[2]{\set{#1, \ldots, #2}}
\newcommand{\bigO}{\mathcal{O}}
\newcommand{\dotunion}{\mathbin{\dot{\cup}}}
\newcommand{\True}{\textsc{True}}
\DeclareMathOperator{\dist}{\text{dist}}
\DeclareMathOperator{\rel}{\text{rel}}
\DeclareMathOperator{\betw}{\text{betw}}
\DeclareMathOperator{\valid}{\text{valid}}
\newcommand{\I}{\mathcal{I}}
\newcommand{\J}{\mathcal{J}}
\newcommand{\F}{\mathcal{F}}
\DeclareMathOperator{\graph}{G}
\newcommand{\claimqed}{\hfill\scriptsize$\blacksquare$\normalsize}
\begin{document}
\title{Assistance and Interdiction Problems on Interval Graphs}
%
%
\author{Hung P. Hoang\inst{1} \and
Stefan Lendl\inst{2} \and
Lasse Wulf\inst{3}}
\authorrunning{H. P. Hoang, S. Lendl, L. Wulf}
%
\institute{Department of Computer Science, ETH Zurich, Switzerland 
\email{hung.hoang@inf.ethz.ch}\\
\and Department of Operations and Information Systems, University of Graz, Austria 
\email{stefan.lendl@uni-graz.at}\\
\and Institute of Discrete Mathematics, Graz University of Technology, Austria\\
\email{wulf@math.tugraz.at}}
\maketitle              
\begin{abstract}
We introduce a novel framework of graph modifications specific to interval graphs. 
We study interdiction problems with respect to these graph modifications.
Given a list of original intervals, each interval has a replacement interval such that either the replacement contains the original, or the original contains the replacement.
The interdictor is allowed to replace up to $k$ original intervals with their replacements. 
Using this 
framework we also study the contrary of interdiction problems which we call assistance problems. 
We study these problems for the independence number, the clique number, shortest paths, and the scattering number. 
We obtain polynomial time algorithms for most of the studied problems. Via easy reductions, it follows that on interval graphs, 
the most vital nodes problem with respect to shortest path, independence number and Hamiltonicity can be solved in polynomial time.
\keywords{Interval graphs \and Interdiction \and Most vital nodes problem \and Vertex blockers \and Deletion blockers \and Most vital vertex}
\end{abstract}
\section{Introduction}


\emph{Network interdiction} \cite{NetworkInterdictProblemsBookChapter} is a vibrant and fast-growing area of research. 
A network interdiction problem is a min–max or max–min problem 
involving a graph parameter $\pi$ and two opposing players: 
The goal of the second player, the \emph{network owner}, is to optimize $\pi$, while the goal of the first player, the \emph{interdictor}, 
is to alter the network in such a way to maximally impair the owner's objective. 
Interdiction problems are fundamental problems in network 
analysis, because they tell us, which parts of a network are most susceptible 
to failure or attack \cite{criticalNodeDetectionSurvey}. 
One important type of network interdiction problems
is the \emph{most vital nodes problem} with respect to different graph parameters $\pi$. 
Here, we are given a graph, and some natural number $k$ called the \emph{budget}. The interdictor is allowed to delete up to $k$ vertices from the graph in order to impair $\pi$. For example, if the parameter $\pi = \alpha$ is the independence number, then the interdictor seeks to delete up to $k$ vertices from the graph, such that the size of the largest independent set in the remaining graph is minimized. The problem is to compute which vertices to select.

The most vital nodes problem has been considered for many different parameters, and many different special graph classes. (See e.g.\ \cite{baier2010length,complexityOfFindingMostVitalNodesShortestPath,mostVitalNodesWrtIndSet,mostVitalLinksNodes1982,mahdavi2014minimum}.) Diner et al. \cite{diner2018contractionDeletionBlockers} initiated the investigation of the most vital nodes problem on interval graphs. 
With respect to the clique number, they proved that a simple greedy-algorithm 
solves the most vital nodes problem on interval graphs. However, with respect to the independence number $\alpha$, they could not determine the complexity of the most vital nodes problem on interval graphs and left it as an open question~\cite[(Q2)]{diner2018contractionDeletionBlockers}.

In this paper, we positively answer (Q2): On interval graphs, the most vital nodes problem with respect to $\alpha$ can be solved in polynomial time.  Moreover, we extend this result to a much more general framework. For many graph parameters, the most vital nodes problem is a special case of the framework.

\subsection{The Shrink-Expand Framework} 
We propose a new framework, the \emph{shrink-expand framework}, for interdiction problems specifically 
for interval graphs. It revolves around \emph{shrinking} and \emph{expanding} intervals. Here, one is given two intervals $I$ and $I'$, such that either $I' \subseteq I$, or $I' \supseteq I$. The interval $I$ is called the \emph{original interval}, and the interval $I'$ is called the \emph{replacement interval} .
\begin{itemize}
\item If $I' \subseteq I$, then the operation of substituting the original interval $I$ with the smaller interval $I'$ is called \emph{shrinking the interval.}
\item  If $I' \supseteq I$, then the operation of substituting the original interval $I$ with the bigger interval $I'$ is called \emph{expanding the interval.}
\end{itemize}
 
We study interdiction problems with respect to shrinking and expanding intervals. As an input, we are given a fixed list $I_1,\dots,I_n$ of original intervals, and a fixed list $I'_1,\dots,I'_n$ of replacement intervals. For each $j=1,\dots,n$, the interval $I'_j$ is the replacement of the interval $I_j$, and we have $I'_j \subseteq I_j$ or $I'_j \supseteq I_j$.
Furthermore, we are given a budget $k \geq 0$. We start with the interval graph on the original intervals $I_1,\dots,I_n$.
In order to impair $\pi$, the interdictor can then choose a set of at most $k$ intervals and shrink (expand) the chosen intervals.
The question for the interdictor is now, which intervals to select in order to maximally impair $\pi$. For all the graph parameters considered in this paper, only one of shrinking or expanding makes sense. For example, in the clique number interdiction problem, the network owner seeks to find a clique of maximal size, and the interdictor impairs this objective by shrinking intervals. In the next section, we will list for each problem considered, whether one has shrinking or expanding intervals.

For many graph parameters, the shrink-expand framework contains the most vital nodes problem as a special case for the following reason: Consider a pair of original interval $I_j$ and its replacement interval $I'_j$, such that the replacement interval $I'_j = \emptyset$ is empty. The act of shrinking $I_j$ down to $I'_j$ corresponds to removing all edges from the corresponding vertex $v$ in the interval graph. For the graph parameters of  clique number and shortest path, it is easily seen that removing all incident edges from a vertex is equivalent to deleting it from the graph. Therefore, if every single replacement interval is empty, we have exactly the most vital nodes problem. For the graph parameter of independent set, an analogous observation holds. (Here, one chooses the replacement interval so large, that it intersects every other interval.)

While interdiction problems are min-max or max-min type of problems, the shrink-expand framework allows us to also consider \emph{assistance problems}, which are min-min or max-max type of problems. An assistance problem is a two-player problem, where the first player selects at most $k$ intervals to shrink (expand), and the second player optimizes some graph parameter $\pi$. But this time, the two players have a shared objective, instead of a conflicting one. For example, in the assistance problem for the clique number, one has expanding intervals, and the question is which intervals to expand in order to get an interval graph with maximum possible clique number.
If the interdiction problem has shrinking intervals, the assistance problem has expanding intervals, and vice versa. Assistance problems are interesting to consider, as they are the natural counterpart to interdiction problems. 

In this paper, we consider the interdiction and assistance problem for the following 
four classical graph parameters: 
Independence number $\alpha$, maximum clique size $\omega$, shortest path, 
and the scattering number (a graph parameter determining the Hamiltonicity property of interval graphs). 

\subsection{Related Work}
The area of interdiction problems on graphs has gained significant attention in 
the computer science literature (see eg.~\cite{NetworkInterdictProblemsBookChapter}).
The \emph{most vital nodes} problem for graph parameter $\pi$ is the problem where the interdictor is given a budget of $k$ vertex deletions and wants to maximally impair $\pi$. (In the earlier literature, the name only refers to the case $\pi =$  shortest path \cite{mostVitalLinksNodes1982}.)  In the very closely related \emph{vertex blocker} problem, the interdictor is given a threshold $t < \pi(G)$, and we wish to compute the required budget $k$ to reduce $\pi(G)$ down to $t$ \cite{shortPathsInterdictionTotalAndNodeWise}. In this paper, the difference is of no concern: We consider $k, t$ as part of the input, and in this case it is clear that if one problem is in $P$, so is the other. Many different parameters $\pi$ have been considered for both problems, see e.g.\ Nasirian et al. for a short summary \cite{nasirian2019exact}.
It is easy to see that the most vital nodes problem can be modeled as a special case of the interdiction problem in the shrink-expand framework.
Lewis and Yannakakis proved that for every hereditary and nontrivial graph parameter $\pi$, 
the most vital node problem is NP-complete in general graphs \cite{metaTheoremHereditary}.
 
The most vital nodes problem for parameters $\alpha$ and $\omega$ are well studied, 
even for special graph classes~\cite{mostVitalNodesWrtIndSet,costa2011minimum},
but interval graphs have not yet been investigated. Also, the most vital nodes 
problem for shortest paths has a long history~\cite{baier2010length,complexityOfFindingMostVitalNodesShortestPath,mostVitalLinksNodes1982,shortPathsInterdictionTotalAndNodeWise}.
When considering the most vital \emph{edges}, instead of the most vital nodes, there are results for interval graphs with respect to the parameter of shortest path. Bazgan et al.\ conjectured this problem to be polynomially solvable for unit interval graphs \cite{bazgan2019more} and Bentert et al.\ proved this conjecture \cite{bentert2019lengthboundedcuts}.



\subsection{Our Contribution} 
We introduce a novel framework of graph modifications  
specific to interval graphs. The most vital nodes problem for interval graphs can be reduced to  this framework for all the graph parameters studied in this work. We observe that under this model not only interdiction but also the 
notion of an assistance problem can be defined. 
We study these two problem types for the four graph parameters $\alpha$, $\omega$, shortest path, and scattering number. The scattering number of an interval graph is related to its hamiltonian properties. In particular, we look at the \emph{interdiction problem with respect to Hamiltonicity}. Here, we are given an interval graph, which is hamiltonian, and the interdictor wishes to modify the graph such that it is not hamiltonian anymore.
We also obtain similar results for the related parameters of the path cover number, and graph toughness.

Even though the framework is general, we obtain polynomial-time algorithms
for 6 of the 8 studied problems. We obtain these results using (in some cases technically challenging) dynamic programs. 

\begin{theorem}
The following problems can be solved in polynomial time:
\begin{itemize}
\item The interdiction problem in the shrink-expand framework, for the parameters $\alpha$,  shortest path and Hamiltonicity.
\item The assistance problem in the shrink-expand framework, for the parameters $\omega$, $\alpha$, shortest path, and scattering number.
\item The most vital nodes problem on interval graphs, for the parameters $\alpha$, shortest path, and Hamiltonicity.
\end{itemize}
\end{theorem}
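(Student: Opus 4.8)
The plan is to prove the theorem one problem at a time, obtaining the three most-vital-nodes statements as corollaries of the framework results. For the corollaries I would use exactly the encodings sketched in the introduction: deleting a vertex $v$ is simulated, for $\omega$ and for shortest path, by taking the replacement $I'_v=\emptyset$, and, for $\alpha$, by taking $I'_v$ to be an interval meeting every other interval; in each case one checks that, for the relevant parameter, the modified graph behaves exactly as $G-v$, so that the most-vital-nodes instance is the special case of the framework interdiction instance in which the replacements are fixed in this extremal way. For Hamiltonicity the reduction is cleaner still: using the characterization of Hamiltonicity of interval graphs via the scattering number, one shows that $G$ can be made non-Hamiltonian by deleting at most $k$ vertices if and only if the scattering number of $G$ is at least $1-k$, which reduces the problem to a single scattering-number computation. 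Hence it suffices to give polynomial-time algorithms for the seven framework problems in the theorem, where ``modify'' means shrink or expand according to the meaningful direction for each parameter.

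I would dispose of the assistance problem for $\omega$ first, as it is the easy case. A maximum clique of an interval graph is a set of intervals through a common point, and it is enough to test the $\bigO(n)$ candidate points arising as interval endpoints. For a fixed candidate point $p$, an interval already containing $p$ contributes at no cost, whereas an interval whose original misses $p$ but whose replacement contains $p$ contributes at unit cost; so the largest clique through $p$ has size $a(p)+\min\{k,b(p)\}$ for the obvious counts $a(p),b(p)$, and the answer is the maximum over all $p$. This is a direct counting argument in the spirit of the greedy algorithm of Diner et al.\ \cite{diner2018contractionDeletionBlockers}.

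For the remaining six problems I would use a common engine: a left-to-right sweep over the $\bigO(n)$ distinct endpoints of the original and replacement intervals, driving a dynamic program whose state records how much of the budget has been spent together with just enough combinatorial data to extend a partial solution. For the two problems on $\alpha$ I would exploit that, for interval graphs, $\alpha(G)$ equals the minimum number of points needed to stab all intervals; the dynamic program sweeps the line while simultaneously choosing stabbing points and deciding which intervals to modify, with the state keeping the last point placed and the number of modifications used. For the two shortest-path problems the structural key is that between the fixed intervals $s$ and $t$ there is always a shortest path that is monotone along the line, so the state reduces to the current frontier, the number of hops so far, and the budget spent, and a transition extends the path by one hop, optionally paying to shrink (to lengthen the path or to disconnect $s$ from $t$) or to expand. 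In each case the state space is polynomial and each transition is evaluated in polynomial time.

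The hard part, and where I expect the bulk of the technical effort, is the scattering number — both the assistance problem and the threshold version needed for Hamiltonicity interdiction. The scattering number $\max_S\bigl(c(G-S)-|S|\bigr)$ couples three choices at once: the separating set $S$, the way the remaining graph splits into components along the line, and which intervals are modified and how those modifications merge or split components. I would again sweep over the endpoints, but now the state must encode the ``current block'' of the evolving structure — whether the sweep is inside a component or inside the separator, how many components have already been completed, and the remaining budget — and the main obstacle is to prove that an optimal solution admits a block decomposition compatible with the sweep order, so that only polynomially many states arise. Once this structural claim is established, the same dynamic programs also give the analogous statements for the path cover number and graph toughness, since on interval graphs these are governed by essentially the same quantity.
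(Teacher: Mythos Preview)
Your plan matches the paper for $\omega$-assistance, both $\alpha$ problems, shortest-path assistance, and (in spirit) the scattering-number dynamic program, where the paper's state $(\ell,r,k',\gamma)$ records the extent of the leftmost already-built component together with remaining budget. Your reduction of the Hamiltonicity most-vital-nodes problem to a \emph{single} computation of $\scat(G)$ is different from the paper's route --- the paper instead reduces it to a full instance of the scattering-number assistance problem (with all replacements empty) and proves that the two optimal values differ by exactly $k$ --- and your argument is the cleaner one; it is correct once you handle the edge case $|V(G)|\le k+2$ separately.

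There is, however, a genuine gap for shortest-path \emph{interdiction}. You propose the state ``current frontier, number of hops, budget spent'', but the max--min structure breaks the Markov property of such a recursion. If $y$ is the current $d$-critical point (the farthest point reachable in $d$ hops after the interdictor has acted), the intervals that compete to determine the $(d{+}1)$-critical point are precisely those whose startpoints lie in the segment from the \emph{previous} critical point $x$ up to $y$; two different interdictor choices that agree on $(y,d,k')$ can leave different such segments, hence different futures, so the single-frontier state does not determine the transition. The paper resolves this by tracking \emph{two} consecutive critical points: it defines $f(x,y,d)$ as the maximum size of a set $Z\subseteq\rel(x)$ for which $x$ is $(d{-}1)$-critical and $y$ is $d$-critical with respect to $Z$, and proves a decomposition lemma (\cref{lemma:shortest_path_bellmann_principle}) that lets $f(x,y,d)$ be recovered from $f(w,x,d-1)$ by optimizing over the earlier critical point $w$. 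It also first reduces the shrink--expand instance to a pure most-vital-nodes instance by adding $k{+}1$ indestructible copies of each replacement interval. Without the two-position state (or an equivalent device) your sweep does not go through for this problem.
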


The interdiction problem in the shrink-expand framework for the parameter $\omega$ is shown to be NP-complete and $W[1]$-hard. \cref{tab:result-overview} provides an overview of our results. 
Our results for the independence number answer the question (Q2) stated by Diner et al. \cite[(Q2)]{diner2018contractionDeletionBlockers}. Additionally, in Appendix~\ref{app:contraction}, we show how to modify our proof to also answer their question (Q1), regarding edge contraction blockers on interval graphs~\cite[(Q1)]{diner2018contractionDeletionBlockers}.

\begin{table}
\centering
\setlength{\tabcolsep}{6pt}
\begin{tabular}{r|rrr}
$\pi$ & assistance & interdiction & most vital nodes\\
\hline
$\omega(G)$ & $\bigO(n)$ & \ NP-complete, W[1]-hard & $\bigO(n)$ \cite{diner2018contractionDeletionBlockers} \\
$\alpha(G)$ & $\bigO(kn)$ & $\bigO(kn^2)$ & $\bigO(k n^2)$\\
Hamiltonicity & ? & $\bigO(kn^3)$ & $\bigO(k n^3)$\\
shortest path & $O(kn)$ & $O(n^4)$ & $\bigO(n^4)$
\end{tabular}
\caption{Overview of the results}
\label{tab:result-overview}
\end{table}

\section{Preliminaries}
Let $[n] = \fromto{1}{n}$ denote the set of the first $n$ positive natural numbers and $\mathbb{N}_0 = \mathbb{N} \cup \{ 0 \}$ denote the set of nonnegative integers. We use standard notation and terminology from graph theory to denote 
the common graph parameters studied in this work. 
Given a graph $G = (V,E)$ a  subset $S$ of $V$ is called a \emph{clique} if 
for each pair of distinct vertices $u,v \in S$, it holds that $\{u,v\} \in E$.
Analogously, a subset $S$ of $V$ is called an \emph{independent set} if for each pair 
of distinct vertices $u,v \in S$, it holds that $\{u,v\} \notin E$.  We denote by $\omega(G)$ the size of a maximum 
clique in $G$ and by $\alpha(G)$ the size of a maximum independent set in $G$.
A sequence $P = (v_0, v_1, \dots, v_{\ell})$ of pairwise distinct vertices of $G$ is called a \emph{path} in $G$ 
if each consecutive pair of vertices is connected by an edge.
We say $P$ starts in $v_0$ and ends in $v_{\ell}$. The number $\ell$ of 
such edges is called the length of the path. Given two vertices $s$ and $t$, 
the shortest path from $s$ to $t$ is a path starting in $s$, ending in $t$, and
of minimum length.
A sequence $C = (v_0, v_1, \dots, v_{\ell})$ of pairwise distinct vertices of $G$
is called a \emph{cycle} if, in addition to each consecutive pair of vertices being connected by an edge, 
also $\{v_{\ell}, v_0\} \in E$. A \emph{Hamilton path} in $G$ is a path consisting of all vertices of $G$
and a \emph{Hamilton cycle} in $G$ is a cycle consisting of all vertices of $G$. We 
call a graph $G$ \emph{hamiltonian} if there exists a Hamilton cycle in $G$.
A \emph{path cover} of $G$ is a set of vertex-disjoint paths $P_1, \dots, P_k$ 
such that each vertex of $G$ is contained in at least one path. The number $k$ of 
such paths is called the size of the path cover. Note that a path cover of size $1$ 
is a Hamilton path. If $S \subseteq V$ is a set of vertices, we denote by $G-S$ the graph obtained from $G$
by deleting the vertices in $S$ (and their incident edges) from $G$.

For $a < b$, the \emph{closed interval} $[a, b]$ is defined as $\set{x \in \R \mid a \leq x \leq b}$. It has \emph{startpoint} $a$ and \emph{endpoint} $b$. We furthermore define $[a, a] := \emptyset$. Note that this is non-standard, but it will turn out convenient to use. We only consider closed intervals in this paper. If $\mathcal{J} =  (J_1, \dots, J_n)$ is a sequence of intervals (possibly containing duplicates), then we denote by $\graph(\mathcal{J}) = (V,E)$ the corresponding \emph{interval graph}
where $V = [n]$ and for any $u,v \in V$, it holds that $\{u,v\} \in E$ if and only if 
$J_u \cap J_v \neq \emptyset$. Note that in the whole work, we identify the set of 
vertices of the graph with the integers from $1$ to $n$. Given a subset of vertices $X \subseteq [n]$, we denote by 
$\J[X] = \{ J_i \colon i \in X \}$ the set of corresponding intervals.
The sequence $\mathcal{J}$ is called an \emph{interval representation} of the graph $\graph(\mathcal{J})$. It is folklore that if $\mathcal{J}$ contains two intervals which have a common start- or endpoint, one can easily obtain a representation $\mathcal{J}'$ such that no two intervals have a common start- or endpoint and $\graph(\mathcal{J}') = \graph(\mathcal{J})$. We will therefore assume this property whenever it is convenient.

\begin{table}
\centering
\begin{tabular}{r|rr}
$\pi$ & assistance & interdiction\\
\hline
$\omega(G)$ & expand & shrink\\
$\alpha(G)$ & shrink & expand \\
Hamiltonicity & expand & shrink \\
shortest path & expand & shrink \\
scattering number & shrink & expand
\end{tabular}
\caption{Problems in the shrink-expand framework have either shrinking or expanding intervals.}
\label{tab:shrink-expand-overview}
\end{table}
A problem in the shrink-expand framework is defined by an input $(\I, \I', n, k)$. Here, $n$ is the number of vertices, $k \geq 0$ is the budget, $\I = (I_1, \dots, I_n)$ is the sequence of original intervals, and $\I' = (I'_1, \dots, I'_n)$ is the sequence of replacement intervals. The interval $I'_j$ is the replacement interval of the original interval $I_j$ for $j \in [n]$.  We have either $I'_j \subseteq I_j$ for all $j$ (\emph{shrinking} case), or $I'_j \supseteq I_j$ for all $j$ (\emph{expanding} case). \cref{tab:shrink-expand-overview} explains for all problems in the shrink-expand framework, whether we are in the shrinking case or the expanding case. Throughout the paper, we always use the letters $a$ and $b$ for start- and endpoints of $I_j$ and $I'_j$, that is $I_j = [a_j, b_j]$, and $I'_j = [a'_j, b'_j]$.

For a set $X \subseteq [n]$ of indices, we denote by $\I_X$ the sequence of intervals that is obtained from $\I$ after shrinking (expanding) the intervals with indices in $X$. Formally, 
\[(\I_X)_j = \begin{cases} I'_j & \text{if }j \in X \\
I_j & \text{if } j \not\in X. \end{cases}\]
When it is convenient, we can assume that each of $\I, \I', \I_X$ contains no duplicates and treat them as sets instead of sequences. We define $G_X := \graph(\I_X)$. We are interested in problems of the form 
\[ \max_{X \subseteq [n], |X| \leq k} \pi(G_X) \ \text{ or } \min_{X \subseteq [n], |X| \leq k} \pi(G_X)\]
where $\pi$ is some graph parameter. The problem is an \emph{interdiction problem} if it is a problem of max-min or min-max type. The problem is an \emph{assistance problem} if it is a problem of min-min or max-max type.

\section{Shortest Path}

In the shortest path interdiction problem, we are given a graph, and two specified vertices $v_s, v_t$. The interdictor wishes to maximize the length of a shortest path connecting $v_s$ and $v_t$. For interval graphs, we consider the following equivalent problem formulation: 
Given a set $Z$ of intervals and two numbers $s, t \in \R$, define $\dist_Z(s, t)$ to be the minimum number of intervals from $Z$, which one needs to \enquote{walk} from $s$ to $t$. (Here, by a walk, we mean a sequence $(X_1, \dots, X_d)$ of intervals, such that $s \in X_1$, $t \in X_d$, and $X_i$ intersects $X_{i+1}$ for $i \in \fromto{1}{d-1}$. For example, in \cref{fig:shortest_path_ex}, we have $\dist_Z(0, y) = 3$ with the corresponding walk $(I_1,I_3,I_4)$.) 
For the interdiction problem, 
the input consists of the set $\I = \{ [a_1,b_1], \dots, [a_n, b_n] \}$ of original intervals, the set 
$\I' = \{ [a_1,b_1], \dots, [a'_n, b'_n] \}$ of replacement intervals, two numbers $s, t \in \R$, and a budget $k \in \N_0$. We are in the shrinking case and want to determine 
\[\max\set{\dist_{\I_X}(s, t) : X \subseteq [n], |X| \leq k}.\] This problem variant is clearly equivalent to the problem described above. Furthermore, without loss of generality, we can assume that $s = 0 < t$, and that for all intervals $F \in \I \cup \I'$ one has $F \subseteq [0, t]$. (This is because it is never beneficial to walk to the left of $s$ or to right of $t$, so if some $F \in \I \cup \I'$ is not contained in $[s, t]$, we can replace $F$ by $F \cap [s, t]$ and still obtain an equivalent problem instance.) 

In the assistance problem, we are analogously given expanding intervals $\I, \I'$ and want to minimize the distance between $s$ and $t$ by expanding at most $k$ intervals. We show in this section that both problems can be solved in polynomial time.

\begin{figure}
\centering
\includegraphics[scale=1]{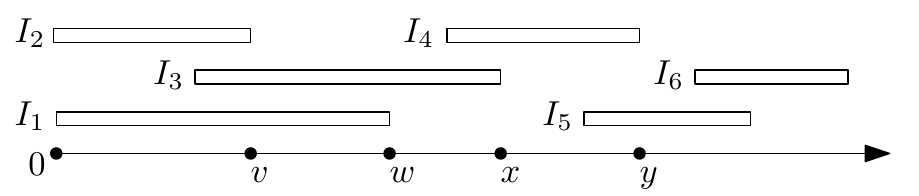}
\caption{Some concepts used in the algorithm for shortest path interdiction.}
\label{fig:shortest_path_ex}
\end{figure}

\subsection{Assistance Problem}
As explained above, we are given expanding intervals $\I, \I'$ and want to determine 
\[ \min \{ \dist_{\I_X}(s, t) \colon X \subseteq [n], |X| \leq k \}. \]
There is a nice analogy for this problem: Suppose we are located on the real number line at point $s$ and want to get to point $t$. Furthermore, an interval $[a_i, b_i] \in \I$ is a road from $a_i$ to $b_i$ which we can use for this cause without paying a fee. Likewise, an interval $[a_i', b_i'] \in \I'$ is a road from $a'_i$ to $b_i'$, but it costs us a fee of one dollar to use. The assistance problem asks for the lowest amount of roads needed to travel from $s$ to $t$ while paying at most $k$ dollars. This can be solved using dynamic programming. 
\begin{theorem}

The assistance problem of shortest path in the shrink-expand framework can be solved in polynomial time. The algorithm can be implemented in $\bigO(kn)$ time, if we are given the list of all intervals sorted both with respect to startpoints and endpoints.
\end{theorem}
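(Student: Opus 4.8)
The plan is to design a dynamic program over the natural line, exploiting the "roads and fees" analogy. First I would preprocess: assume $s = 0 < t$ and every interval lies in $[0,t]$, and sort the endpoints of all $2n$ original and $2n$ replacement intervals. The key structural observation is that an optimal walk from $s$ to $t$ can be assumed to use intervals in a left-to-right "greedy" fashion: whenever we are standing at some reachable point $x$, the best next move is to pick an interval whose startpoint is $\le x$ and whose endpoint reaches as far right as possible. The only twist over the classical single-source interval covering is that some intervals (the expanded ones $I'_j$) cost a dollar, and we must respect the budget $k$. So the natural state is a pair $(i, j)$ where $i$ counts how many intervals we have used so far (equivalently, we think of it as the "level" in a BFS from $s$) and $j \le k$ counts how many dollars we have spent; the value stored is the furthest point on the real line reachable using exactly $i$ intervals and at most $j$ dollars.

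The key steps would be: (1) define $R(i,j) := \sup\{\,x : x \text{ reachable from } s \text{ using } \le i \text{ intervals and } \le j \text{ fee-paying intervals}\,\}$, with $R(0,j) = s = 0$ for all $j$; (2) establish the recurrence
\[
R(i+1, j) = \max\Bigl( \max_{\ell : a_\ell \le R(i,j)} b_\ell,\ \max_{\ell : a'_\ell \le R(i, j-1)} b'_\ell \Bigr),
\]
i.e. from the frontier reachable with $i$ steps and $j$ dollars we may append any free interval starting no later than the current frontier, or, if we still have a dollar left from the $(i,j-1)$ budget, any paid interval starting no later than that frontier; (3) prove correctness of this recurrence by an exchange/induction argument — the $\Rightarrow$ direction is immediate from the definition, and for $\Leftarrow$ one takes an optimal walk realizing a point $x$ reachable with $i+1$ intervals and $j$ dollars, looks at its last interval, and argues by induction on the prefix that the frontier condition of the recurrence is met (using that intervals are nested or that we may always shift the reach outward); (4) read off the answer as the smallest $i$ such that $R(i,k) \ge t$ (and report $+\infty$ / "no walk" if no such $i$ exists; note $i \le n$ always, since a shortest walk never repeats an interval, so the $i$-axis is bounded by $n$).

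For the running time: the table has $O(kn)$ entries. The subtlety is evaluating $\max_{\ell : a_\ell \le R(i,j)} b_\ell$ fast. Since $R(i,j)$ is nondecreasing in both arguments, as we sweep $i$ upward (for fixed $j$) the threshold $R(i,j)$ only moves right, so with the intervals pre-sorted by startpoint we can maintain, with a single pointer that never moves backwards, the running maximum of endpoints among all intervals already "unlocked"; amortized this costs $O(n)$ per value of $j$, hence $O(kn)$ overall for the free intervals, and likewise $O(kn)$ for the paid ones. Combined with the $O(n\log n)$ sorting (or $O(n)$ if the sorted lists are given), this yields the claimed $\bigO(kn)$ bound.

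The main obstacle I expect is the correctness proof of the recurrence, specifically the $\Leftarrow$ direction: one must be careful that "reachable with $\le i$ intervals" behaves well under taking prefixes of an optimal walk, and that replacing a chosen interval by one reaching further right never increases the count of intervals or the fee — this is where the interval structure (and the fact that we only ever care about the rightmost reachable point, not the whole reachable set) is essential. A secondary, more mundane point is handling the budget bookkeeping cleanly so that paid intervals are charged exactly once and the transition correctly consults $R(i, j-1)$ rather than $R(i,j)$; getting the off-by-one right in both the state and the final answer extraction needs a small amount of care but no real difficulty.
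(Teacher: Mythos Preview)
Your approach is correct and yields the claimed $\bigO(kn)$ bound, but it is organized dually to the paper's proof. The paper builds an explicit auxiliary DAG on the set $\mathcal{P}$ of all start- and endpoints: for each $p\in\mathcal{P}$ it precomputes, via a single sweep line, the rightmost point $\alpha(p)$ reachable from $p$ by one interval of $\I$ and the analogous $\alpha'(p)$ for $\I'$; it then adds the two arcs $(p,\alpha(p))$ (cheap) and $(p,\alpha'(p))$ (expensive) and solves a budgeted shortest-path problem on this $O(n)$-edge DAG via the table $f(p,k')=$ minimum number of hops to reach $p$ using at most $k'$ expensive arcs. Your table $R(i,j)$ is the natural transpose: you index by (number of hops, budget) and store the rightmost reachable point, extracting the answer as the first level at which the frontier passes $t$. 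Both dynamic programs rest on the same greedy ``always jump as far right as possible'' structure and both run in $\bigO(kn)$; the paper's version is slightly more modular (a clean reduction to constrained shortest path on an explicit DAG), whereas yours avoids building the graph and instead exploits the monotonicity of $R(i,j)$ in $i$ to amortize the inner maxima with one advancing pointer per budget level. One small point worth making explicit in your write-up: your formulation silently allows a walk to use both $I_\ell$ and $I'_\ell$ for the same index $\ell$, which the original problem does not; a one-line argument using $I_\ell\subseteq I'_\ell$ shows that an optimal walk never needs both, so this relaxation is exact (the paper's ``roads and fees'' analogy makes the same move).
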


\begin{proof}
Consider the set $\mathcal{P} := \bigcup_{i=1}^n \set{a_i, b_i, a'_i, b'_i}$ consisting of all the start- and endpoints of intervals $F \in \I \cup \I'$. For each point $p \in \mathcal{P}$, let $\alpha(p)$ be the rightmost point one can reach from $p$ by using a single interval from $\I$. Formally, $\alpha(p) := \max\set{x \in \mathcal{P} \mid \exists I \in \I: p \in I \text{ and } x \in I}$ or $\alpha(p) = -\infty$ if the set is empty. Likewise, let $\alpha'(p)$ be the rightmost point one can reach from $p$ by using a single interval from $\I'$, i.e.\ $\alpha'(p) := \max\set{x \in \mathcal{P} \mid \exists I' \in \I': p \in I' \text{ and } x \in I'}$ or $\alpha'(p) = -\infty$ if the set is empty. We claim that the set of all values $\alpha(p), \alpha'(p)$ for $p \in \mathcal{P}$ can be precomputed in linear time using a sweep line algorithm, provided that we already have the intervals sorted with respect to both the start- and endpoints. 

Indeed, consider a sweep line starting at $-\infty$ going to $\infty$. We keep track of the fact whether some interval $I \in \I$ intersects the line, and if so, which of these intervals expands the most to the right. Likewise, we keep track of the fact whether some interval $I' \in \I'$ intersects the line, and if so, which of these intervals expands the most to the right. Whenever we encounter some $p \in \mathcal{P}$, we set $\alpha(p)$ and $\alpha'(p)$ and possibly update our status. It is easy to see that this algorithm is correct, and can be implemented in $\bigO(n)$ time. 

Finally, consider the digraph $G$ on vertex set $\mathcal{P} \cup \set{-\infty}$ and the following arc set: For every $p \in \mathcal{P}$, it has one arc $e = (p, \alpha(p))$ and one arc $e' = (p, \alpha'(p))$. We call $e'$ an \emph{expensive} arc. Observe that $G$ is acyclic and has $\bigO(n)$ arcs. Furthermore, the solution to the assistance problem is given by the shortest path in $G$ from $s$ to $t$, which uses at most $k$ expensive arcs. But such a path can be computed using dynamic programming in $\bigO(kn)$ time, via the recursion formula
\begin{align*}
f(p, k') = \min\set{\min\set{f(p',k')+1 \mid (p',p) \text{ not expensive}},\\ \min\set{f(p',k'-1)+1 \mid (p',p) \text{ expensive}}} 
\end{align*}
with starting conditions $f(s, k') = 0$ for all $k' \in \fromto{1}{k}$ and $f(p, -1) = \infty$ for all $p \in \mathcal{P}$.
It is easy to show via induction that $f(p,k')$ is the length of the shortest path from $s$ to $p$
using at most $k'$ expensive arcs. At the end $f(t,k)$ is the optimal objective value for the given assistance problem.
\qed\end{proof}

\subsection{Interdiction Problem}
In this section, we prove: 

\begin{theorem}
\label{thm:shortest_path_interdiction}
The interdiction problem for shortest path in the shrink-expand framework can be solved in time $\bigO(n^4)$.
\end{theorem}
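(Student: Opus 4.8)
The plan is to set up a dynamic program over the line, but now the interdictor (who shrinks intervals) and the network owner (who walks) are adversaries, so we need to reason about the worst-case interaction. As in the assistance case, the first step is to discretize: let $\mathcal{P}$ be the set of all start- and endpoints of intervals in $\I \cup \I'$, and observe that any shortest walk can be assumed to "jump" between points of $\mathcal{P}$. The key structural observation I would try to establish is a reformulation of $\dist_{\I_X}(s,t)$: for a fixed budget-$k$ choice $X$ of shrunk intervals, the owner's walk proceeds left-to-right, and from a point $p$ the owner can reach the point $\max\{\alpha_X(p), \alpha'_X(p)\text{-type quantities}\}$ depending on which intervals survive. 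Since the interdictor commits to $X$ before the owner walks, the right order of quantifiers is $\max_X \min_{\text{walk}}$.

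The main idea I would pursue is that the interdictor's optimal strategy has a "local" structure: intuitively, the interdictor should spend budget to block the intervals that the owner would otherwise use to make long jumps over a particular gap. So I would define a DP indexed by a point $p \in \mathcal{P}$ and a remaining budget $k'$, where the state value is the maximum (over interdiction choices restricted to the relevant part of the line) of the minimum number of intervals the owner needs to get from $s$ to $p$. The transition from $p$ needs to decide how much budget to "spend near $p$": if the interdictor shrinks the $j$ largest original intervals covering the stretch just to the right of $p$, the owner is forced to use a replacement interval (an "expensive" step in the earlier sense) or a shorter original one. Concretely, for each point $p$ and each $j$, I would precompute the point $\beta_j(p)$ = the rightmost point reachable from $p$ after the $j$ best (furthest-reaching) original intervals through $p$ have been shrunk; then the recursion is roughly $g(q, k') = \max_{p : \text{an interval covers } [p,q]} \min_j \bigl( g(p, k' - j) + 1 \bigr)$ over the appropriate $j$ forcing the jump from $p$ to land at or before $q$. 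The $\bigO(n^4)$ running time suggests $\bigO(n^2)$ states (point $\times$ budget, with budget effectively $\bigO(n)$) times $\bigO(n^2)$ work per state (ranging over the other endpoint and over how much budget to spend), or a similar product; I would aim to match that.

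The hard part, I expect, will be proving that this "spend budget locally at each jump" DP actually captures the global optimum — i.e., that there is no benefit for the interdictor in spreading budget in a way that interacts non-locally across the walk, and conversely that the owner cannot be cleverer than the greedy furthest-reach walk once $X$ is fixed. The greedy-walk direction should follow from a standard exchange argument (on an interval graph a shortest walk can always be taken to be "greedy to the right"). The subtle direction is the interdictor's: I would need an exchange/normal-form argument showing that in an optimal interdiction, the shrunk intervals can be partitioned according to which "segment" of the owner's forced walk they obstruct, so that the cost decomposes additively over the DP transitions. I would also need to handle carefully the case distinction between whether the owner is forced onto a replacement interval or simply onto a shorter surviving original, and the boundary cases where $s=0$ and all intervals lie in $[0,t]$ (as normalized in the excerpt). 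Once that normal-form lemma is in place, correctness of the recursion follows by induction on the point $p$ in left-to-right order, and the running-time bound is a routine count.
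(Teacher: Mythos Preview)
Your proposal takes a route that is genuinely different from the paper's, and as written it has a real gap.

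\textbf{What the paper does.} The paper does \emph{not} set up a $(p,k')$-indexed DP at all. Instead it first reduces the shrink--expand interdiction problem to the \emph{most vital nodes} (pure deletion) problem: one forms the multi-set $M = \I \cup (k{+}1)\text{ copies of }\I'$, and argues that deleting at most $k$ intervals from $M$ is equivalent to shrinking at most $k$ intervals in $\I$ (since no replacement can be fully eliminated with budget $k$). This kills the awkward ``shrunk but the replacement is still there'' interaction you would otherwise have to track. Then, for the pure deletion problem, the DP is indexed by a \emph{pair} of consecutive critical points and a distance: $f(x,y,d)$ is the maximum number of intervals (with start $\le x$) one can \emph{keep} so that $x$ is $(d{-}1)$-critical and $y$ is $d$-critical. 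The budget $k$ never enters the state; it is only compared against $|\I|-|Z|$ at the very end. The recursion decomposes cleanly because intervals are partitioned by \emph{start point} into $\text{rel}(w)$ and $\text{betw}(w,x)$, and the constraint on $\text{betw}(w,x)$ involves the next critical point $y$ --- which is why two points are carried in the state.

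\textbf{Where your proposal breaks.} Your state $(p,k')$ with a single point is not rich enough for the decomposition you want. The intervals that govern the jump \emph{from} $p$ are exactly those starting in $(p_{\text{prev}},p]$, where $p_{\text{prev}}$ is the previous critical point; intervals starting at or before $p_{\text{prev}}$ were already forced (by criticality of $p$) to end by $p$ and hence are irrelevant for the next jump. So the quantity ``$\beta_j(p)$ after shrinking the $j$ furthest-reaching originals through $p$'' both over- and under-counts: it includes originals that were already dealt with at earlier steps (double-charging budget), and it ignores that after shrinking, the replacement $I'_t$ may still cover $p$ and reach beyond your intended next point $q$ (so spending budget there is useless). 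Put differently, the ``locality'' you flag as the hard part is not just hard to prove --- it fails for a one-point state, because the cost of forcing the jump from $p$ to land at or before $q$ depends on $p_{\text{prev}}$ (to delimit which intervals are newly relevant) and on whether replacements can still overshoot $q$ (which a simple ``shrink the top $j$'' does not capture).

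If you enrich the state to carry two consecutive critical points and trade the explicit budget coordinate for a ``maximize intervals kept'' objective (or, equivalently, first perform the $k{+}1$-copies reduction so you are only deleting), you arrive essentially at the paper's DP; that is where the $\bigO(n^4)$ bound comes from, via $\bigO(n^3)$ states $(x,y,d)$ and $\bigO(n)$ work per state.
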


In order to enhance the readability of the proof, we employ the following strategy: First, we show that it suffices to solve the most vital nodes problem instead. Then we prove that the most vital nodes problem can be solved in polynomial time. Finally, we give an argument how the runtime can be improved.

\begin{lemma}
\label{lemma:shortest_path_reduction}
The interdiction problem for shortest path in the shrink-expand framework can be reduced to the most vital nodes problem for shortest path.
\end{lemma}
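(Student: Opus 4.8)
The plan is to show that any instance of the shrink-expand interdiction problem for shortest path can be transformed, in polynomial time, into an equivalent most-vital-nodes instance on some interval graph, where "most vital nodes" means deleting up to $k$ vertices to maximize the $s$–$t$ distance. The guiding idea is that in the interdiction (shrinking) case, shrinking an interval $I_j$ to $I'_j$ is only ever used by the interdictor to \emph{destroy} the usefulness of $I_j$ for connecting $s$ to $t$; so we want to encode each pair $(I_j, I'_j)$ by a gadget of vertices such that deleting one vertex simulates "shrinking was applied to $j$" and not deleting it simulates "$j$ was left alone", while the $s$–$t$ distance in the gadgeted graph matches $\dist_{\I_X}(s,t)$ up to a known additive/multiplicative bookkeeping factor.

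First I would recall the normalization already established in the text: $s = 0 < t$ and every interval in $\I \cup \I'$ is contained in $[0,t]$. The natural first step is to replace the continuous line by the discrete event set $\mathcal P = \bigcup_i \{a_i,b_i,a'_i,b'_i\}$ and to argue, exactly as in the assistance proof, that $\dist_{\I_X}(s,t)$ is governed by a "reach" structure on $\mathcal P$. Concretely, for a choice $X$, walking optimally is a greedy process: from the current position, jump as far right as possible using one available interval, where interval $I_j$ contributes its full extent $[a_j,b_j]$ if $j \notin X$ and only $[a'_j, b'_j]$ if $j \in X$. This reduces the interdiction problem to: choose $X$, $|X|\le k$, to maximize the number of greedy jumps from $0$ to $t$. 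The second step is to build, for each original interval $I_j$, a small set of "copy" vertices in a new interval graph $H$: one vertex whose interval equals $I_j$ (the "intact copy") and one vertex whose interval equals $I'_j$ (the "shrunk copy"), possibly padded with auxiliary always-present vertices so that the interdictor is \emph{forced}, if it wants to weaken $j$, to spend its budget deleting the intact copy — deleting the shrunk copy or auxiliary vertices is never helpful, and deleting the intact copy while keeping the shrunk copy exactly realizes $\I_X$ at index $j$. One has to check $I'_j \subseteq I_j$ guarantees the shrunk copy's interval genuinely sits inside the intact one, so the gadget is representable as an interval graph; and one has to ensure that $\dist$ in $H$ (after deleting the chosen $\le k$ vertices) equals $\dist_{\I_X}$ in the original, by showing the extra always-present vertices do not create shortcuts — e.g.\ by making auxiliary intervals tiny and placed only where they cannot bridge a gap.

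The main obstacle I anticipate is precisely this last point: guaranteeing that introducing a persistent "intact copy" plus an "always-available shrunk copy" for the \emph{same} index $j$ does not let the network owner cheat. In the original problem the owner uses \emph{either} $I_j$ \emph{or} $I'_j$ depending on whether $j\in X$, never both, and never a mixture across the instance that the interdictor's single choice of $X$ does not allow. In $H$, an undeleted index $j$ offers both $I_j$ and $I'_j$ simultaneously, but since $I'_j \subseteq I_j$ the larger interval dominates, so offering both is the same as offering just $I_j$ — hence "not deleting $j$" in $H$ matches "$j \notin X$" in the original. Conversely, if the interdictor deletes the intact copy of $j$, the owner is left with only $I'_j$ at that index, matching "$j \in X$". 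The budget is preserved since the interdictor never profits from deleting anything other than intact copies. So the equivalence $\max_{|X|\le k}\dist_{\I_X}(s,t) = \text{(optimal most-vital-nodes value in }H\text{)}$ follows, and the construction is clearly polynomial in $n$. I would close by remarking that this reduction, combined with the polynomial algorithm for the most-vital-nodes shortest-path problem proved in the remainder of the section, yields \cref{thm:shortest_path_interdiction}.
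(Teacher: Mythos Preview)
Your reduction has a genuine gap. You build $H$ with one intact copy $I_j$ and one shrunk copy $I'_j$ per index and then assert that ``the interdictor never profits from deleting anything other than intact copies.'' This is false. If the interdictor spends \emph{two} units of budget to delete both the intact copy and the shrunk copy of the same index $j$, the owner loses the interval at index $j$ entirely --- an operation strictly stronger than shrinking, and one that is \emph{not} available in the shrink-expand framework. A small example: take $n=2$, $I_1=[0,10]$, $I'_1=[0,5]$, $I_2=I'_2=[4,10]$, $s=0$, $t=10$, $k=2$. In the shrink-expand problem the interdictor can at best force $\dist=2$. In your $H$ the interdictor deletes both copies of index $1$ and disconnects $s$ from $t$. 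So the equality you claim fails in the direction $\max_{|X|\le k}\dist_{\I_X}(s,t) \geq \text{(most-vital-nodes value in }H\text{)}$. Your ``tiny auxiliary vertices placed where they cannot bridge a gap'' do not address this: the problem is not shortcuts created by extras, but the need to \emph{protect} each $I'_j$ from deletion.

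The paper's construction fixes exactly this point by including $k+1$ copies of every replacement interval $I'_j$ (so $M = \I \cup \bigcup_j N(j,k)$ with $|N(j,k)|=k+1$). With budget $k$ the interdictor can never remove all copies of any $I'_j$, so deleting any subset of $N(j,k)$ is useless and one may assume the deleted set lies entirely in $\I$. That is the missing idea: you need enough redundancy in the shrunk copies to make their deletion pointless, and a single copy does not suffice.
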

\begin{proof} 
Let $(\I, \I', k, s,t)$ be an instance of the interdiction problem, where $\I = \fromto{[a_1, b_1]}{[a_n, b_n]}$, $\I' = \fromto{[a'_1, b'_1]}{[a'_n, b'_n]}$, and $[a'_j, b'_j] \supseteq [a_j, b_j]$ for all $j \in \fromto{1}{n}$. Furthermore, $k \in \N_0$ is the budget, and $s, t \in \R$. We construct an equivalent instance of the most vital nodes problem. This instance is described by $(M, k, s, t)$, where $k, s, t$ stay the same as before, and $M$ is the following multi-set of intervals: $M$ contains one copy of $\I$ and $k+1$ copies of $\I'$, i.e.\ it has cardinality $n(k+2)$. More formally, for each $i \in \fromto{1}{n}$, let $N(i, k)$ be the multi-set, consisting of $k+1$ copies of $[a_i', b_i']$. The multi-set $M$ is then given by
\[M = \I \cup \bigcup_{i=1}^n N(i, k). \]

We now claim that the interdiction problem for $\I, \I'$ has the same solution as the most vital nodes problem for $M$. In other words, we have the equality
\[\min_{X \subseteq [n], |X| \leq k} \dist_{\I_X}(s, t) = \min_{Y \subseteq M, |Y| \leq k} \dist_{M \setminus Y}(s, t). \]
In fact, to see \enquote{$\geq$}, whenever we have on the left-hand side a set $X \subseteq [n]$ describing the indices of intervals which are shrunken by the interdictor, 
we can also delete the same intervals in $M$ to get an equivalent term on the right-hand side. 
To see \enquote{$\leq$}, assume we have a multi-set $Y \subseteq M$ on the right-hand side. 
Observe that there is no $i \in \fromto{1}{n}$ such that $N(i, k) \subseteq Y$, simply because $|N(i, k)| = k+1$, 
but $|Y| \leq k$. But 
then the deletion of $N(i, k) \cap Y$ has no effect on the distance between $s$ and $t$. 
We can therefore assume that $Y \cap N(i,k) = \emptyset$ for all $i$, which implies $Y \subseteq \I$. 
Hence, the interdictor can also shrink the same intervals in $\I$ to get an equivalent term on the left-hand side. This completes the proof.
\qed\end{proof}

Certainly, the reduction described by the previous proof will increase the size of the instance by a factor of roughly $k$. We show at the end of this section how to overcome this problem.

In order to solve the most vital nodes problem, we introduce some concepts. If $Z$ is a set of intervals, $x \in \R_{\geq0}$, and $d \in \N_0$, we say that $x$ is \emph{$d$-critical} with respect to $Z$, if $\dist_Z(0, x) = d$, but $\dist_Z(0, x') > d$ for all $x' > x$. (We explicitly allow $\dist_Z(0, x') = \infty$). For example, in \cref{fig:shortest_path_ex}, $x$ is 2-critical with respect to $\fromto{I_1}{I_6}$.

The rough idea of the algorithm is to apply dynamic programming with respect to the rightmost two critical points. To make this more precise, we define the following concepts: If $\I$ is a set of intervals forming an instance for the most vital nodes problem in interval graphs, and if $x, y \in \R_{\geq 0}$, $x < y$, then we call $\rel(x) := \set{[a_i, b_i] \in \I : a_i \leq x}$ the \emph{relevant intervals} of $x$ and we call $\betw(x, y) := \set{[a_i, b_i] \in \I : x < a_i \leq y}$ the intervals \emph{between} $x$ and $y$. For example, in \cref{fig:shortest_path_ex}, we have $\rel(x) = \fromto{I_1}{I_4}$ and $\betw(x,y) = \set{I_5}$. Finally, if $Z \subseteq \rel(x)$, we say that $Z$ is in state $(x, y, d)$, if with respect to $Z$, we have that $x$ is $(d-1)$-critical and $y$ is $d$-critical. Observe that if a set $Z$ is in state $(x, y, d)$, it does not contain any interval that goes beyond $y$, even though $\rel(x)$ may contain intervals that go beyond $y$. For example, in \cref{fig:shortest_path_ex}, the set $\set{I_2, I_3}$ is in state $(v, x, 2)$, but not in state $(v, w, 2)$. We now prove the following central lemma for our algorithm.

\begin{lemma}
\label{lemma:shortest_path_bellmann_principle}
Let $x, y \in \R_{\geq 0}, x < y$ and $d \geq 2$. A set $Z \subseteq \rel(x)$ is in state $(x, y, d)$ if and only if there exists $w < x$ such that both 
\begin{itemize}
\item $Z \cap \rel(w)$ is in state $(w, x, d-1)$
\item $Z \cap \betw(w, x)$ contains no interval $[a_i, b_i]$ with $b_i > y$ and at least one interval $[a_i, b_i]$ with $b_i = y$.
\end{itemize} 
\end{lemma}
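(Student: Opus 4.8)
The statement is a Bellman-optimality principle relating the state of a set $Z$ at the rightmost two critical points $(x,y)$ to the state of its restriction at the previous two critical points $(w,x)$. The plan is to prove both directions by directly unraveling the definitions of $d$-critical and of the relevant/between decomposition, using the key structural fact that the walk realizing $\dist_Z(0,\cdot)$ proceeds strictly rightward through critical points, so that the last step of a distance-$d$ walk uses exactly one interval, and that interval must start at or before $x$ (it overlaps the $(d-1)$-critical point $x$) and reach exactly up to $y$.

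For the forward direction, assume $Z \subseteq \rel(x)$ is in state $(x,y,d)$, i.e.\ $x$ is $(d-1)$-critical and $y$ is $d$-critical with respect to $Z$. Define $w$ to be the $(d-2)$-critical point of $Z$ (this exists since $d \geq 2$; when $d=2$ we have $w$ is the $0$-critical point, which is handled by the starting convention that $0$-criticality corresponds to points reachable with no interval — I would check this boundary case explicitly). I would then argue: (i) $Z \cap \rel(w)$ contains exactly the intervals of $Z$ that matter for distances up to $d-1$, hence $w$ is still $(d-2)$-critical and $x$ is still $(d-1)$-critical with respect to $Z \cap \rel(w)$, giving state $(w,x,d-1)$; the point here is that any interval of $Z$ starting strictly after $w$ cannot help reach a point at distance $\leq d-2$, and cannot create a shortcut to $x$ at distance $< d-1$ because that would contradict $(d-1)$-criticality of $x$ in $Z$. (ii) For $Z \cap \betw(w,x)$: since $y$ is $d$-critical in $Z$, some interval of $Z$ reaching a point at distance $d-1$ must extend to $y$ (and none beyond, else $y$ would not be $d$-critical). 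The interval(s) of $Z$ attaining the max reach from the distance-$(d-1)$ region either lie in $\rel(w)$ or in $\betw(w,x)$; I would show the relevant one lies in $\betw(w,x)$ (if it were in $\rel(w)$ with right endpoint $y$, then $x$, which is reachable at distance $d-1$ and lies in $\rel(w)$'s span... here I need to be careful) — more robustly, the correct claim is just: among $Z\cap\betw(w,x)$ no interval exceeds $y$ (inherited from the fact that no interval of $Z$ at all exceeds $y$, since $Z$ is in state $(x,y,d)$) and at least one reaches exactly $y$ (because the reach of $Z$ from the distance-$(d-1)$ frontier is exactly $y$, and any interval realizing this reach either starts in $\betw(w,x)$, or starts in $\rel(w)$; in the latter case it would already be counted in computing $\dist_{Z\cap\rel(w)}$ and would make $x$ not $(d-1)$-critical unless its left endpoint overlaps a distance-$(d-2)$ point, but then it overlaps $w$ region — I would resolve this by noting $\betw(w,x)$ together with $\rel(w)$ covers all of $\rel(x) \supseteq Z$, and splitting cases on whether the $y$-reaching interval starts $\leq w$ or in $(w,x]$; if $\leq w$ it contributes to $\dist_{Z\cap\rel(w)}$ and forces $\dist_{Z\cap\rel(w)}(0,y)\le d-1$ contradicting that $x$ is the $(d-1)$-critical point of $Z\cap\rel(w)$).

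For the backward direction, assume such a $w$ exists with the two bullet conditions. I would show directly that $x$ is $(d-1)$-critical and $y$ is $d$-critical with respect to $Z$. Since $Z \cap \rel(w)$ is in state $(w,x,d-1)$, we get $\dist_{Z}(0,x) \le d-1$ (using only intervals from $Z\cap\rel(w)\subseteq Z$); and $\dist_Z(0,x)\ge d-1$ because any walk in $Z$ to $x$ uses only intervals starting $\le x$, i.e.\ intervals of $\rel(x) = (Z\cap\rel(w)) \cup (Z\cap\betw(w,x))$, and the between-intervals start after $w$ so cannot shorten the distance below what $Z\cap\rel(w)$ achieves — more precisely I'd argue a shortest walk to $x$ can be assumed to use only $\rel(w)$-intervals since the between-intervals don't start before $x$... actually they start in $(w,x]$ so they could be used; I would instead argue: appending any between-interval to a walk only extends reach to at most $y$, and reaching $x$ at distance $<d-1$ would, by the previous critical point being $w$, force a distance-$(d-2)$ point to the right of $w$, contradicting $w$ being $(d-2)$-critical for $Z\cap\rel(w)$ combined with between-intervals not starting before $w$. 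Then for $y$: one between-interval reaches exactly $y$ from a point at distance $d-1$ (namely $x$ or nearby), giving $\dist_Z(0,y)\le d$; and no interval of $Z$ extends past $y$ (bullet two gives this for between-intervals; for $\rel(w)$-intervals, state $(w,x,d-1)$ means $x$ is its $(d-1)$-critical point, so no interval there goes past $x<y$... wait, that's too strong — I need: $Z\cap\rel(w)$ in state $(w,x,d-1)$ means no interval of it goes past... no, it means distances, not endpoints). Here I must use the precise definition: being in state $(w,x,d-1)$ says $\dist_{Z\cap\rel(w)}(0,x')>d-1$ for $x'>x$, which does bound reach. Combined, no point $>y$ is reachable in $Z$ at distance $\le d$, so $y$ is $d$-critical.

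\textbf{Main obstacle.} The delicate part is the interaction at the "seam" $x$: between-intervals $\betw(w,x)$ are allowed to start anywhere in $(w,x]$, so they overlap the frontier region and can in principle be used in walks that reach $x$ itself, not just walks that go beyond $x$. Showing they cannot shorten $\dist_Z(0,x)$ below $d-1$ — and conversely that exactly the right one is needed to reach $y$ — requires carefully invoking that $w$ is the $(d-2)$-critical point and that no interval of $Z$ at all exceeds $y$. I expect most of the proof's length to be spent making these two monotonicity/criticality arguments airtight, plus dispatching the $d=2$ base case against the starting conventions for $0$- and $1$-criticality.
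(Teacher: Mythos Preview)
Your approach is essentially the same as the paper's: for the forward direction you take $w$ to be the unique $(d-2)$-critical point of $Z$ and verify the two bullets; for the backward direction you verify directly that $x$ is $(d-1)$-critical and $y$ is $d$-critical with respect to $Z$. The paper's own proof is extremely terse (it says the backward direction is ``easy to see'' and, after fixing $w$ as the $(d-2)$-critical point, that ``one can check'' both properties), whereas you spell out the monotonicity and seam arguments in much more detail; but the underlying strategy is identical, and the subtleties you flag (in particular, that any walk of length $d-1$ in $Z$ must lie entirely in $\rel(w)$ because its penultimate reach is at most $w$, and that the interval attaining endpoint $y$ must lie in $\betw(w,x)$ rather than $\rel(w)$) are exactly the points one needs to make the paper's sketch rigorous.
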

\begin{proof}
Note that $(Z \cap \rel(w)) \dotunion (Z \cap \betw(w, x))$ is a partition of $Z$ into two disjoint parts. For the first direction of the proof, if there exists $w$ such that these two parts have the two described properties, it is easy to see that $Z$ is indeed in state $(x, y, d)$. For the other direction, let $Z$ be in state $(x, y, d)$ and let $w < x$ be the unique point on the real line which is $(d-2)$-critical with respect to $Z$. Because $w$ is $(d-2)$-critical, and $x$ is $(d-1)$-critical, and $y$ is $d$-critical, one can check that both the required properties hold.
\qed\end{proof}

An example of the statement of the lemma can be seen in \cref{fig:shortest_path_ex}. If $Z = \{I_1, I_3, I_4\} \subseteq \rel(x)$, then we can see that $Z$ is in state $(x,y,3)$
and the point $w$ has the properties as described in the lemma.

\begin{theorem}
\label{thm_MVN_shortest_path}
The most vital nodes problem for shortest path on interval graphs can be solved in time $\bigO(n^4)$.
\end{theorem}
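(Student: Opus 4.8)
The plan is to cast \cref{lemma:shortest_path_bellmann_principle} as the Bellman recursion of a dynamic program. For candidate points $x<y$ and an integer $d\geq 1$, let $C[x,y,d]$ denote the minimum size of a set $Y\subseteq[n]$ such that $(\I\setminus Y)\cap\rel(x)$ is in state $(x,y,d)$, with $C[x,y,d]=\infty$ if no such $Y$ exists. Deleting an interval whose startpoint lies strictly right of $x$ leaves $(\I\setminus Y)\cap\rel(x)$ unchanged, so we may assume $Y\subseteq\rel(x)$ and then $|Y|$ is just the number of deleted intervals. Moreover, the only points that can be $d$-critical with respect to any set of intervals are $0$ and the endpoints $b_i$ (if the last interval of a shortest walk reaching $x$ extended past $x$, then a point right of $x$ would also be reachable in the same number of steps); thus it suffices to let $x$, $y$, and the auxiliary point $w$ below range over these $\bigO(n)$ candidates, and $d$ over $\fromto{1}{n}$, since a shortest walk never repeats an interval.

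For the base case $d=1$ one must have $x=0$, and a direct argument gives $C[0,y,1]=|\set{i : 0\in I_i,\ b_i>y}|$ if some interval has $0\in I_i$ and $b_i=y$, and $C[0,y,1]=\infty$ otherwise. For $d\geq 2$ we use \cref{lemma:shortest_path_bellmann_principle}: as $\rel(w)$ and $\betw(w,x)$ partition $\rel(x)$ and the two bullet points of the lemma constrain these disjoint parts independently, the corresponding minimum budgets simply add, yielding
\[ C[x,y,d]=\min_{w<x}\bigl(C[w,x,d-1]+\delta(w,x,y)\bigr), \]
where $\delta(w,x,y)$ is the number of intervals $[a_i,b_i]$ with $w<a_i\leq x$ and $b_i>y$, provided some interval satisfies $w<a_i\leq x$ and $b_i=y$, and $\delta(w,x,y)=\infty$ otherwise: $\delta$ counts exactly the intervals of $\betw(w,x)$ that the interdictor is forced to delete for the second bullet, and the side condition encodes the requirement that at least one interval of $\betw(w,x)$ ends at $y$. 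The inequality \enquote{$\leq$} follows by taking optimal deletion sets for $C[w,x,d-1]$ and for $\delta(w,x,y)$, forming their (disjoint) union, and invoking the \enquote{if} direction of the lemma; \enquote{$\geq$} follows from an optimal witness $Y$ by choosing $w$ to be the $(d-2)$-critical point of $(\I\setminus Y)\cap\rel(x)$ and invoking the \enquote{only if} direction.

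It remains to read the optimum off the table $C$. The bridging fact is that whenever $Z\subseteq\I$ and $x$ is the $(d-1)$-critical point of $Z$, any interval of $Z$ starting strictly right of $x$ can only appear at position $d+1$ or later in a walk from $0$ (it has to be entered at a point right of $x$, which is at distance at least $d$). From this one deduces that $\dist_Z(0,t)=d$ holds if and only if $Z\cap\rel(x)$ is in state $(x,y,d)$ for the $(d-1)$- and $d$-critical points $x,y$ of $Z$, in which case automatically $x<t\leq y$; and conversely, any $Z$ with $Z\cap\rel(x)$ in state $(x,y,d)$ and $x<t\leq y$ satisfies $\dist_Z(0,t)=d$. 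Hence, after a preliminary polynomial-time check of whether some set of at most $k$ intervals destroys all walks from $0$ to $t$ (a minimum vertex cut, e.g.\ via max-flow), in which case the optimum is $\infty$, the optimum equals $\max\set{d : \exists \text{ candidates } x<t\leq y \text{ with } C[x,y,d]\leq k}$; this set is nonempty, since otherwise the deletion-free instance would already be disconnected and the previous check would have triggered. For the running time, $C$ has $\bigO(n^3)$ cells, all values $\delta(w,x,y)$ can be precomputed in $\bigO(n^3)$ total time by sweeping over $w$ for each fixed pair $(x,y)$, and each cell of $C$ is then obtained by a minimization over the $\bigO(n)$ choices of $w$; together with the vertex-cut computation this gives the stated $\bigO(n^4)$ bound. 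I expect the main difficulty to lie in the correctness bookkeeping of this last step: carefully matching an arbitrary deletion set $Y$ — which may delete intervals anywhere and always leaves intervals starting past the current critical point in place — to a chain of states of the restricted form required by \cref{lemma:shortest_path_bellmann_principle}, while also handling the degenerate situations ($d=1$, the point $0$, empty intervals) and the separate infinite case cleanly.
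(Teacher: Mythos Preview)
Your proposal is correct and follows essentially the same approach as the paper: a dynamic program over triples $(x,y,d)$ driven by \cref{lemma:shortest_path_bellmann_principle}, with an $\bigO(n)$ transition over the auxiliary point $w$ and the same $\bigO(n^4)$ accounting. The only cosmetic difference is that you minimize the number of deleted intervals $C[x,y,d]$ whereas the paper maximizes the number of kept intervals $f(x,y,d)=|\rel(x)|-C[x,y,d]$, and correspondingly your read-off tests $C[x,y,d]\leq k$ over $x<t\leq y$ while the paper preprocesses to force $y=t$ and tests $f(x,t,d)+|\betw(x,t)|\geq n-k$; these are dual formulations of the same computation.
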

\begin{proof}
Let $(\I, k, s, t)$ be an instance of the most vital nodes problem with budget $k$, such that $\I = \fromto{[a_1, b_1]}{[a_n, b_n]}$ and $s = 0$ and $t \in \R_{> 0}$. Let $\mathcal{P} := \set{a_i \mid i \in \fromto{1}{n}} \cup \set{b_i \mid i \in  \fromto{1}{n}} 
$ be the set of their start- and endpoints.
We can assume $\min\mathcal{P} = 0$ and $\max\mathcal{P} = t$.
We can also assume that the interdictor cannot disconnect 0 and $t$, as this can be checked easily.

 For $x, y \in \mathcal{P}$, $x < y$, and $d \in \fromto{1}{n}$, we define
\[f(x,y,d) := \max\set{|Z| : Z \subseteq \rel(x), Z \text{ is in state } (x,y,d)}. 
\]

\textbf{Claim.} \emph{If a table of all values $f(x,y,d)$ for $x,y \in \mathcal{P}, x < y$ and $d \in \fromto{1}{n}$ is given, then the solution to the most vital nodes problem can be computed in time $\bigO(n^2)$.}

\emph{Proof of the claim.} Let $X \subseteq \I$ be an optimal choice of intervals, which the interdictor deletes in the most vital nodes problem, and let $d' = \dist_{\I \setminus X}(0, t)$ be the resulting distance. We further assume $X$ has the minimum size among all such optimal choices. Let $Z := \I \setminus X$ be the remaining intervals. Then with respect to $Z$, there exists a unique $(d' - 1)$-critical point $x \in \mathcal{P}$. In other words, we have that the set $Z \cap \rel(x)$ is in state $(x, t, d')$. Furthermore, because $x$ is already $(d' - 1)$-critical, the interdictor does not need to delete any intervals in $\betw(x, t)$. Combined with the assumption that $X$ has the smallest cardinality among the optimal choices, this implies $\betw(x, t) \subseteq Z$. In total, we have the three properties 
\begin{itemize}
\item[(i)] $Z = (Z \cap \rel(x)) \dotunion \betw(x, t)$,
\item[(ii)] $(Z \cap \rel(x))$ is in state $(x, t, d')$, and
\item[(iii)] $|Z| \geq n - k$.
\end{itemize}
 On the other hand, if one has a point $x \in \mathcal{P}$ and a set $Z \subseteq \I$ and some number $d'$, such that properties (i) - (iii) hold, then it is not hard to see that the deletion of $X = \I \setminus Z$ yields a solution to the most vital nodes problem of value $d'$. We therefore conclude that the optimal solution to the most vital nodes problem is given by
\[
d_{\text{opt}} =  \max\set{d' \in \N : \exists x \in \mathcal{P} \text{ s.t.\ } f(x,t,d') + |\betw(x, t)| \geq n - k}.
\]
With this formula, $d_{\text{opt}}$ can be computed in $\bigO(n^2)$ time. 
Note that the values $|\betw(x,t)|$ can be precomputed in $\bigO(n^2)$ time.
This completes the proof of the claim.
\claimqed


We now show how to compute $f$ using dynamic programming. By \cref{lemma:shortest_path_bellmann_principle}, in order to compute $f(x,y,d)$, one can guess $w \in \mathcal{P}, w < x$ and solve independently the maximization problems for $Z \cap \rel(w)$ and $Z \cap \betw(w,x)$. Formally, for $w,x,y \in \mathcal{P}$, where $w < x < y$, we define two helper functions: First, $\valid(w,x,y) := \True$ if and only if there exists $[a_i, b_i] \in \betw(w,x)$ with $b_i = y$. Second, $\beta(w,x,y) := |\set{[a_i, b_i] \in \betw(w,x) : b_i \leq y}|$. The values of all helper functions can be precomputed in time $\bigO(n^3)$, using $\bigO(n^2)$ iterations of a sweep line algorithm in $\bigO(n)$ time.
We then have the following recursion formula for $d \geq 2$:
\[f(x,y,d) = \max\set{f(w,x,d-1) + \beta(w,x,y) : w \in \mathcal{P}, w < x, \valid(w,x,y) = \True}.
\]
The formula is correct, because for a set $Z$ in state $(x, y, d)$, and a point $w$ as described above, $f(w, x, d-1)$ and $\beta(w, x, y)$ are respectively the maximum size of the two disjoint sets $Z \cap \rel(w)$ and $Z \cap \betw(w, x)$.
Hence the correctness follows from \cref{lemma:shortest_path_bellmann_principle}.
For $d = 1$, the values $f(0, y, 1)$ can be easily computed. (Note that $f(x,y,1) = -\infty$ for $x \neq 0$.) In total, for each of the $\bigO(n^3)$ many choices of $(x, y, d)$, we need $\bigO(n)$ computation time, so the total required time is $\bigO(n^4)$. This completes the proof.
\qed\end{proof}

\cref{lemma:shortest_path_reduction,thm_MVN_shortest_path} together prove that the interdiction problem for shortest path in the shrink-expand framework can be solved in polynomial time $\bigO((nk)^4)$.  Finally, we note that when solving the interdiction problem in the shrink-expand framework, the additional factor introduced in \cref{lemma:shortest_path_reduction} can be avoided, yielding an improved running time of $\bigO(n^4)$.

\paragraph*{Proof of \cref{thm:shortest_path_interdiction}.}
Just like in \cref{lemma:shortest_path_reduction}, let $(\I, \I', k, s,t)$ be an instance of the interdiction problem, and let $M$ be the multi-set described in the reduction from the lemma. Then $|M| = n(k+2)$. We know that if we solve the most vital node problem for $M$, i.e.\ if we apply the dynamic program from \cref{thm_MVN_shortest_path}, then we solve the interdiction problem. In particular, if we consider $M$ as an input to the dynamic program from \cref{thm_MVN_shortest_path}, we have 
\begin{align*}
\rel(x) &= \set{[a_i, b_i] \in M : a_i \leq x}\\
\betw(x, y) &= \set{[a_i, b_i] \in M : x < a_i \leq y}\\
f(x,y,d) &= \max\set{|Z| : Z \subseteq \rel(x), Z \text{ is in state } (x,y,d)}\\
d_{\text{opt}} &=  \max\set{d \in \N : \exists x \in \mathcal{P} \text{ s.t.\ } f(x,t,d) + |\betw(x, t)| \geq |M| - k}\\
\beta(w,x,y) &= |\set{[a_i, b_i] \in \betw(w,x) : b_i \leq y}|.
\end{align*}
Now, we observe that the crucial components required to recursively compute the values $f(x,y,d)$ can be also expressed in terms of $\I, \I'$ instead of in terms of $M$. Using this reformulation, they can be computed faster. In fact, the three terms
\begin{align*}
|\betw(x, y)| &= |\set{[a_i, b_i] \in \I : x < a_i \leq y}| + (k+1)|\set{[a'_i, b'_i] \in \I' : x < a'_i \leq y}|\\
d_{\text{opt}} &=  \max\set{d \in \N : \exists x \in \mathcal{P} \text{ s.t.\ } f(x,t,d) + |\betw(x, t)| \geq n(k+2) - k}\\
\beta(w,x,y) &= |\set{[a_i, b_i] \in \betw(w,x)\cap \I : b_i \leq y}| + \\
&\qquad (k+1)|\set{[a'_i, b'_i] \in \betw(w,x)\cap \I' : b'_i \leq y}|
\end{align*}
are sufficient. The set of all values for these terms can be precomputed in $\bigO(n^3)$ time each, using sweep-line algorithms. The rest of the proof is identical to \cref{thm_MVN_shortest_path}.
\qed

Note that the most vital nodes problem can also be reduced to the interdiction  problem  for  shortest  path in the shrink-expand framework.
Observe that by setting $a'=b'$ for any interval the reduction operation 
on this interval implies that the corresponding vertex becomes isolated. 
In the case of shortest path interdiction isolating a vertex is equivalent to vertex deletion.

\section{Independence Number}

For interval graphs, it is well known that the independence number $\alpha(G)$ is 
equal to the so-called clique cover number $\kappa(G)$ which is defined as follows.
Given a graph $G=(V,E)$ a partition $V_1, \dots, V_\ell$ of the vertex set $V$ is called a 
\emph{clique cover} of $G$ if the sets $V_i$ are pairwise disjoint and the graphs $G[V_i]$ induced by $V_i$
form cliques for each $i=1,\dots,\ell$. We call $\ell$ the size of the clique cover 
and denote by $\kappa(G)$ the minimum size of a clique cover of $G$, the clique cover number of $G$.
On general graphs, computing $\kappa(G)$ is NP-hard.
If $G$ is an interval graph, it is a well-known result that $\kappa(G) = \alpha(G)$ and both 
can be computed in polynomial time~\cite{golumbic2004algorithmic}.

In the following section we will heavily use the following observation.

\begin{proposition}\label{prop:cliquecover}
    Let $\I = (I_1, \dots, I_n)$ be a list of intervals, $G = G(\I)$ an interval graph,
     $C(x) = \{i \in V : x \in I_i\}$ and $(V_1, \dots, V_{r})$ a clique cover of $G$.
    Then for every $i \in [r]$ there exists a point $x_{i} \in \mathbb{R}$
    such that $V_{i} \subseteq C(x_i)$.
     Additionally, we can assume that $x_i$ is an endpoint of an interval in $\I$.
\end{proposition}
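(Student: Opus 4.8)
The plan is to prove \cref{prop:cliquecover} by exploiting the Helly property of intervals on the real line. Recall that each $V_i$ induces a clique in the interval graph $G$, which means that the intervals $\{I_j : j \in V_i\}$ pairwise intersect. The Helly property for intervals states that a family of pairwise intersecting intervals has a common point, i.e.\ $\bigcap_{j \in V_i} I_j \neq \emptyset$. Picking any point $x_i$ in this common intersection immediately gives $V_i \subseteq C(x_i)$, since $x_i \in I_j$ for every $j \in V_i$.

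First I would state and justify the Helly property for intervals: given pairwise intersecting closed intervals $[c_1,d_1], \dots, [c_m,d_m]$, let $c = \max_j c_j$ and $d = \min_j d_j$. For any $j, j'$, since $I_j \cap I_{j'} \neq \emptyset$ we have $c_j \leq d_{j'}$; taking the maximum over $j$ and minimum over $j'$ yields $c \leq d$, so $[c,d] = \bigcap_j I_j$ is nonempty. (One caveat: the paper uses the nonstandard convention $[a,a] = \emptyset$, so strictly speaking I should note that a vertex $j$ with $I_j = \emptyset$ is isolated and cannot belong to a clique $V_i$ with $|V_i| \geq 2$; if $|V_i| = 1$ the claim is trivial by taking any endpoint of the single interval, or handled separately.) Applying this to $\{I_j : j \in V_i\}$ gives the desired common point.

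Next I would upgrade the point $x_i$ to an endpoint of an interval in $\I$. With $c$ and $d$ as above, $c = \max_{j \in V_i} c_j$ is the startpoint of some interval in $\I$, and we have just shown $c \leq d$, so $c \in [c_j, d_j]$ for all $j \in V_i$, meaning $c$ itself lies in every interval of the clique. Hence we may simply take $x_i := c$, which is by construction a startpoint (an endpoint in the loose sense) of an interval in $\I$, and still $V_i \subseteq C(x_i)$.

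I do not expect any serious obstacle here; the statement is essentially a restatement of the Helly property for intervals. The only subtlety to be careful about is the paper's convention $[a,a] = \emptyset$ for degenerate intervals, which must be addressed so that the argument "pairwise intersecting implies common point" is not vacuously broken by empty intervals — but as noted, empty intervals correspond to isolated vertices and only ever appear as singleton blocks of the clique cover, where the claim holds trivially.
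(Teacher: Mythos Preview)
Your approach via the Helly property is correct and is the standard argument; the paper states this proposition as a folklore observation without proof, so there is nothing to compare against.

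One small correction: in this paper's terminology, \emph{endpoint} means specifically the right endpoint $b$ of an interval $[a,b]$, not a boundary point in the loose sense (see the Preliminaries section, and note how the proposition is applied in \cref{sec:independence-interdiction}, where the clique cover is indexed by the points $d_1,\dots,d_{2n}$, which are right endpoints). Your choice $x_i := c = \max_{j\in V_i} c_j$ is a \emph{startpoint}, not an endpoint. The fix is immediate: take instead $x_i := d = \min_{j\in V_i} d_j$. Since $c_j \leq c \leq d \leq d_j$ for all $j\in V_i$, the point $d$ also lies in every interval of the clique, and $d$ is by construction the right endpoint of some interval in $\I$.
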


\subsection{Interdiction Problem}
\label{sec:independence-interdiction}
In this section, we obtain a polynomial time algorithm based on dynamic programming for the independence number interdiction problem,  $\min_{|X| \leq k} \alpha(G_X)$,
on interval graphs. Note that for this problem, we consider the case of expanding intervals.

Since for interval graphs $\alpha(G) = \kappa(G)$, it holds that 
\[ \min_{X \subseteq [n] \colon |X| \leq k } \alpha(G_X) = \min_{X \subseteq [n] \colon |X| \leq k} \kappa(G_X). \]
Therefore, we can focus on solving the clique cover number assistance problem.
Let $\F := \I \cup \I' = \fromto{F_1}{F_{2n}}$ and $F_j = [c_j, d_j]$. Without loss of generality, we assume that the elements of $\F$ are ordered by endpoints, i.e.\ we have $d_1 < d_2 < \dots < d_{2n}$. 
Due to \cref{prop:cliquecover}, we can assume that in an optimal solution $X \subseteq [n]$, the clique cover $(V_1, \dots, V_r)$ of $G_X$ is in one-to-one correspondance to a  subset of cardinality $r$ of the
positions $d_1, \dots, d_{2n}$.

We define $c(i,j)$ 
as the minimum number of intervals that we need to expand such that there lie no 
intervals strictly between $d_i$ and $d_j$ for all $0 \leq i < j \leq 2n+1$, where $d_0=-\infty$ and $d_{2n+1} = \infty$.
Note that if there exists an $\ell$ for which the expanded interval $I'_\ell$ lies strictly between $d_i$ and $d_j$, 
we set $c(i,j) = \infty$. Formally,
\[ c(i,j) = \begin{cases} \infty & \text{if } \exists t \in [n] \colon d_i < a'_t \leq b'_t < d_j\\ |\{ t \in [n] \colon d_i < a_t \leq b_t < d_j \}| & \text{else}. \end{cases} \]
The main idea for defining $c(i,j)$ is that if in a clique cover of $G_X$ no 
endpoint $d_{\ell}$ for $i<\ell<j$ is selected, then $X$ contains at least $c(i,j)$ original intervals that lie strictly in between $d_i$ and $d_j$.

Define $\F_j := \{F_1, \dots, F_j\}$.
Let $F(j,k')$ be the minimum size of a clique cover of $G(\I_X \cap \F_j)$ subject to 
$|\I[X] \cap \F_j| \leq k'$ and $d_j$ is part of the clique cover. Note that if such a clique cover does not exist, 
we define $F(j,k')=\infty$. 
It is easy to see that
    \[ F(j,k') = \min \{ 1 + F(i, k'-c(i,j)) \colon 0 \leq i < j\text{ and } c(i,j) \leq k' \}. \]
Our dynamic programming algorithm is based on this equality.

Based on this result we can precompute the values $c(i,j)$ in $O(n^2)$ time and 
then run a dynamic program to compute all values $F(j,k')$ for $j=0,\dots,n+1$ and 
$k'=0,\dots,k$. Then it holds that $F(2n+1,k) = \min_{X \subseteq [n] \colon |X| \leq k} \kappa(G_X) + 1$
and we obtain the following result.

\begin{theorem}\label{thm:ind-int}
    The independence number interdiction and the clique cover number assistance problem on interval graphs can be solved in $O(k n^2)$ time.
\end{theorem}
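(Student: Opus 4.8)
The plan is to turn the dynamic program sketched above into a complete correctness-and-runtime proof. Since $\alpha(G) = \kappa(G)$ on interval graphs, the independence number interdiction problem $\min_{|X|\le k}\alpha(G_X)$ coincides with the clique cover number assistance problem $\min_{|X|\le k}\kappa(G_X)$, so it suffices to compute the latter. By \cref{prop:cliquecover} we may restrict attention throughout to clique covers of $G_X$ all of whose cliques are ``anchored'' at endpoints $d_j$ of intervals of $\F = \I \cup \I'$. First I would fix the boundary conventions: append the sentinels $d_0 = -\infty$ and $d_{2n+1} = \infty$ to the sorted endpoint list, set $F(0,k') = 0$ for all $k' \ge 0$ and $F(j,k') = \infty$ for $k' < 0$, and verify that the quantity to output is $F(2n+1,k) - 1$; the anchor at $+\infty$ never covers any real interval, so it contributes exactly one superfluous clique, and conversely appending it to any anchored cover is free, which is precisely what removes the artificial constraint ``$d_j$ is part of the cover'' from the final answer.

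The core of the argument is the correctness of the recursion $F(j,k') = \min\{\, 1 + F(i, k' - c(i,j)) : 0 \le i < j,\ c(i,j) \le k' \,\}$, which I would prove by induction on $j$ in both directions. For ``$\le$'', pick the minimizing $i$, take an optimal solution witnessing $F(i, k' - c(i,j))$ together with the set $S$ of all $c(i,j)$ original intervals lying strictly between $d_i$ and $d_j$, and expand exactly $S$ in addition; the anchored cliques of the $\F_i$-part, plus the single new clique anchored at $d_j$, then cover every interval in $\I_X \cap \F_j$, because such an interval either ends at or before $d_i$ (covered recursively), is $F_j$ itself (covered at $d_j$), contains $d_i$ (covered at $d_i$), or lies strictly between $d_i$ and $d_j$ and was therefore expanded out of the sub-instance. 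For ``$\ge$'', start from an optimal anchored cover for $F(j,k')$ whose rightmost anchor is $d_j$, let $d_i$ be the second-rightmost anchor, and prove the key claim: every original interval lying strictly between $d_i$ and $d_j$ must belong to $X$. Indeed, were such an interval not expanded it would still lie in $\I_X \cap \F_j$, yet every anchor of the cover is either $\le d_i$ or equal to $d_j$, and neither a point $\le d_i$ nor the point $d_j$ can lie in an interval squeezed strictly between $d_i$ and $d_j$, so that interval would be uncovered, a contradiction; the same reasoning applied to a replacement interval squeezed strictly between $d_i$ and $d_j$ shows $c(i,j) < \infty$ here, matching the $\infty$-case of the definition. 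Restricting the cover and $X$ to $\F_i$ then yields a feasible witness for $F(i, k' - c(i,j))$, completing the induction.

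The point requiring the most care, and the step I expect to be the main obstacle, is the bookkeeping of the budget across anchor-gaps: one must check that the $c(i,j)$ expansions forced inside each gap $(d_i,d_j)$ between consecutive anchors are charged to exactly one gap (which holds because a single interval cannot sit strictly inside two disjoint gaps, an interval with startpoint above its endpoint being impossible), so that these forced expansions are pairwise disjoint and $|X|$ decomposes as the sum of the per-gap quantities $c(i,j)$; this is exactly what makes $c(i,j)$ simultaneously a lower bound on, and an achievable value for, the budget spent in each gap, and hence makes the recursion tight. Once this is in place, the running time is routine: all values $c(i,j)$ for $0 \le i < j \le 2n+1$ are precomputed in $\bigO(n^2)$ time by, for each fixed $i$, sweeping $j$ from $i+1$ upward, since the family of intervals squeezed strictly inside $(d_i,d_j)$ grows by at most one candidate per step, so its size, together with a flag that latches $c(i,j)=\infty$ once a replacement interval enters the region, is maintained in amortized $\bigO(1)$ per step. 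The table $F$ has $\bigO(n)\cdot\bigO(k)$ entries, each evaluated in $\bigO(n)$ time by minimizing over $i$ in the recursion above, and the base cases and the final read-off $F(2n+1,k)$ are dominated by this; altogether we obtain the stated $\bigO(kn^2)$ bound.
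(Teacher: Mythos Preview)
Your proposal is correct and follows essentially the same approach as the paper: the same duality $\alpha=\kappa$, the same endpoint-anchored cliques via \cref{prop:cliquecover}, the same cost function $c(i,j)$, the same recursion for $F(j,k')$ with the same sentinels, and the same $\bigO(kn^2)$ bookkeeping. The paper in fact only states the recursion and asserts ``it is easy to see that'' it holds, whereas you supply the two-sided inductive verification and the disjointness-of-gaps budget argument in full; nothing you do diverges from the paper's intended proof.
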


Note that for $a'_i = -M$ and $b'_i = M$ for all $i=1,\dots,n$ and some large constant $M$, our 
problem is equivalent to the most vital nodes problem for independent set which was studied 
under the name independence number blocker problem  
by Diner et al.~\cite{diner2018contractionDeletionBlockers}. They left the complexity 
of this problem as an open question. By Theorem~\ref{thm:ind-int}, there exists 
a polynomial time algorithm for this problem.
\begin{corollary}
    The most vital nodes problem for independence number can be solved in $O(k n^2)$ time.
\end{corollary}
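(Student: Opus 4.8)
The plan is to derive the corollary directly from \cref{thm:ind-int} by showing that the most vital nodes problem for the independence number is a special case of the independence number interdiction problem in the shrink-expand framework. Concretely, given an instance of the most vital nodes problem — an interval graph specified by a list of intervals $I_1, \dots, I_n$ together with a budget $k$ — I would construct an instance $(\I, \I', n, k)$ of the shrink-expand interdiction problem for $\alpha$ by keeping the originals $\I = (I_1, \dots, I_n)$ and defining each replacement interval $I'_i := [-M, M]$ for a sufficiently large constant $M$ (say, larger than all endpoints appearing in the input, in absolute value). Since $[-M, M] \supseteq I_i$ for every $i$, this is a valid instance of the expanding case, which is exactly the case used for $\alpha$ in \cref{tab:shrink-expand-overview}.

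Next I would argue that the two problems have the same optimum. The key observation is that expanding interval $I_i$ to $[-M, M]$ makes vertex $i$ adjacent to every other vertex, so in $G_X$ the vertices in $X$ form a set of "universal" vertices. A universal vertex can be added to any maximum independent set of the rest of the graph only if that independent set has size $1$; more precisely, a vertex that is adjacent to all others contributes nothing new to $\alpha$ beyond possibly the trivial case. Formally, for any $X \subseteq [n]$ with the $I'_i = [-M,M]$ construction, one has $\alpha(G_X) = \max\{\alpha(G - X), 1\}$, where $G = G(\I)$ is the original interval graph and $G - X$ denotes vertex deletion: an independent set of $G_X$ can contain at most one vertex from $X$ (since $X$ induces a clique in $G_X$ once $|X|\ge 1$), and if it contains a vertex of $X$ it can contain no other vertex at all, so it has size $1$. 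Hence minimizing $\alpha(G_X)$ over $|X| \le k$ coincides with minimizing $\alpha(G-X)$ over $|X| \le k$ (ignoring the irrelevant boundary value $1$, which only matters in the degenerate situation where the deletion already brings $\alpha$ down to $1$ or the graph is tiny, and there both quantities agree anyway). The right-hand side is precisely the most vital nodes problem for $\alpha$.

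Finally, I would invoke \cref{thm:ind-int}: the constructed instance is solved in $O(kn^2)$ time, and the reduction itself is trivial (it only introduces $n$ copies of a single fixed interval and does not change $n$ or $k$), so the most vital nodes problem for the independence number is solved in $O(kn^2)$ time as claimed. I expect the only genuinely delicate point to be the careful statement and verification of the identity $\alpha(G_X) = \max\{\alpha(G-X), 1\}$, in particular handling the boundary case where $X = [n]$ or where $G-X$ has isolated structure — but this is a routine case check rather than a real obstacle, and the paragraph preceding the corollary in the excerpt already indicates exactly this reduction, so the argument is essentially forced.
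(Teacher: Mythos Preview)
Your proposal is correct and takes essentially the same approach as the paper: the paper's proof of this corollary is precisely the one-line reduction you describe (set $a'_i=-M$, $b'_i=M$ for a large $M$ and invoke \cref{thm:ind-int}), with the equivalence of the two problems taken as evident. Your explicit verification of $\alpha(G_X)=\max\{\alpha(G-X),1\}$ is a harmless elaboration; just note that your parenthetical ``both quantities agree anyway'' is not literally true in the degenerate case $X=[n]$ (one gets $0$, the other $1$), but this is a trivial separate check and does not affect the argument.
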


\subsection{Assistance Problem}

The independence number assistance problem, $\max_{|X| \leq k} \alpha(G_X)$,
is equivalent to finding an independent set $\J \subseteq \I \cup \I'$ such that 
$ |\J \cap \I'| \leq k$ of maximum size. Note that for this problem we consider the 
case of shrinking' intervals. We can solve this problem in 
polynomial time by giving a reduction to the longest path problem in a directed acyclic graph (DAG) $D$
(see \cref{figure:indendent_set_assistance} for an illustration).

\begin{figure}[thpb]
\centering
\includegraphics[width=\linewidth]{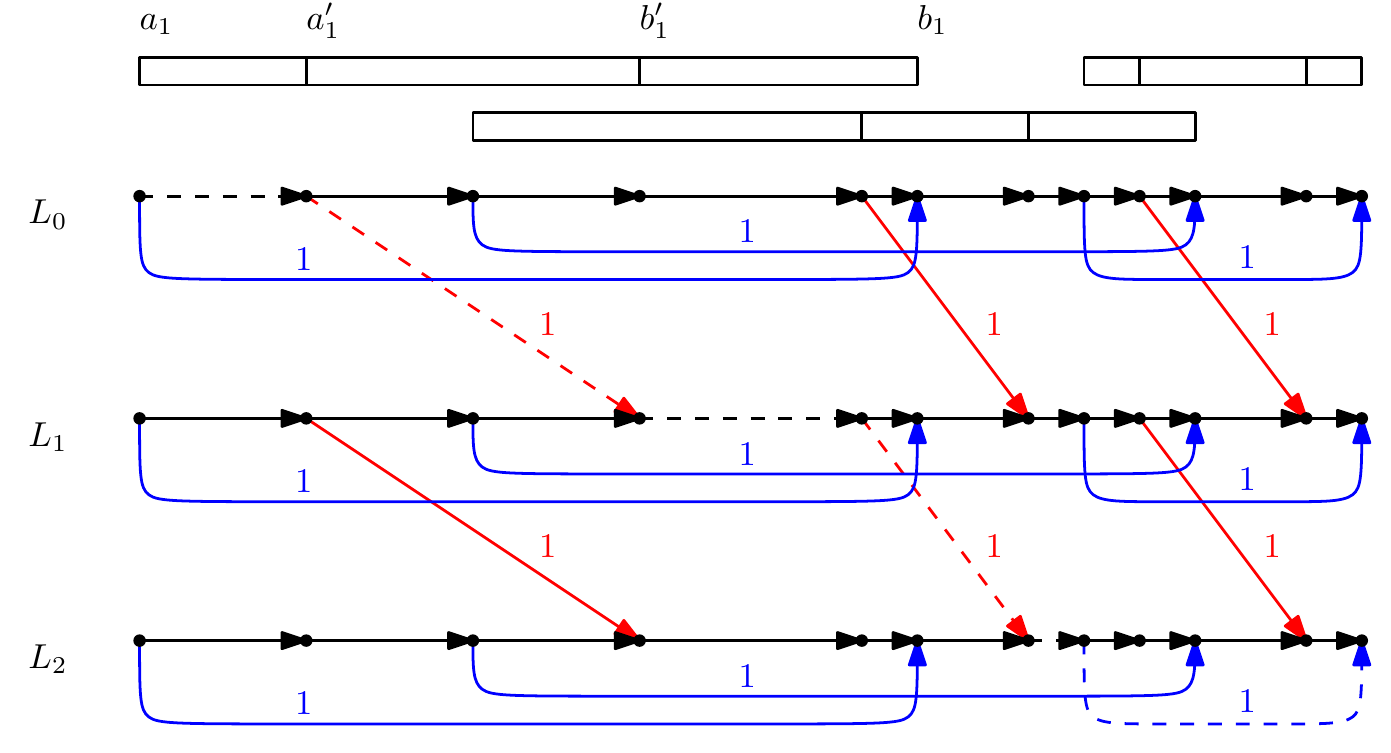}
\caption{Illustration of the digraph used to solve the assistance problem for independent set. We have $n=3$ and $k=2$. 
The blue arcs correspond to the decision to add an interval from $\I$ to the solution and 
the red arcs correspond to the decision to add an interval from $\I'$ to the solution. To increase 
readability, all arcs of the form $(v_i^{\ell}, v_i^{\ell+1})$ are not drawn in the illustration.
The doted edges correspond to an optimal solution for the given instance, where 
intervals $1$ and $2$ are shrunken and interval $3$ is selected without shrinking.}
\label{figure:indendent_set_assistance}
\end{figure}

The vertex set of $D$ consists of $k+1$ layers $L_0, \dots, L_k$. For each start- and endpoint $x \in \mathbb{R}$ of the intervals in $\I$ and $\I'$, we add $k+1$ vertices to $D$,
one in each layer $L_0$ up to $L_k$. For each such vertex $v$, we denote by $p(v) = x$ its corresponding point.
Let $v_1^{\ell}, \dots, v_{2n}^{\ell}$ be the vertices of layer $\ell$ ordered increasingly with respect to 
their corresponding point $p(v_i^{\ell})$ for all $i=1,\dots,2n$ and $\ell=0,\dots,k$.
We add arcs $(v_i^{\ell}, v_i^{\ell+1})$ of length zero to $D$ for all $\ell=0,\dots,k-1$.
We also add arcs $(v_i^{\ell}, v_{i+1}^{\ell})$ of length zero to $D$ for all $i=1,\dots,2n-1$.
For each $i=1,\dots,n$ if $p(v_{i'}^{\ell}) = a_i$ and $p(v_{j'}^{\ell})=b_i$ then we add arcs 
$(v_{i'}^{\ell}, v_{j'}^{\ell})$ of length $1$ for all $\ell=0,\dots, k$ to $D$ (blue arcs in \cref{figure:indendent_set_assistance}). Note that traversing these arcs corresponds 
to adding $I_i$ to the solution. Also, for each $i=1,\dots,n$, if $p(v_{i'}^{\ell}) = a'_i$ and $p(v_{j'}^{\ell})=b'_i$ then we add arcs 
$(v_{i'}^{\ell}, v_{j'}^{\ell+1})$ of length $1$ for all $\ell=0,\dots,k-1$ to $D$ (red arcs in \cref{figure:indendent_set_assistance}). Note that traversing these arcs corresponds 
to adding $I'_i$ to the solution. In this DAG, we are now looking for a longest path from 
$s=v_1^1$ to $t=v_{2n}^k$. Given an $s$-$t$-path $P$ in $D$ we denote by $\I(P)$ 
the set of intervals corresponding to length-1 arcs selected.
Note that for each interval $F \in \I \cup \I'$ 
only one of the length-1 arcs corresponding to adding $F$ to the solution can be 
contained in an $s$-$t$-path, hence $\I(P)$ is well defined.

\begin{lemma}
    There exists an independent set $\J \subseteq \I \cup \I'$ such that 
    $ |\J \cap \I'| \leq k$ if and only if there exists an $s$-$t$-path $P$ 
    in $D$ such that $\I(P) = \J$ of length $|\J|$.
\end{lemma}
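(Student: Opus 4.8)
The plan is to establish the biconditional by constructing an explicit bijection-like correspondence between independent sets $\J \subseteq \I \cup \I'$ with $|\J \cap \I'| \leq k$ and $s$-$t$-paths $P$ in $D$, matching $\J = \I(P)$ and $|\J| = \ell(P)$ (the length of $P$). I would prove each direction separately.

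For the forward direction, suppose $\J \subseteq \I \cup \I'$ is such an independent set. First I would sort the intervals of $\J$ by left endpoint; since $\J$ is an independent set in the interval graph, the intervals of $\J$ are pairwise disjoint, so sorting by left endpoint also sorts them by right endpoint, and consecutive intervals $F, F'$ in this order satisfy (right endpoint of $F$) $<$ (left endpoint of $F'$). I then walk through $D$: starting at $s = v_1^1$, for each interval $F \in \J$ in sorted order I traverse the horizontal length-$0$ arcs $(v_i^\ell, v_{i+1}^\ell)$ within the current layer to reach the vertex whose point is the left endpoint of $F$, then take the length-$1$ arc corresponding to $F$ — a blue arc staying in the same layer if $F \in \I$, or a red arc advancing one layer if $F \in \I'$. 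Because $|\J \cap \I'| \leq k$ and we start in layer $L_1$ — wait, the source is $v_1^1$, but here I need to be careful: the layers used for $\I'$-arcs are $L_0, \dots, L_{k-1}$ advancing to $L_1, \dots, L_k$, so at most $k$ red arcs can be taken before exceeding layer $L_k$; since $|\J \cap \I'| \leq k$ this is fine, and the horizontal inter-layer arcs $(v_i^\ell, v_i^{\ell+1})$ let me pad out to layer $L_k$ at the end, and horizontal intra-layer arcs let me reach $t = v_{2n}^k$. The disjointness of consecutive intervals guarantees that after finishing an arc for $F$ I am at a vertex whose point is the right endpoint of $F$, which lies strictly left of the left endpoint of the next interval, so the horizontal arcs I need are all present and oriented correctly. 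This yields an $s$-$t$-path $P$ with $\I(P) = \J$ and length exactly $|\J|$.

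For the reverse direction, suppose $P$ is an $s$-$t$-path in $D$. The only arcs of positive length are the blue and red length-$1$ arcs, so $\ell(P) = |\I(P)|$ provided the arcs of $P$ correspond to distinct intervals — and as noted in the excerpt, at most one length-$1$ arc per interval $F$ can appear in a single $s$-$t$-path (the two copies of such an arc, if duplicated across layers, both span the same point-pair, and once you traverse one you are past that point), so $\I(P)$ is well-defined and $\ell(P) = |\I(P)|$. It remains to show $\J := \I(P)$ is an independent set with $|\J \cap \I'| \leq k$. The layer index is non-decreasing along any path in $D$ and increases by exactly one each time a red arc is used; since $P$ starts in $L_1$ and ends in $L_k$, at most $k$ red arcs are used, so $|\J \cap \I'| \leq k$. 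For independence, I would use that the $p$-coordinate is non-decreasing along every arc of $D$ (horizontal intra-layer arcs go from $v_i$ to $v_{i+1}$, i.e. increase the point; length-$1$ arcs go from a left endpoint to the corresponding right endpoint, hence increase; inter-layer arcs keep the point fixed), and strictly increasing along any length-$1$ arc. Hence if $F, F'$ are two distinct intervals traversed by $P$, say $F$ before $F'$, then the right endpoint of $F$ occurs no later in $P$ than the left endpoint of $F'$, so (right endpoint of $F$) $\leq$ (left endpoint of $F'$); after the folklore assumption that no two intervals share an endpoint, this is a strict inequality, so $F \cap F' = \emptyset$ and $\J$ is independent.

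The main obstacle I anticipate is the bookkeeping around the layer structure and the horizontal padding arcs: one must check carefully that the source/sink choice ($v_1^1$ and $v_{2n}^k$) together with the availability of the inter-layer arcs $(v_i^\ell, v_i^{\ell+1})$ makes both constructions go through regardless of how many $\I'$-intervals are actually used (strictly fewer than $k$, or exactly $k$), and that in the forward direction we never need to "go backwards" in the point coordinate — this is exactly what the independence (pairwise-disjointness) of $\J$ buys us. A minor subtlety is to handle the degenerate empty intervals $[a,a] = \emptyset$ introduced by the shrinking operation, which are independent of everything; these contribute a length-$1$ arc from a point to itself, which is consistent with the construction but should be mentioned. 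Everything else is routine verification that the arc set of $D$ has exactly the arcs invoked above.
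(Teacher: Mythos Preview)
Your approach is correct and matches the paper's: the reverse direction uses that the point-coordinate is non-decreasing along every arc of $D$ (so the length-$1$ arcs traversed by $P$ correspond to pairwise disjoint intervals, hence $\I(P)$ is independent) together with the layer count to bound $|\I(P)\cap\I'|$, while the forward direction---which the paper dismisses as ``easy to see''---is exactly your sorted-walk construction through the horizontal and length-$1$ arcs.

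The one wrinkle you flagged but did not fully resolve is real and is a typo in the paper, not a flaw in your reasoning: with $s=v_1^1$ as written, an $s$-$t$-path can ascend only $k-1$ layers, so your forward construction would fail when $|\J\cap\I'|=k$, and your claim in the reverse direction that ``$P$ starts in $L_1$ and ends in $L_k$, so at most $k$ red arcs are used'' is off by one. The paper's own proof counts layer transitions $\ell\to\ell+1$ for all $\ell=0,\dots,k-1$, which only makes sense if the source is $s=v_1^0$; with that correction both your argument and the paper's go through cleanly.
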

\begin{proof}
    All arcs of $D$ go from a vertex corresponding to a point $p$ 
    to a vertex corresponding to a point $p'$ where $p < p'$. In addition,
    all arcs of length $1$ go from the startpoint to the endpoint of the 
    interval that is added to $\I(P)$ if $P$ traverses the arc. Hence, given an
    $s$-$t$-path $P$ the length $1$ arcs traversed by $P$ correspond to 
    pairwise disjoint intervals from left to right, so the intervals $\I(P)$ form an 
    independent set. Also, observe that the number of intervals in $\I(P)$ that 
    are contained in $\I'$ is bounded by $k$, since every $s$-$t$-path can only 
    move from level $\ell$ to level $\ell+1$ once for all $\ell=0,\dots,k-1$.

    It is now easy to see that every independent set $\J \subseteq \I \cup \I'$ such that 
$ |\J \cap \I'| \leq k$ can also be converted into an $s$-$t$-path $P$ such that $\I(P) = \J$.
\qed\end{proof}

Based on this lemma, it holds that computing a longest path in $D$ is 
equivalent to solving the independence number assistance problem. By similar 
arguments as in the subsection above, the same algorithm can also be used to 
solve the clique cover number interdiction problem.
In summary, we obtain the following result.

\begin{theorem}
    The independence number interdiction and clique cover assistance problem can 
    be solved in polynomial time.
    The algorithm can be implemented in $O(kn)$ time if we are given the lists of 
    intervals both sorted with respect to startpoints and endpoints.
\end{theorem}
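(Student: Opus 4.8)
The plan is to solve the independence number interdiction problem $\min_{|X| \le k} \alpha(G_X)$ directly, and to read off the clique cover assistance problem for free from it. Since $\alpha(G) = \kappa(G)$ on every interval graph, and since expanding intervals only ever adds edges (the expanding case of \cref{tab:shrink-expand-overview}), the two quantities $\min_{|X|\le k}\alpha(G_X)$ and $\min_{|X|\le k}\kappa(G_X)$ are literally equal, so a single algorithm settles both. By \cref{prop:cliquecover}, an optimal clique cover of $G_X$ can always be realized by stabbing points located at interval endpoints, so the task becomes: expand at most $k$ of the original intervals so that the number of endpoints needed to stab every resulting interval is minimized.

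First I would reuse the combinatorial setup of \cref{sec:independence-interdiction}: sort $\F = \I \cup \I'$ by endpoints $d_1 < \dots < d_{2n}$, add sentinels $d_0 = -\infty$ and $d_{2n+1} = \infty$, and recall the transition cost $c(i,j)$, namely the number of original intervals lying strictly between $d_i$ and $d_j$, with $c(i,j) = \infty$ whenever some \emph{expanded} interval is trapped strictly between them. The recurrence $F(j,k') = \min\set{1 + F(i, k'-c(i,j)) : 0 \le i < j,\ c(i,j) \le k'}$ already proven there certifies correctness; the only thing left to improve is its running time, which is $\bigO(kn^2)$ when evaluated verbatim.

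Next I would recast this recurrence as a shortest path in a layered acyclic digraph, in the same spirit as the digraph $D$ built for the assistance subsection: its vertices are the pairs $\set{(m,b) : 0 \le m \le 2n+1,\ 0 \le b \le k}$, one copy of each endpoint per budget level; placing a new cover point is a length-$1$ arc, and advancing the right cover point past one endpoint either is free or forces exactly one expansion (a move to the next budget level). The idea that keeps the digraph sparse is monotonicity: for a fixed left cover point the cost $c(i,j)$ is non-decreasing in $j$, so only the maximal feasible jump at each budget level is ever useful, and all such jumps can be generated by a two-pointer sweep over the lists sorted by start- and endpoints. This produces a digraph with $\bigO(kn)$ vertices and $\bigO(kn)$ arcs, on which a shortest path, and hence the optimum, is found in $\bigO(kn)$ time.

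The step I expect to be the main obstacle is exactly this sparsification: proving that the quadratic family of transitions $(i \to j)$ collapses to a linear set of arcs without discarding any optimal cover. The delicate part is the correct treatment of the $\infty$ entries of $c(i,j)$ during the sweep, since an expanded interval trapped strictly between two candidate cover points must kill precisely those jumps that would skip over it, and this has to be recognized in amortized constant time from the two sorted lists. Once the sweep is arranged so that each maximal feasible jump is represented by a single arc, correctness is inherited from the recurrence above, the $\bigO(kn)$ bound is immediate, and the final steps (extracting $\min_{|X|\le k}\alpha(G_X)$ from the sink and transferring the value to the clique cover assistance problem via $\alpha = \kappa$) are routine.
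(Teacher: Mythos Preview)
You are attacking the wrong problem. The theorem sits in the \emph{Assistance} subsection, and both the surrounding text and \cref{tab:result-overview} (which records $O(kn^2)$ for interdiction of $\alpha$ and $O(kn)$ for assistance) show that the printed statement is a typo: it should read independence number \emph{assistance} and clique cover \emph{interdiction}. The paper's proof is for that problem. The assistance problem $\max_{|X|\le k}\alpha(G_X)$ with \emph{shrinking} intervals is first rephrased as selecting a maximum independent subfamily $\J\subseteq\I\cup\I'$ with $|\J\cap\I'|\le k$, and then reduced to a longest path in a layered DAG: $k{+}1$ layers of $2n$ endpoint-vertices each, length-$1$ intra-layer arcs for intervals of $\I$, length-$1$ cross-layer arcs for intervals of $\I'$, and length-$0$ rightward and upward arcs. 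That DAG has $O(kn)$ vertices and arcs, and the lemma preceding the theorem certifies the bijection between paths and feasible independent sets. There is no sparsification of the $c(i,j)$ recurrence at all; that recurrence belongs to the other subsection and the other bound.

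Your sketch instead tries to accelerate the $O(kn^2)$ dynamic program of \cref{sec:independence-interdiction} for the \emph{interdiction} problem down to $O(kn)$, i.e.\ to prove more than the paper claims. Even on its own terms the plan has a concrete gap. In your DAG the state $(m,b)$ carries only a frontier endpoint and a budget, and you assert that advancing the frontier past one endpoint is deterministically either free or costs one expansion. But whether the interval with right endpoint $d_m$ is already stabbed depends on where the \emph{last placed} cover point sits relative to that interval's left endpoint, and that information is absent from $(m,b)$; the outgoing ``advance'' arc is therefore not well-defined. Restoring the last cover point to the state makes the construction correct but brings the state space back to $\Theta(kn^2)$, which is exactly the paper's bound for interdiction. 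Your monotonicity remark about $c(i,\cdot)$ is true, yet ``only the maximal feasible jump at each budget level'' still yields up to $k{+}1$ outgoing arcs per state $(i,b)$, hence $\Theta(k^2 n)$ arcs overall, not $O(kn)$; a further dominance argument would be needed, and you do not supply one.
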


\section{Maximum Clique Size}
In this section we consider the parameter $\omega$,
the maximum clique size of a graph. Note that for an interval graph $G$ it holds
that $\omega(G) = \chi(G)$, where $\chi(G)$ is the chromatic number of $G$.
The interdiction problem for $\omega$, which is equivalent to the assistance problem 
for $\chi$, has shrinking intervals, 
and the goal is to compute $\min_{|X| \leq k} \omega(G_X)$. 
The assistance problem for $\omega$, which is equivalent to the interdiction problem for $\chi$,
has expanding intervals, and the goal is to compute $\max_{|X| \leq k} \omega(G_X)$.

\subsection{Assistance problem}

\begin{theorem}\label{thm:clique:assistance}
    The maximum clique size assistance problem and the chromatic number interdiction problem can be solved in $\bigO(n)$ time, given the list of all intervals sorted with respect to endpoints. 
\end{theorem}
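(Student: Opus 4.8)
The plan is to exploit the classical fact that for interval graphs $\omega(G)$ equals the maximum number of intervals covering a single point, and that the relevant ``single point'' can always be taken to be an endpoint of an interval. So in the expanding case, the assistance problem $\max_{|X|\le k}\omega(G_X)$ amounts to: choose a point $x$ and a set $X$ with $|X|\le k$, and maximize the number of intervals of $\I_X$ that contain $x$. An interval $F_i=[a_i,b_i]$ of the \emph{original} list contains $x$ ``for free'' (it counts whether or not $i\in X$) iff $a_i\le x\le b_i$; an interval not containing $x$ originally can be made to contain $x$ iff its replacement $I'_i=[a'_i,b'_i]\supseteq I_i$ satisfies $a'_i\le x\le b'_i$, and doing so costs one unit of budget. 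Hence, for a fixed candidate point $x$, the optimal value is
\[
 g(x) \;=\; \bigl|\{i : a_i\le x\le b_i\}\bigr| \;+\; \min\!\Bigl\{k,\ \bigl|\{i : \neg(a_i\le x\le b_i)\ \text{and}\ a'_i\le x\le b'_i\}\bigr|\Bigr\},
\]
and the answer is $\max_x g(x)$. Because both quantities inside change only at start- and endpoints of intervals in $\I\cup\I'$, it suffices to let $x$ range over this finite set of $O(n)$ candidate points; correctness of restricting to endpoints follows from \cref{prop:cliquecover} applied to $G_X$ (a maximum clique of the realized graph is witnessed at an endpoint).

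The first step is to argue the above formula is correct: fix the optimal $X^\star$ and the point $x^\star$ witnessing $\omega(G_{X^\star})$ (which we may take to be an endpoint). Any $i\in X^\star$ whose expanded interval does \emph{not} contain $x^\star$ can be removed from $X^\star$ without decreasing the clique at $x^\star$, so we may assume every element of $X^\star$ ``helps'' at $x^\star$; conversely any $i\notin X^\star$ with $I_i\ni x^\star$ already contributes, and any $i\notin X^\star$ with $I'_i\ni x^\star$ but $I_i\not\ni x^\star$ could be added if budget remained. This shows $\omega(G_{X^\star})=g(x^\star)\le\max_x g(x)$; the reverse inequality is immediate since any $x$ and the corresponding cheapest expansion set of size $\le k$ yields a clique of size $g(x)$ in some $G_X$.

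The second step is the $O(n)$ implementation. Sort (given for free by hypothesis) all $4n$ endpoints. Two sweeps suffice: one left-to-right sweep over the original intervals maintaining a counter that increments at each $a_i$ and decrements just after each $b_i$, giving $r(x):=|\{i:a_i\le x\le b_i\}|$ at every candidate $x$; and a second, analogous sweep over the replacement intervals giving $r'(x):=|\{i:a'_i\le x\le b'_i\}|$. For a fixed endpoint $x$ the number of intervals that are ``expandable to cover $x$'' is exactly $r'(x)-r(x)$ (an original interval covering $x$ also has its superset covering $x$, so these are precisely the indices counted in $r'(x)$ but not $r(x)$). Thus $g(x)=r(x)+\min\{k,\ r'(x)-r(x)\}$, computed in $O(1)$ per candidate point, and we take the maximum over all $O(n)$ candidates. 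Total time $O(n)$ after the given sort, and the reduction to the corresponding chromatic-number interdiction statement is immediate from $\omega=\chi$ on interval graphs.

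The main obstacle — really the only subtlety — is the correctness reduction in the first step: one must be careful that an expansion set chosen to be optimal ``at one point'' is genuinely optimal globally, i.e.\ that we lose nothing by committing all $k$ expansions to helping a single clique rather than spreading them out. The exchange argument above handles this, but it needs the observation that expansions never hurt (since $I'_i\supseteq I_i$, expanding only adds edges, so it can never \emph{decrease} $\omega$ at any point), together with the fact that $\omega(G_X)$ is witnessed at a single point, so only the expansions that help at that point matter.
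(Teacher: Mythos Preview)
Your proposal is correct and follows essentially the same approach as the paper: both reduce to computing, at each endpoint $x$ of an interval in $\I\cup\I'$, the count of original intervals containing $x$ plus $\min\{k,\cdot\}$ of the additional intervals that can be made to contain $x$ by expansion, and both implement this via a sweep in $O(n)$ time. Your observation that $r'(x)-r(x)$ equals the number of expandable-to-$x$ intervals (using $I_j\subseteq I'_j$) is exactly the paper's $\beta_i$, and your exchange argument for correctness is, if anything, more explicit than the paper's brief justification.
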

\begin{proof}
The problem is a highly local problem, since the maximum clique size is achieved at a point 
$x \in \mathbb{R}$ that is included in a maximum number of intervals.
Let $\fromto{p_1}{p_{2n}}$ be the set of all start- and endpoints of intervals in $\I \cup \I'$. For each $i \in [2n]$, let $\alpha_i$ be the number of intervals in $\I$ that contain $p_i$. Let $\beta_i := |\set{j \in [n] : p_i \not\in I_j, p_i \in I'_j}|$. The value $\beta_i$ is the maximum possible value by which $\alpha_i$ could increase by expanding some intervals. The set of all values $\alpha_i, \beta_i$ can be computed using a single sweep line algorithm. The solution to the assistance problem is then given by $\max_{i=1,\dots,2n} (\alpha_i + \min\set{\beta_i,k})$.
\qed\end{proof}

%

\subsection{Interdiction problem}

Diner et al.\ \cite{diner2018contractionDeletionBlockers} proved that the most vital nodes problem for maximum clique size on interval graphs can be solved in $\bigO(n)$ time with a simple greedy algorithm \cite{diner2018contractionDeletionBlockers}. 
Note that the most vital nodes problem for maximum clique size is a special case of the interdiction problem for maximum clique size in the shrink-expand framework 
by setting $a'_i = b'_i$ for all $i$. Observe that in this case the shrinking operation corresponds 
to the operation of removing all incident edges of a vertex, i.e. transforming it into a singleton. For 
the maximum clique size parameter this is equivalent to vertex deletion, except for the special case where $k=n$, which can be trivially handled.
In the general case (where we allow $a'_i \neq b'_i$), this problem becomes much harder. To show this, we reduce from the problem $\textsc{1-fold 2-interval cover}$. In this problem, we are given an integer $t \in \N$ and $n$ objects $O_1,\dots,O_n$, where every object $O_i = [x_{i1},x_{i2}] \cup [x_{i3}, x_{i4}]$ is a disjoint union of two intervals (a so-called \emph{2-interval}), such that $0 \leq x_{i1} < x_{i2} < x_{i3} < x_{i4} \leq t$. The goal is to decide whether there exists a selection of at most $k$ objects such that every $p \in \fromto{0}{t}$ is contained in at least one object of the selection. $\textsc{1-fold 2-interval cover}$ is NP-complete \cite{ding2011onefold} and W[1]-hard \cite{approximability-c-interval}.

\begin{theorem}
The maximum clique size interdiction problem and the chromatic number assistance problem in the shrink-expand framework is NP-complete. The problems are even W[1]-hard with respect to the parameter $k$.
\end{theorem}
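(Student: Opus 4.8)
The plan is to reduce from $\textsc{1-fold 2-interval cover}$, which is NP-complete \cite{ding2011onefold} and W[1]-hard in the parameter $k$ \cite{approximability-c-interval}. Given an instance with objects $O_1,\dots,O_n$, where $O_i = [x_{i1},x_{i2}] \cup [x_{i3},x_{i4}]$ with $0 \le x_{i1} < x_{i2} < x_{i3} < x_{i4} \le t$, and a budget $k$, I would build a shrink-expand instance for the maximum clique size interdiction problem as follows. First observe the key correspondence underlying the whole construction: in an interval graph, $\omega(G_X) \le c$ means that every point $p$ of the real line is covered by at most $c$ intervals of $\I_X$; so forcing $\omega$ down to a target value is exactly a fractional/integral covering constraint at every point. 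The idea is to engineer, for each object $O_i$, one original interval $I_i$ and one replacement interval $I_i'$ so that shrinking $I_i$ "removes coverage" precisely on the point set $O_i$, i.e. on the union of the two sub-intervals of $O_i$, while on the complement of $O_i$ the coverage is unaffected.

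**Realizing a 2-interval by one shrink operation.** The obstacle is that a single replacement interval $I_i'$ is one contiguous interval, not a 2-interval, so shrinking $I_i$ down to $I_i'$ can only "uncover" a prefix and a suffix of $I_i$, not two interior pieces. I would handle this by adding a large block of "padding" intervals that sit on the gap $(x_{i2}, x_{i3})$ between the two parts of $O_i$ — enough of them that the clique number is already forced high there regardless of the interdictor's choices — so that the gap region is never the bottleneck and effectively only the two parts of $O_i$ matter. Concretely, set $I_i = [x_{i1}, x_{i4}]$ and $I_i' = [x_{i2}, x_{i3}]$ (so shrinking $I_i$ to $I_i'$ removes $I_i$'s coverage from exactly $[x_{i1},x_{i2}) \cup (x_{i3},x_{i4}]$, which is $O_i$ up to endpoints — after the standard endpoint-perturbation from the preliminaries this is made exact), then add $L$ dummy intervals covering all of $[0,t]$ that cannot be meaningfully shrunk (e.g. $I_j' = I_j$ or $I_j' = \emptyset$ but with $L$ large compared to $k$), choosing $L$ so that the "baseline" coverage everywhere on $[0,t]$ is some value $B$, each $I_i$ contributes one extra unit of coverage on $O_i \cup (\text{gap})$... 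I would tune the multiplicities so that the pointwise coverage equals $B+1$ at every point of $[0,t]$ before interdiction, and the interdictor, with budget $k$, wants to bring $\omega$ down to $B$. The interdictor can reduce coverage at a point $p$ only by shrinking some $I_i$ with $p \in O_i$; hence $\omega(G_X) \le B$ iff the shrunk objects form a cover of $[0,t]$, which happens with $|X| \le k$ iff the $\textsc{1-fold 2-interval cover}$ instance is a yes-instance.

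**Wrapping up.** After making the construction precise I would (i) verify membership in NP, which is immediate since one can guess $X$ and compute $\omega(G_X)$ in polynomial time; (ii) prove the two directions of the equivalence: a cover of size $\le k$ gives an interdiction set achieving target $B$, and conversely any interdiction set $X$ with $|X| \le k$ and $\omega(G_X) \le B$ must — because the dummy intervals pin the coverage to $B+1$ everywhere on $[0,t]$ and only $\{i : p \in O_i\}$ can be used to drop the coverage at $p$ — yield a family $\{O_i : i \in X \cap [n]\}$ covering every $p \in \{0,\dots,t\}$; (iii) note that the reduction is polynomial and maps the parameter $k$ to $k$ (plus possibly an additive constant from bookkeeping intervals), giving the W[1]-hardness, since $\textsc{1-fold 2-interval cover}$ is W[1]-hard in $k$. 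The main thing to be careful about is the endpoint-coincidence issue and the exact multiplicities/placement of the padding intervals so that the coverage profile is genuinely flat at $B+1$ on $[0,t]$ and strictly less than $B$ outside, so that shrinks outside $[0,t]$ and the gap regions are never helpful; the preliminaries' remark that we may assume no two intervals share an endpoint, together with choosing $L$ polynomially large, takes care of this. Finally, the chromatic number assistance claim follows for free from $\omega(G) = \chi(G)$ on interval graphs, which turns the $\omega$-interdiction problem verbatim into the $\chi$-assistance problem. \qed
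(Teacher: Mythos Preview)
Your approach is essentially the paper's: reduce from \textsc{1-fold 2-interval cover}, set $I_i=[x_{i1},x_{i4}]$ and $I_i'=[x_{i2},x_{i3}]$, then pad with non-shrinkable intervals so that dropping $\omega$ by one forces the shrunk objects to form a cover. The key idea---realizing the 2-interval $O_i$ as the set on which a single shrink removes coverage---is exactly right.

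There is one genuine imprecision in your padding step. You propose to make the pointwise coverage equal to $B+1$ \emph{at every point of} $[0,t]$. With that choice, achieving $\omega(G_X)\le B$ would force the shrunk objects $\{O_i : i\in X\}$ to cover the \emph{entire real interval} $[0,t]$, not just the integer points $\{0,\dots,t\}$. But \textsc{1-fold 2-interval cover} only asks for coverage of $\{0,\dots,t\}$, so in the forward direction a yes-instance need not yield a successful interdiction under your construction. (You implicitly switch between ``cover of $[0,t]$'' and ``covering every $p\in\{0,\dots,t\}$'' in the two directions of your equivalence, which is where the mismatch hides.) The paper's fix is simple and clean: instead of flattening on all of $[0,t]$, add tiny non-shrinkable intervals $[p-\epsilon,p+\epsilon]$ only at each integer $p\in\{0,\dots,t\}$, bringing the coverage at those points up to $M+1$ where $M=\omega(\graph(\I))$; then the maximum clique is attained precisely at the integer points, and reducing $\omega$ to $M$ is equivalent to covering $\{0,\dots,t\}$. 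With that adjustment your argument goes through.
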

\begin{proof}
Given an instance $(t, O_1, \dots, O_n)$ of $\textsc{1-fold 2-interval cover}$, we construct an instance of the maximum clique size interdiction problem. For each object $O_i$, we introduce a shrinking interval pair $[a_i, b_i] = [x_{i1}, x_{i4}]$ and $[a'_i, b'_i] = [x_{i2},x_{i3}]$. Let $\I := \fromto{[a_1, b_1]}{[a_n, b_n]}$ be the set of all $[a_i, b_i]$ introduced so far and let $M := \omega(\graph(\I))$ be the clique number of the corresponding interval graph. We now add for each $p \in \fromto{0}{t}$ additional \enquote{non-shrinkable} pairs $[a_i, b_i] = [a'_i, b'_i] = [p- \epsilon, p + \epsilon]$ for a small $\epsilon > 0$, until $p$ is contained in exactly $M + 1$ intervals. For this newly constructed instance, we have that the interdictor can reduce the clique number from $M + 1$ down to $M$ with a budget of $k$, if and only if there is a 1-fold 2-interval cover of $\fromto{0}{t}$ with at most $k$ objects.
\qed\end{proof}


\newcommand{\scat}{\text{sc}}

\section{Scattering Number, Hamiltonicity, and Graph Toughness}
\label{sec:hamiltonicity}
In this section, we investigate a few parameters related to Hamiltonicity and connectivity of a graph.
We define by $c(G)$ the number of connected components in a graph $G$.
The key parameter for this section is the scattering number of a graph $G$, denoted by $\scat(G)$, that was defined by Jung~\cite{jung1978scat} as follows:

\[
    \scat(G) := \max \{c(G-S) - |S| : S \subseteq V(G) \text{ and } c(G-S) > 1\}.
\]
If the set above is empty, the graph $G$ is a complete graph, and in this case, we define $\scat(G) = -\infty$.

For an interval graph, the scattering number characterizes the Hamiltonicity of the graph, as summarized in the following theorem:

\begin{theorem}[Deogun, Kratsch, and Steiner~\cite{deogun1997scat}]
\label{thm:scat}
For an interval graph $G$ and for all constants $p \geq 1$,
\begin{itemize}
	\item $G$ contains a Hamilton path, if and only if $\scat(G) \leq 1$;
	\item $G$ contains a Hamilton cycle, if and only if $\scat(G) \leq 0$;
	\item $G$ contains a path cover of size $p$, if and only if $\scat(G) \leq p$.
\end{itemize}
\end{theorem}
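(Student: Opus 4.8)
The plan is to prove every ``only if'' direction by a counting argument valid for all graphs, to reduce the three ``if'' directions to one statement about path covers, and then to attack that statement using the linear structure of interval graphs. For necessity I would argue as follows: if $P_1,\dots,P_p$ is a path cover of an arbitrary graph $G$ and $S\subseteq V(G)$, then deleting $S$ breaks each $P_i$ into at most $|S\cap V(P_i)|+1$ subpaths, so $c(G-S)\le p+|S|$, hence $\scat(G)\le p$; the Hamilton-path case is $p=1$. If $G$ has a Hamilton cycle, then deleting any nonempty $S$ leaves at most $|S|$ pieces of the cycle, so $\scat(G)\le 0$. For the converse I would use that a path cover of size $q<|V(G)|$ becomes one of size $q+1$ by cutting a longest path, and that a Hamilton path is a path cover of size $1$, so that it is enough to show two things about an interval graph $G$: that its minimum path cover number $q^{\ast}(G)$ satisfies $q^{\ast}(G)\le\max(1,\scat(G))$, and, separately, that $\scat(G)\le 0$ forces a Hamilton cycle.

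\textbf{Sufficiency for path covers.} Here I would fix an interval representation of $G$ with distinct endpoints, take the maximal cliques $Q_1,\dots,Q_m$ in their natural left-to-right order (so the cliques containing any fixed vertex form a contiguous block), and write $D_i:=Q_i\cap Q_{i+1}$ for the vertices alive across the $i$-th boundary. The core is a single left-to-right sweep that keeps a family of vertex-disjoint partial paths whose endpoints lie in the active clique: on passing from $Q_i$ to $Q_{i+1}$, the vertices dying at $Q_i$ are clipped off as path endpoints, the vertices born at $Q_{i+1}$ are appended, and partial paths are merged through $D_i$ whenever more than one is alive. If the sweep ends with one path, that is a Hamilton path; otherwise I would show that at each boundary where the sweep is forced to keep several strands alive the bottleneck $D_i$ is too narrow to route them, pick a family of such boundaries $i_1<\dots<i_{r-1}$ with the sets $D_{i_j}$ pairwise disjoint and with the chunks of cliques strictly between consecutive boundaries nonempty, and set $S:=D_{i_1}\cup\dots\cup D_{i_{r-1}}$. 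These chunks are then exactly the components of $G-S$, and a count of births versus deaths across the chunks gives $c(G-S)-|S|\ge q$, where $q$ is the number of paths the sweep uses; together with the optimality of the sweep ($q=q^{\ast}(G)$) this yields $q^{\ast}(G)\le\scat(G)$ whenever $q^{\ast}(G)\ge 2$, which is the desired inequality. I expect \emph{this bookkeeping to be the main obstacle}: proving the sweep is optimal, choosing the boundaries so that $|S|=\sum_j|D_{i_j}|$ and $c(G-S)=r$ exactly, and handling degenerate cases (complete graphs, tiny $n$, chunks with no private vertices).

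\textbf{Hamilton cycle.} The implication from $\scat(G)\le 0$ to hamiltonicity is not a special case of the path-cover statement, so I would treat it separately. First note that $\scat(G)\le 0$ forces $G$ to be $2$-connected, since a cut vertex $v$ would already give $c(G-\{v\})-1\ge 1>0$. One route is a variant of the sweep above that maintains \emph{exactly two} strands and turns around at the rightmost clique so the strands close into a Hamilton cycle; then $\scat(G)\le 0$ guarantees that every bottleneck $D_i$ has at least two vertices, and more precisely enough vertices to absorb all births and deaths, which is what keeps both strands alive. A cleaner route is to reduce to the Hamilton-path case already established: pick a vertex $v$ belonging to a set that realizes $\scat(G)$, split $v$ into two slightly shifted copies (still an interval graph), and give each copy a private pendant interval, forming $G^{+}$; then $G$ is hamiltonian if and only if $G^{+}$ has a Hamilton path (which is then forced to have the two pendants as endpoints), and a short case analysis gives $\scat(G^{+})\le 1\iff\scat(G)\le 0$. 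Either way, the trivial cases---a complete graph on at least three vertices is hamiltonian and has $\scat=-\infty\le 0$---are checked by hand.
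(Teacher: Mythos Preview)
The paper does not prove this theorem; it is quoted from Deogun, Kratsch, and Steiner and used as a black box, with only the remark that the ``$\Rightarrow$'' directions are easy and hold for all graphs. So there is no proof in the paper against which to compare your attempt.

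On the merits of your sketch: the necessity arguments are correct and standard. Your plan for sufficiency in the path-cover case---order the maximal cliques $Q_1,\dots,Q_m$ left to right, run a greedy sweep maintaining vertex-disjoint partial paths, and, whenever the sweep is forced to keep $q\ge 2$ strands, exhibit a union $S=\bigcup_j D_{i_j}$ of clique boundaries witnessing $c(G-S)-|S|\ge q$---is the natural approach for interval graphs and is in the spirit of how such results are established. You are right that the bookkeeping (optimality of the sweep, pairwise disjointness of the chosen $D_{i_j}$, nonemptiness of the chunks between them) is where the real work lies; it does go through, but it is not a one-liner, and as written this part is a plan rather than a proof.

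For the Hamilton-cycle implication your second route is workable but slightly overstated. You only need the direction $\scat(G)\le 0\Rightarrow\scat(G^{+})\le 1$: then the already-established Hamilton-path case applied to the interval graph $G^{+}$ yields a Hamilton path with the two pendants as endpoints, which translates back to a Hamilton cycle in $G$. That direction in fact holds for \emph{any} choice of $v$, once you track whether $v_1,v_2,u_1,u_2$ lie in a candidate cutset $T\subseteq V(G^{+})$ and reduce to a corresponding $T'\subseteq V(G)$; so the stipulation ``$v$ belongs to a set realising $\scat(G)$'' is unnecessary and, when $G$ is complete, undefined. In carrying out the case analysis, be sure to use that the two shifted copies $v_1,v_2$ overlap and are therefore adjacent---otherwise the subcase $v_1,v_2\notin T$ does not reduce cleanly to $G-T'$. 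Your first route (a two-strand sweep that closes into a cycle, using that $\scat(G)\le 0$ forces every boundary $D_i$ to be large enough) avoids the auxiliary construction entirely and is probably the cleaner choice.
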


One can easily see that the ``$\Rightarrow$" directions in the above theorem are true for any graph. In this section, we will prove the following theorem: (the formal definitions of the problems will be presented in the later subsections)

\begin{theorem}
\label{thm:ham}
In the shrink-expand framework, the following problems can be solved in time $\bigO(kn^3)$:
\begin{itemize}
	\item[(a)] The assistance problem for scattering number,
	\item[(b)] The interdiction problem for Hamilton path,
	\item[(c)] The interdiction problem for Hamilton cycle,
	\item[(d)] The interdiction problem for path cover,
\end{itemize}
\end{theorem}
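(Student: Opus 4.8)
The plan is to reduce all four problems to a single master problem about the scattering number, then solve that master problem by dynamic programming over the interval representation. The first step is to use Theorem~\ref{thm:scat} to convert the three interdiction problems (b), (c), (d) into scattering-number problems: since $G$ contains a Hamilton path iff $\scat(G)\le 1$, the interdictor for Hamilton path wants to raise $\scat(G_X)$ above $1$, i.e.\ wants to decide whether $\max_{|X|\le k}\scat(G_X)\ge 2$; similarly for Hamilton cycle (threshold $0$) and path cover of size $p$ (threshold $p$). Note that all three interdiction problems have shrinking intervals (Table~\ref{tab:shrink-expand-overview}), while (a), the assistance problem for scattering number, has shrinking intervals as well — so in every case we are in the shrinking case, and the common core is: compute $\max_{|X|\le k}\scat(G_X)$ where shrinking can only remove edges (hence only increase the scattering number). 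So it suffices to compute, for each budget $k'\le k$, the value $\max_{|X|\le k'}\scat(G_X)$; reading off the answer to (a)–(d) is then immediate, and the $\bigO(kn^3)$ claim will follow if the DP runs in $\bigO(kn^3)$ total.

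The second step is to understand $\scat$ on an interval graph combinatorially. Fix an interval representation sorted so that no two endpoints coincide. A cut set $S$ together with the components of $G-S$ can be described by a left-to-right sweep: the components of $G-S$ occupy disjoint ``blocks'' along the line, and between consecutive blocks there must be at least one separating point covered only by intervals of $S$ (this is the interval-graph analogue of Proposition~\ref{prop:cliquecover}). Concretely, $c(G-S)-|S|$ is large when we can partition the line into many stretches each containing a nonempty component of $G-S$, separated by few intervals that we put into $S$. When we are additionally allowed to shrink up to $k$ intervals, a shrunk interval $I_j$ is replaced by $I'_j\subseteq I_j$, which can both help a separating point become ``uncovered'' and help break a long interval that would otherwise glue two blocks together. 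So the optimization is: choose separating points $q_1<q_2<\dots<q_{r}$ (endpoints of intervals, by the same argument as Proposition~\ref{prop:cliquecover}), the set $S$ of intervals straddling some $q_i$, and a set $X$ of at most $k$ intervals to shrink, so as to maximize (number of resulting components) minus $|S|$; an interval shrunk by $X$ may be removed from $S$ if its replacement no longer straddles any $q_i$, and may also fail to connect two otherwise-adjacent pieces.

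The third step is the dynamic program. I would sweep over the sorted endpoints $p_1,\dots,p_{2n}$ (together with the budget used so far $k'\in\{0,\dots,k\}$) and over a constant amount of ``boundary state'' describing the situation at the current frontier: whether we are currently inside a component block or inside a separating gap, and — crucially — which intervals are currently ``alive'' across the frontier and whether they have been shrunk. The number of alive intervals crossing a point can be up to $n$, so the state cannot record all of them individually; the key structural insight needed is that only a bounded amount of information about the crossing intervals matters — essentially the rightmost endpoint among crossing non-shrunk intervals, the rightmost endpoint among crossing intervals that we have chosen to shrink, and whether the current block already contains a vertex. This is analogous to the $\alpha$/$\alpha'$ bookkeeping in the shortest-path assistance proof and to the ``state $(x,y,d)$'' idea in Lemma~\ref{lemma:shortest_path_bellmann_principle}. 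With such a bounded state of size $\bigO(n)$ (for the position of the relevant endpoint) times $\bigO(k)$ (budget) times $\bigO(1)$ (phase bits), and an $\bigO(n)$-time transition at each endpoint, we get $\bigO(kn^3)$ overall, matching the claimed bound.

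The hard part will be the third step: identifying exactly which bounded summary of the (potentially $\Theta(n)$-many) intervals crossing the current frontier is sufficient to make the DP correct, and proving an optimal-substructure lemma (in the spirit of Lemma~\ref{lemma:shortest_path_bellmann_principle}) that licenses the recursion. In particular one must handle carefully the interaction between the two roles a shrunk interval can play — removing it from the separator $S$ versus using the shrink to disconnect two blocks — and make sure the DP does not double-count the budget or miss the degenerate case $\scat(G_X)=-\infty$ (when $G_X$ is complete). Once the structural lemma is in place, assembling the DP table and reading off the thresholds $1$, $0$, and $p$ for parts (b), (c), (d), and the raw value for part (a), is routine.
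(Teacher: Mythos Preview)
Your high-level plan matches the paper's: parts (b)--(d) reduce to (a) via Theorem~\ref{thm:scat} exactly as you say, and (a) is then solved by a sweep-line dynamic program over the endpoints in $\I\cup\I'$ with total work $\bigO(kn^3)$. So the skeleton is right.

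The gap is precisely where you flag it: you have not identified a workable DP state, and your tentative description is both different from the paper's and internally inconsistent (you propose tracking \emph{two} ``rightmost endpoints among crossing intervals'', then budget the runtime as if there were only one). The paper's state is not a summary of intervals crossing the sweep line; it is the pair $(\ell,r)$ giving the full span $L=[\ell,r]$ of the \emph{leftmost} connected component assembled so far (sweeping right-to-left over the $2n$ right endpoints), together with the remaining budget $k'$ and a capped component counter $\gamma\in\{0,1,2\}$. Both $\ell$ and $r$ are genuinely needed, which is exactly why the state space is $\bigO(kn^3)$. The second idea you are missing is a \emph{delayed-decision} mechanism: when the sweep reaches the right endpoint of an original interval $I_t$, one may either lock $I_t$ into the current component or postpone; the final choice among lock/delete/shrink for index $t$ is made only when the right endpoint of the replacement $I'_t$ is reached, and at that moment the test ``$I_t\subseteq L$?''---which requires knowing both $\ell$ and $r$---determines which options are legal. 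The paper packages this as a nondeterministic process $P$ and proves two lemmas: every run of $P$ yields a valid pair $(X,S)$ with final score $s(X,S)=c(G_X{-}S)-|S|$, and every pair $(X,S)$ is dominated by some run. Longest path in the resulting DAG then gives the answer.

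Two smaller points. First, your worry about the ``two roles a shrunk interval can play'' evaporates once you note, as the paper does up front, that w.l.o.g.\ $X\cap S=\emptyset$: there is never a reason to both shrink an interval and place its vertex in the separator, so the three mutually exclusive options lock/delete/shrink really are the only cases. Second, your ``separating points $q_1<\dots<q_r$'' picture is a legitimate dual viewpoint, but the paper does not use it; working primally with the evolving component span $L$ is what makes the state finite and the transitions constant-time.
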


As the scattering number is the central parameter in this section, we will first present the method to solve this assistance problem for scattering number in Section~\ref{subsection:scat}.
After that, we will show how by solving this problem, we can solve the interdiction problems mentioned in Theorem~\ref{thm:ham}.

We also discuss briefly two implications of the above theorem. 
Firstly, in Section~\ref{subsection:ham_mvn}, we show how the most vital nodes problem for Hamiltonicity parameters can be reduced to the assistance problem for scattering number and hence can be solved in similar time. 
Secondly, the method to solve the above assistance problem can be easily extended to the interdiction problem of a related parameter, graph toughness. 
In Section~\ref{subsection:tough}, we will define this parameter and discuss the modification needed to apply the method above to solve the latter problem. 

\subsection{Assistance Problem for Scattering Number}
\label{subsection:scat}

The assistance problem for $\scat(G)$ is to compute $\max_{|X| \leq k}\scat(G_X)$ with shrinking intervals. 
In other words, we want to compute 

\begin{equation}
\label{eq:scat_assist}
  \max_{X \subseteq [n], |X| \leq k} \, \max_{S \subseteq [n], c(G_X - S) > 1} c(G_X - S) - |S|.
\end{equation}

We interprete problem \ref{eq:scat_assist} as a cooperative two-player game between an \emph{assistant} and the network owner. The assistant first selects the indices $X$ of the intervals to shrink, obtaining some graph $G_X$. After that, the network owner selects a set $S$ of vertices in $G_X$. (Note that by definition, the vertex set of $G_X$ is $[n]$.) The goal is that the property $c(G_X - S) > 1$ holds, and that under this condition, the value $c(G_X - S) - |S|$ is as large as possible.

W.l.o.g. we impose the following condition: There is no $i \in X$ such that $i \in S$.
In other words, whenever the assistant chooses to shrink some interval, the network owner does not at the same time decide to delete the corresponding vertex.
Suppose there exists such an $i$.
We then remove $i$ from $X$ while keeping $i$ in $S$. 
Effectively, the interval corresponding to the vertex $i$ is now the original interval instead of the replacement interval.
Observe that $G_X - S$ and $|S|$ remain unchanged.
However, we now have an additional budget to shrink another interval (i.e., adding the corresponding vertex into $X$) and potentially increase $c(G_X - S)$.

The above condition implies that $S \cap X = \emptyset$. 
Further, all the corresponding intervals for the vertices in $S$ are original intervals in $\I$ (and not replacement intervals).
Therefore, we can refer to the set of these intervals as $\I[S]$.

Motivated by the above condition, we have the following definition: A pair $(X,S)$ with $X, S \subseteq [n]$ is called a \emph{candidate pair}, if $|X| \leq k$ and $X \cap S = \emptyset$.
Further, we define 
\[
s(X,S) := \begin{cases} 
c(G_X - S) - |S| &\text{if } c(G_X - S) > 1,\\ 
-\infty &\text{otherwise.}
\end{cases}
\]

By the above explanations, we then have that the value of expression (\ref{eq:scat_assist}) is exactly 
\begin{equation}
 \max_{(X, S) \text{ candidate pair}} s(X, S). \label{eq:scat_assist_prime}
\end{equation}

\subsubsection{The idea.}
In order to show that the assistance problem for the scattering number can be solved in polynomial time, we employ the following proof strategy: The assistant and network owner need to pick a candidate pair $(X, S)$ maximizing expression (\ref{eq:scat_assist_prime}). We will describe a non-deterministic process $P$, which describes a formal, non-deterministic way to pick the elements of $X$ and $S$ one-by-one. We will show that:
\begin{itemize}
\item Every sequence of choices one can take in process $P$ corresponds to a candidate pair $(X, S)$, and also
\item for every candidate pair $(X, S)$, one can find a sequence of choices in process $P$, which corresponds either to $(X, S)$, or to another candidate pair $(X', S')$, which is even \enquote{better} than $(X, S)$. (That is, $s(X', S') \geq s(X, S)$.)
\end{itemize}
We furthermore show that an optimal sequence of choices for process $P$ can be found in polynomial time by reduction to finding the longest path in a DAG. This implies that problem (\ref{eq:scat_assist_prime}) can be computed in polynomial time, which was to be shown.

We describe process $P$ in more detail: For every index $t \in [n]$, we have a pair of original interval $I_t$ and its replacement interval $I'_t$ with $I'_t \subseteq I_t$. For every candidate pair $(X, S)$ we have $X \cap S = \emptyset$. Therefore, the assistant and network owner have to decide for every index $t \in [n]$ on exactly one of the following three options: 

\begin{itemize}
	\item[(i)] $i \in X, i \not\in S$: This means the assistant \emph{shrinks} the original interval $I_t$ down to the replacement $I'_t$ and the network owner does not delete the corresponding vertex. We denote this case by $\textsc{shrink}(I_t)$, $\textsc{lock}(I'_t)$, and we say that $I'_t$ gets \emph{locked}. 
	\item[(ii)] $i \not\in X, i \in S$. This means the assistant does not shrink the original $I_t$, but the network owner \emph{deletes} the corresponding vertex. We also say that the network owner \emph{deletes} $I_t$. We denote this case by $\textsc{delete}(I_t)$, $\textsc{discard}(I'_t)$.
	\item[(iii)] $i \not\in X, i \not\in S$. This means the assistant does not shrink the original $I_t$, and the network owner does not delete the corresponding vertex. We denote this case by $\textsc{lock}(I_t)$, $\textsc{discard}(I'_t)$, and we say that $I_t$ gets \emph{locked}. 
\end{itemize}
During the execution of process $P$, we go through all intervals $F \in \I \cup \I'$ ordered descendingly by their respective endpoint. The process can call $\textsc{lock}(F)$, $\textsc{shrink}(F)$, $\textsc{delete}(F)$ or  $\textsc{discard}(F)$ for each interval. 
This corresponds to the decision whether we are in case (i) -- (iii). 
In same cases we are also allowed to make a partial decision, where we do not immediately decide for one of (i) -- (iii), but instead $\textsc{delay}$  the decision until a later point in time (namely, when encountering $I_t$, it is allowed to delay the decision until encountering $I'_t$). Also, depending on the previous choices, not every choice will be available at every step. 
The process $P$ will have the property, that at the end of the process, for each index $t \in [n]$ exactly one of cases (i), (ii), or (iii) holds. This means that at the end of process $P$ we have constructed some candidate pair $(X, S)$. 
Note that for this pair $(X,S)$, the following holds: The interval representation of the graph $G_X - S$ consists of exactly those intervals $F \in \I \cup \I'$, for which $\textsc{lock}(F)$ has been called. 
In other words, locking an interval $F \in \I \cup \I'$ means committing to the final decision, that this interval will be part of (the interval representation of) $G_X - S$.

Finally, process $P$ will keep some variables in mind, which are initialized at the start, and updated depending on the choices taken. One of these variables, $\beta$, will be such that the final value of $\beta$ at the end of the process equals exactly $s(X, S)$. It then remains to show that the highest possible value of $\beta$ at the end of process $P$ can be computed in polynomial time.

\subsubsection{Preliminaries.}
We describe the concepts necessary for the description of $P$. Let $\mathcal{F} = \I \cup \I' = \fromto{[c_1, d_1]}{[c_{2n}, d_{2n}]}$ be ordered by endpoints, i.e.\ $d_1 < d_2 < \dots < d_{2n}$. 
Process $P$ sequentially considers the interval $F_i = [c_i, d_{i}]$, for $i = 2n, \dots, 1$. 
Note that either $F_i \in \I$ or $F_i \in \I'$.
Further, we can assume that there are no empty intervals in the set $\I$ of original intervals, because if there are any, we can easily preprocess to account for and remove them. 
(However, a small technicality is that one or more of the replacement intervals could be empty. We have to keep in mind that for every  $t \in [n]$ such that $I'_t = \emptyset$ and $t \in X$, the graph $G_X - S$ has a connected component consisting of a single isolated vertex.)

At each point in time $i = 2n, \dots, 1$ during the execution of $P$, let $R_i$ be the set of intervals that have been locked so far. The interval graph $\graph(R_i)$ has multiple components in general. Among all the components of $\graph(R_i)$ that do not correspond to an empty interval, let $R_i' \subseteq R_i$ be the set of intervals corresponding to the leftmost component. The process $P$ keeps in memory two variables $\ell, r$, which describe an interval $L := [\ell, r]$. We will make sure that during the execution ($i=2n,\dots,1$) of $P$, we maintain two invariants: 
\begin{equation}
\label{eq:invariant_lock_in_alg}
L = \bigcup_{F \in R_i'}F \text{ and } r \geq d_i. 
\end{equation}
The first invariant essentially means that $L$ is the leftmost non-empty interval corresponding to a component of $G(R_i)$, while the second invariant naturally follows the order of processing from largest endpoint to the smallest.

In addition, the process keeps in memory a variable $k'$, which starts with value $k$ and describes the remaining budget of the assistant.
The process also keeps in memory a variable $\gamma$ that indicates $\min\{c(G(R_i)),2\}$.
In words, this variable keeps track of how many connected components formed by the locked intervals so far.
Since at the end, we are only interested in whether the number of connected components is at least two, we cap the value of this variable at two.

Finally, the output of the process will be calculated via a variable $\beta$, which starts at value 0.
The idea is that at the end of the process (i.e., after we finish with $i = 1$), $\beta$ will equal exactly $s(X,S)$, where $(X,S)$ is the corresponding candidate pair that the assistant has selected during the process.

\subsubsection{The process $P$.} 

Initialize $L \leftarrow \emptyset$, $k' \leftarrow k$, $\gamma \leftarrow 0$, and $\beta \leftarrow 0$. 

For $i = 2n, \dots, 1$, consider $F_i$. If $F_i \in \I$, i.e.\ if $F_i = I_t$ for some $t$, choose exactly one of the following options:
\begin{itemize}
\item $\textsc{lock}(I_t)$: If $I_t \cap L \neq \emptyset$, then set $L \leftarrow L \cup I_t$. Else, i.e.\ if $I_t \cap L = \emptyset$, then set $\beta \leftarrow \beta + 1$, $\gamma \leftarrow \min\{\gamma + 1, 2\}$, and $L \leftarrow I_t$.
\item $\textsc{delay}(I_t)$: Do nothing.
\end{itemize}
On the other hand, if $F_i \in \I'$, i.e.\ if $F_i = I'_t$ for some $t$, choose exactly one of the following possible options (note that $I'_t \subseteq I_t$):
\begin{itemize}
\item $\textsc{lock}(I_t)$, $\textsc{discard}(I'_t)$: Only possible if $I_t \subseteq L$. Do nothing.
\item $\textsc{delete}(I_t)$, $\textsc{discard}(I'_t)$: Only possible if $I_t \not\subseteq L$. Set $\beta \leftarrow \beta - 1$.
\item $\textsc{shrink}(I_t)$, $\textsc{lock}(I'_t)$: Only possible if $k' > 0$ and $I_t \not\subseteq L$. Set $k' \leftarrow k' - 1$. If $I'_t \cap L \neq \emptyset$, set $L \leftarrow L \cup I'_t$. Else, set $\beta \leftarrow\beta + 1$ and $\gamma \leftarrow \min\{\gamma + 1, 2\}$; set $L \leftarrow I_t'$ if $I'_t \neq \emptyset$.
\end{itemize}

Finally, after all the iterations above, if $\gamma < 2$, set $\beta \leftarrow -\infty$.

\begin{figure}
\centering
\includegraphics[scale=1]{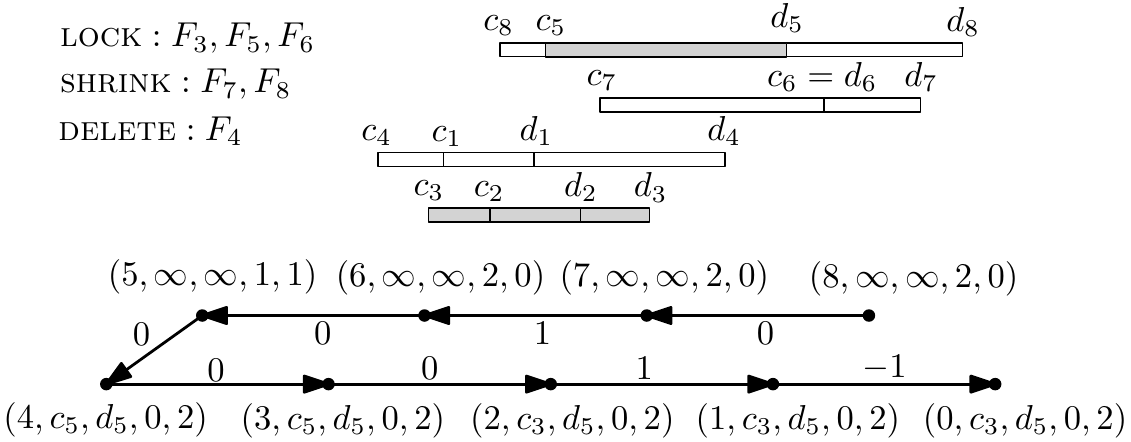}
\caption{An example execution of process $P$: The top half shows the intervals, where each pair of enveloping intervals represents an original interval and its corresponding replacement. The locked, shrunken, and deleted intervals are listed on the left. The bottom half shows the memory states $(i,\ell,r,k,\gamma)$ throughout the process. The numbers on the edges represent the changes in $\beta$.}
\label{fig:ham_path}
\end{figure}

See Figure~\ref{fig:ham_path} for an illustration of an execution of process $P$.
At the beginning, where $L = \emptyset$, we use the convention $\ell = r = \infty$. 
Further, $i = 0$ corresponds to the final memory state of the process.

\begin{lemma}
\label{lem:process_P}
(i) For every sequence of choices one can take in process $P$, there exists a candidate pair $(X,S)$, such that the final value of $\beta$ equals $s(X, S)$.
(ii) For all candidate pairs $(X,S)$, there is a sequence of choices one can take in process $P$, such that at the end of the process, $\beta \geq s(X, S)$. 
\end{lemma}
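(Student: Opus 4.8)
The plan is to prove the two directions separately, each by an inductive analysis of the process as $i$ runs from $2n$ down to $1$. The key invariant to track is the relationship between the process state $(i, L, k', \gamma, \beta)$ and the combinatorics of the partially-decided candidate pair. Concretely, at each step define $R_i$ to be the set of intervals locked so far (as in the preamble), and observe that the locked intervals all lie "to the right" of the current sweep position, so the number of connected components of $\graph(R_i)$ that will survive is exactly the number of maximal blocks of mutually overlapping locked intervals encountered so far, with $L$ being the leftmost such block. The main bookkeeping claim, proved by induction on the iteration, is: $\gamma = \min\{c(\graph(R_i)), 2\}$; the invariant \eqref{eq:invariant_lock_in_alg} holds; $k - k'$ equals the number of shrink-operations performed so far; and $\beta$ equals (number of locked intervals that started a new block so far) minus (number of delete-operations so far) --- which, once the process terminates, is exactly $c(G_X - S) - |S|$ provided $\gamma$ reached $2$, and $-\infty$ otherwise. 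I would verify this claim by checking each of the at most four branches of the process updates $\gamma, \beta, k', L$ consistently with the claimed interpretation; this is the routine but necessary core of the argument.

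For part~(i), I would argue as follows. Fix any full sequence of choices. Reading off the decisions, for each $t \in [n]$ exactly one of (i)--(iii) has been committed by the end: a $\textsc{delay}(I_t)$ is always resolved when $I'_t$ is later encountered (since $d'_t < d_t$, the replacement comes strictly later in the descending order, so this is well-defined), into one of $\textsc{lock}/\textsc{delete}/\textsc{shrink}$; and if instead $I_t$ was directly locked, that is case (iii). This yields a well-defined pair $(X, S)$ with $X$ the shrunk indices and $S$ the deleted indices; disjointness $X \cap S = \emptyset$ is immediate, and $|X| \le k$ holds because $k'$ never goes negative (the $\textsc{shrink}$ branch requires $k' > 0$). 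So $(X, S)$ is a candidate pair. By the bookkeeping claim, the interval representation of $G_X - S$ is exactly the set of locked intervals, and the final $\beta$ equals $c(G_X - S) - |S|$ if $c(G_X - S) = \gamma_{\text{final}}$-detected-to-be $\geq 2$ and $-\infty$ otherwise --- which matches the definition of $s(X, S)$, using here the subtlety noted in the Preliminaries that an empty replacement interval that is locked (the $I'_t = \emptyset$, $t \in X$ case) still counts as a singleton component, which the process accounts for by incrementing $\gamma$ while leaving $L$ unchanged.

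For part~(ii), given an arbitrary candidate pair $(X, S)$, I would construct the matching sequence of choices explicitly: for $t \notin S$ with $t \notin X$ (case (iii)), directly $\textsc{lock}(I_t)$ when $I_t$ is encountered; for $t \in X$ (case (i)), call $\textsc{delay}(I_t)$ at $I_t$ and then $\textsc{shrink}(I_t), \textsc{lock}(I'_t)$ at $I'_t$; for $t \in S$ (case (ii)), call $\textsc{delay}(I_t)$ at $I_t$ and $\textsc{delete}(I_t), \textsc{discard}(I'_t)$ at $I'_t$. The only thing to check is feasibility of these choices, i.e.\ that the "only possible if" guards are satisfied. The budget guard is fine since $|X| \le k$. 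The guard "$I_t \subseteq L$" versus "$I_t \not\subseteq L$" is the delicate point: it is conceivable that for some $t \in S$ the interval $I_t$ happens to already be contained in the current leftmost block $L$, in which case $\textsc{delete}$ is not allowed and we are forced to $\textsc{lock}(I_t)$ instead --- but then we have locked an extra interval and not deleted it, which can only change $(X, S)$ into some $(X', S')$ with $S' \subsetneq S$, $X' = X$, and it must be checked that $s(X', S') \ge s(X, S)$; intuitively deleting a vertex whose interval is swallowed by an existing component neither disconnects anything nor helps, so removing it from $S$ increases $-|S|$ by one while not decreasing $c$. I expect this containment subtlety --- showing that whenever the naive choice is infeasible the forced alternative yields a genuinely at-least-as-good candidate pair, and that the process's $\beta$ correctly lower-bounds $s(X,S)$ in all such cases --- to be the main obstacle; it is exactly the reason the lemma is stated with "$\beta \ge s(X,S)$" rather than equality in part~(ii). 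Everything else reduces to the bookkeeping claim above.
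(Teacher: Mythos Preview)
Your proposal is correct and takes essentially the same approach as the paper: track the invariants \eqref{eq:invariant_lock_in_alg} to prove the bookkeeping claim for part~(i), and for part~(ii) follow the ``intended'' choices, falling back to $\textsc{lock}(I_t),\textsc{discard}(I'_t)$ whenever the guard $I_t\subseteq L$ blocks the desired option, thereby producing a candidate pair $(X',S')$ with $X'\subseteq X$, $S'\subseteq S$ and $s(X',S')\ge s(X,S)$.

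One point you should make explicit: the infeasibility issue in part~(ii) arises not only for $t\in S$ (where $\textsc{delete}$ requires $I_t\not\subseteq L$) but equally for $t\in X$ (where $\textsc{shrink}$ has the same guard). You discuss only the $S$-case. The $X$-case is handled identically---if $I_t\subseteq L$ you are forced to lock $I_t$ instead of shrinking, which removes $t$ from $X$; since $I'_t\subseteq I_t\subseteq L$ and $L$ is already the union of previously locked intervals, replacing $I'_t$ by $I_t$ does not change the component structure, so $c(G_{X'}-S')=c(G_X-S)$. The paper phrases this uniformly (``choose $\textsc{lock}(I_t),\textsc{discard}(I'_t)$ if possible'') and then argues once that the components of $G_{X'}-S'$ and $G_X-S$ coincide as subsets of the real line, which is slightly cleaner than treating the two cases separately. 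Also worth stating (you leave it implicit): when $t\notin X\cup S$ and you lock $I_t$ directly, the invariant guarantees $I_t\subseteq L$ still holds when $I'_t$ is reached, so the forced option $\textsc{lock}(I_t),\textsc{discard}(I'_t)$ is indeed available and no inconsistency arises.
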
 
\begin{proof}
(i) Firstly, observe that for any execution of $P$, the invariants (\ref{eq:invariant_lock_in_alg}) are maintained throughout the process.
Secondly, we claim that at the end of process $P$, for every $I \in \I$, exactly one of $\textsc{shrink}(I)$, $\textsc{delete}(I)$, or $\textsc{lock}(I)$ was called. 
Indeed, when we process the interval $I$, if $\textsc{delay}(I)$ is called, then when we process the corresponding replacement interval $I'$, exactly one of the above operations is called for $I$.
Now suppose $\textsc{lock}(I)$ is called when we process the interval $I$.
By the design of the process, $I \subseteq L$ after this step.
Due to invariant (1) and the fact that the corresponding replacement interval $I'$ is contained in $I$, right before we process $I'$, we still have $I' \subseteq I \subseteq L$.
Hence, the option $\textsc{lock}(I), \textsc{discard}(I')$ will be called.

Now consider an execution of process $P$ such that $\beta_0$ is the final value of the variable $\beta$.
We define $X$ ($S$) as the set of the indices such that \textsc{shrink} (\textsc{delete} respectively) has been called for the corresponding intervals. 
By the observation above, we have $X \cap S = \emptyset$.
Further, whenever we call the operation $\textsc{shrink}$, $k'$ is decreased by 1.
As $k'$ starts at $k$ and cannot go below 0 during the process, we have $|X| \leq k$.
Hence, $(X, S)$ is a candidate pair.

Consider the set $R_0$ of all intervals $F \in \mathcal{F}$ that were locked during the execution of $P$. 
It is easy to see that each connected component of $\graph(R_0)$ corresponds to exactly one moment in the execution of $P$, where $\beta$ was increased and vice versa. 
Indeed, if the component corresponds to an empty interval $I'_t$, this moment was the iteration $i$, for $F_i = I'_t$.
Otherwise, using invariants (\ref{eq:invariant_lock_in_alg}), the value of $r$ after the step at this moment corresponds to the rightmost point of the component of $\graph(R_0)$. 
Clearly, the amount of times $\beta$ was decreased is exactly $|S|$.
Further, whenever $\beta$ is increased, we also increase $\gamma$ subject to the cap of 2.
Hence, if $\gamma$ is less than 2 after the process, $c(\graph(R_0)) \leq  1$ and $\beta$ is set to $-\infty$. 
Overall, we have $\beta_0 = s(X, S)$. 

(ii) Consider the following execution of the process $P$.
For each $t \in [n]$, when considering $I_t$, we choose the option $\textsc{lock}(I_t)$ if $t \notin X \cup S$ and the option $\textsc{delay}(I_t)$ otherwise.
When considering $I'_t$, we choose $\textsc{lock}(I_t), \textsc{discard}(I'_t)$ if possible.
Otherwise, this implies that $t \in X$ or $t \in S$. 
Because $(X,S)$ is a candidate pair, it is impossible that $t \in X \cap S$.
Then we choose $\textsc{shrink}(I_t), \textsc{lock}(I'_t)$ if $t \in X$ and $\textsc{delete}(I_t),$ $\textsc{discard}(I'_t)$ if $t \in S$.

Following the argument for the proof of (i) above, we can construct $X'$ and $S'$ such that the final value $\beta$ of the execution of $P$ equals $s(X',S')$. 
By construction, $X' \subseteq X$ and $S' \subseteq S$.
Further, following the above execution of the process $P$, it is easy to see that the intervals on the real line corresponding to the connected components of $G_X - S$ are the same as those for $G_{X'} - S'$.
Therefore, $c(G_{X'} - S') - |S'| \geq c(G_{X} - S) - |S|$, and hence $s(X',S') \geq s(X,S)$.
The lemma then follows.
\qed
\end{proof}

\begin{lemma}
\label{lem:ham_path}
The maximum possible value of $\beta$ at the end of process $P$ can be computed in time $\bigO(kn^3)$ by reduction to the maximum-cost path in a DAG.
\end{lemma}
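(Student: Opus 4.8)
The natural approach is to build a DAG $D$ whose vertices encode the relevant memory state of process $P$ at each step $i$, whose arcs encode the allowed transitions together with their effect on $\beta$, and to observe that a maximum-cost $s$--$t$ path in $D$ computes exactly $\max\beta$. First I would decide which parts of the memory are actually needed: the iteration index $i$ (which is implicit once we know how far we have processed), the interval $L = [\ell,r]$, the remaining budget $k'$, and the component counter $\gamma\in\{0,1,2\}$. The variable $\beta$ itself should not be a coordinate of the state — instead it is accumulated along the arcs as their cost — otherwise the state space would blow up. The key observation is that $\ell$ and $r$ only ever take values among the $O(n)$ endpoints of intervals in $\mathcal F$ (more precisely, $r$ is always an endpoint $d_j$ and $\ell$ is always a left endpoint $c_j$, or the symbol $\infty$), so the number of reachable states is $O(n\cdot n\cdot k\cdot 3) = O(kn^2)$.

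Next I would make the transition structure precise. A vertex of $D$ is a tuple $(i,\ell,r,k',\gamma)$ with $i\in\{0,\dots,2n\}$; the source is the initial state at $i=2n$ with $L=\emptyset$, $k'=k$, $\gamma=0$, and there is a single sink obtained by merging all final states at $i=0$ — but only those with $\gamma = 2$, reflecting the rule that $\gamma<2$ forces $\beta\leftarrow-\infty$; equivalently, keep all $i=0$ states and give arcs into the "$\gamma<2$" ones cost $-\infty$, or simply prune them. From a state at step $i$ we add one arc for each choice legal in that state when processing $F_i$: if $F_i=I_t\in\I$, the two options $\textsc{lock}(I_t)$ (splitting on $I_t\cap L=\emptyset$ or not) and $\textsc{delay}(I_t)$; if $F_i=I'_t\in\I'$, the (up to) three options, each guarded by its stated precondition ($I_t\subseteq L$, $I_t\not\subseteq L$, $k'>0$, etc.). Each arc updates $(\ell,r,k',\gamma)$ deterministically from the rule and carries cost equal to the stated change in $\beta$ ($+1$, $-1$, or $0$). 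Since every arc goes from step $i$ to step $i-1$, $D$ is a DAG; it has $O(kn^2)$ vertices and $O(1)$ out-arcs per vertex, hence $O(kn^2)$ arcs, and a maximum-cost path in a DAG is computable in linear time in its size. That already gives $O(kn^2)$; the claimed $O(kn^3)$ leaves comfortable room, e.g.\ for constructing $D$ naively or for a slightly coarser state encoding.

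The correctness statement to verify is that $\max\beta$ over all executions of $P$ equals the maximum cost of an $s$--$t$ path in $D$. This is immediate from the construction by a two-way correspondence: every execution of $P$ reads off a directed path from source to sink (the state after step $i$ is a vertex, each choice is an arc) whose total cost is the net change in $\beta$, i.e.\ the final $\beta$; conversely every source--sink path in $D$ is, by construction, a legal sequence of choices in $P$ with the same final $\beta$, because arcs were included exactly when the corresponding choice is available and they record exactly the state update and $\beta$-increment prescribed by $P$. The one subtlety is the final "$\gamma<2 \Rightarrow \beta\leftarrow-\infty$" step, which is handled by restricting the sink to $\gamma=2$ states; paths ending with $\gamma<2$ correspond to executions whose value is $-\infty$ and are (correctly) discarded — and if no $\gamma=2$ state is reachable, the answer is $-\infty$, matching the case where $G_X-S$ can never be disconnected.

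**Main obstacle.** The real work is the bookkeeping: confirming that $\ell$ and $r$ genuinely range over only $O(n)$ values throughout any execution (this uses invariant~\eqref{eq:invariant_lock_in_alg}, in particular $r\ge d_i$ and the fact that $L$ is always a union of original/replacement intervals, so its endpoints are interval endpoints), and carefully transcribing each of the five/six branches of $P$ — with its precondition and its $(\ell,r,k',\gamma,\beta)$ update — into an arc of $D$ without an off-by-one or a missed guard, especially the delicate empty-replacement case $I'_t=\emptyset$ where a component is created ($\beta{+}{+}$, $\gamma{+}{+}$) but $L$ is left unchanged. Once the state space is pinned down and the transitions are tabulated, the bound $O(kn^2)\subseteq O(kn^3)$ and the appeal to longest-path-in-a-DAG are routine.
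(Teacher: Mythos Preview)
Your proposal is correct and takes essentially the same approach as the paper: encode the memory state $(i,\ell,r,k',\gamma)$ as DAG vertices, transitions as arcs weighted by the $\beta$-increment, handle the final $\gamma<2$ rule via a $-\infty$ cost, and compute a longest path. The only slip is your vertex count $O(n\cdot n\cdot k\cdot 3)=O(kn^2)$, which drops the $O(n)$ factor for the step index $i$ (the same $(\ell,r,k',\gamma)$ can recur at different~$i$, e.g.\ after consecutive $\textsc{delay}$ calls, so $i$ is not implicit in the remaining coordinates); the correct count is $O(kn^3)$, which matches both the paper and the lemma's stated bound.
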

\begin{proof}
An instance of the maximum-cost path in a DAG problem is a tuple $(H, \omega)$, where $H$ is a DAG and a function $\omega: E(H) \to \R$ describes the edge cost of $H$.
The problem is to find a path in $H$ with the maximum cost, which is defined as the sum of edge costs along the path.
This problem can be solved with dynamic programming with runtime linear to the number of edges of H.

We first construct an instance of this problem for the reduction.
For every possible value assignments of $L = [\ell, r]$, $k'$, and $\gamma$ describing the memory state of process $P$ at step $i \in \fromto{2n}{0}$, we add a vertex $(i,\ell,r,k',\gamma)$ in $H$. 
For $i \geq 1$, we create an edge from a vertex of the form $(i,\ell_1,r_1,k'_1,\gamma_1)$ to vertices $(i-1,\ell_2,r_2,k_2',\gamma_2)$, if and only if there is some possible choice in $P$ such that if the current memory state is $(i,\ell_1,r_1,k'_1,\gamma_1)$ then the new memory state would be $(i-1, \ell_2, r_2, k_2',\gamma_2)$ after this choice. 
Observe that the knowledge of $(i, \ell_1, r_1, k'_1,\gamma_1)$ suffices to determine which choices are possible to determine the new memory state. 
The edge costs are given by $+1$, if the line $\beta \leftarrow \beta + 1$ is called, by $-1$, if the line $\beta \leftarrow \beta- 1$ is called.
One exception is the step from a vertex $(1,\ell_1,r_1,k'_1,\gamma_1)$ to a vertex $(0,\ell_2,r_2,k'_2,\gamma_2)$, where $\gamma_2 < 2$.
In this case, we set the corresponding edge cost to $-\infty$.
The costs of all other edges are 0.

By construction, $H$ is a DAG, because for any edge in $H$, the first index of the starting vertex is higher than that of the ending vertex.
Also by construction, a path connecting the starting vertex $(2n, \infty, \infty, k, 0)$ to a vertex $(0, \ell, r, k', \gamma)$ is a maximal path, and its cost is the final value $\beta$ at the end the corresponding execution of $P$ that incurs the memory states along the path.
Consequently, the cost of the maximum-cost path is exactly the maximum possibe value of $\beta$ at the end of process $P$. 
There are $\bigO(n)$ possible values for each of $i, \ell, r$ respectively, so $H$ has $\bigO(kn^3)$ vertices and edges. Hence the desired path can be found in $\bigO(kn^3)$ time.
\qed
\end{proof}

\paragraph*{Proof of Theorem~\ref{thm:ham}(a).}
Let $\beta_{\max}$ be the maximum possible value of $\beta$ at the end of process $P$.
As a consequence of Lemma~\ref{lem:process_P}, there exists a candidate pair $(X,S)$ such that $s(X,S) = \beta_{\max}$, and
for any candidate pair $(X',S')$, there exists an execution of $P$ such that the final value of $\beta$ is at least $s(X',S')$.
This implies that $\beta_{\max}$ is the maximum of $s(X,S)$ over all candidate pairs $(X,S)$, i.e., the value of the original problem that we want to compute.
By Lemma~\ref{lem:ham_path}, $\beta_{\max}$ can be computed in time $\bigO(kn^3)$.
Theorem~\ref{thm:ham}(a) then follows.
\qed

\subsection{Interdiction Problems for Hamiltonicity}

In the interdiction problem for Hamilton path (cycle), we are given shrinking intervals $\I, \I'$, such that $\graph(\I)$ has a Hamilton path (cycle), and the interdictor wishes to select $X \subseteq [n]$ of size at most $k$, such that $\graph(\I_X)$ does not have a Hamilton path (cycle) anymore.
Similarly in the interdiction problem for path cover, we also have shrinking intervals, and the objective of the interdictor is to maximize the size of the path cover within the budget $k$ shrinkages.

\paragraph*{Proof of Theorem~\ref{thm:ham}(b)-(d).}
These are implied by Theorem~\ref{thm:scat} and Theorem~\ref{thm:ham}(a).
To solve these problems, we first compute $\max_{|X| \leq k}\scat(G_X)$, i.e. the value of the assistance problem of the scattering number.
This value is also the value for the interdiction problem for path cover.
For the interdiction problems for Hamilton path and Hamilton cycle, we can conclude that the interdictor can interdict successfully, if and only if this value is more than 1 and 0 respectively.
Therefore, these problems can also be solved in time $\bigO(kn^3)$.
\qed

\subsection{Most Vital Nodes Problem for Hamiltonicity}
\label{subsection:ham_mvn}
Similar to other graph parameters discussed in this paper, a natural question is whether the most vital nodes problems for Hamilton path, Hamilton cycle, and path cover are special cases of the shrink-expand framework.
The simple answer is no.
Here, shrinking to an empty interval and deleting a vertex yield different effect.
The former operation creates an isolated vertex and disconnects the graph, while the latter does not necessarily make the graph non-hamiltonian.

However, the most vital nodes problems for the above parameters can be reduced to the assistance problem for scattering number.
By Theorem~\ref{thm:scat}, the former problems can be immediately solved, if we know the value of the following problem $\textsc{Scat}(G,k)$: Given an interval graph $G$ and a budget $k$, what is the maximum scattering number that can be achieved after $k$ vertex deletion?
Note that this is technically not a most vital nodes problem in the traditional sense, as the goal of two players are aligned instead of conflicting.
The value of the problem $\textsc{Scat}(G,k)$ above in turn can be easily derived from the assitance problem for scattering number, as shown in the following lemma.

\begin{lemma}
Let $G$ be an interval graph and $k$ an integer.
Let $s_1$ be the value of the problem $\textsc{Scat}(G,k)$.
Let $s_2$ be the value of the assistance problem for scattering number with the budget $k$, where the original intervals are the intervals of $G$, and the replacement intervals are all empty intervals.
Then $s_1 = s_2 - k$.
\end{lemma}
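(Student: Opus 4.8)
The plan is to show the two quantities coincide by exhibiting a value-preserving correspondence between feasible solutions of $\textsc{Scat}(G,k)$ and candidate pairs $(X,S)$ for the scattering number assistance instance where every replacement interval is empty. The key observation is that in this special instance, choosing $i \in X$ means shrinking $I_i$ down to the empty interval, which turns vertex $i$ into an isolated vertex of $G_X$; since $i \notin S$ for a candidate pair, this isolated vertex contributes exactly $+1$ to $c(G_X - S)$.

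First I would argue the direction $s_2 \geq s_1 - k$... actually $s_2 \ge s_1 + k$: given an optimal vertex-deletion set $T \subseteq V(G)$ with $|T| \le k$ realizing $\textsc{Scat}(G,k)$, i.e. $c(G - T) - |T'| $ maximized over $T' \subseteq V(G-T)$ with $c((G-T)-T') > 1$ — wait, more carefully: $s_1 = \max\{\scat(G-T) : |T| \le k\} = \max\{c((G-T)-T') - |T'| : |T|\le k,\ c((G-T)-T')>1\}$. Given such $T$ and $T'$, set $X := T$ (shrink these to empty) and $S := T'$. Since $T \cap T' = \emptyset$ necessarily, $(X,S)$ is a candidate pair. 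The graph $G_X - S$ is obtained from $(G-T)-T'$ by adding back $|T|$ isolated vertices (the shrunk ones), so $c(G_X - S) = c((G-T)-T') + |X| > 1$, while $|S| = |T'|$. Hence $s(X,S) = c(G_X-S) - |S| = c((G-T)-T') - |T'| + |X|$. Choosing $|X| = k$ (pad $T$ arbitrarily to size exactly $k$; each extra shrink only adds another isolated component, never decreasing the value) gives $s(X,S) = s_1 + k$, so $s_2 \ge s_1 + k$.

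Conversely, for $s_2 \le s_1 + k$, take an optimal candidate pair $(X,S)$ with $s(X,S) = s_2$ and $c(G_X - S) > 1$. Here $G_X$ consists of $G$ restricted to $[n]\setminus X$, together with $|X|$ isolated vertices. Deleting $S$ (which is disjoint from $X$), the components of $G_X - S$ are exactly the $|X|$ isolated vertices plus the components of $G' := (G - X) - S = G - (X \cup S)$. So $c(G_X - S) = |X| + c(G')$ and $s_2 = |X| + c(G') - |S|$. Now set $T := X \cup S$ if $|T| \le k$ — but $|X| \le k$ and $S$ could be large, so instead I would take $T := X$ as the deleted set for the $\textsc{Scat}$ problem and $T' := S$ as the further separating set: then $\scat(G - X) \ge c((G-X) - S) - |S| = c(G') - |S|$ (this needs $c(G') > 1$; if $c(G') \le 1$ then $c(G_X - S) \le |X| + 1$, and one handles this boundary case directly, noting the value is then at most $k+1 \le s_1 + k$ as $s_1 \ge 1$ whenever $G$ is non-complete, with the complete-graph/$\scat = -\infty$ case checked separately). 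Hence $s_1 \ge \scat(G-X) \ge c(G') - |S| = s_2 - |X| \ge s_2 - k$, i.e. $s_2 \le s_1 + k$. Combining the two inequalities yields $s_1 = s_2 - k$.

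The main obstacle is the careful bookkeeping of the boundary cases: when $c(G')\le 1$ so that the network owner's set $S$ inside $G_X$ relies entirely on the artificially created isolated vertices to reach $c(G_X - S) > 1$, the reduction to a genuine $\scat$ value of $G - X$ breaks down and must be argued by hand; similarly one must confirm that padding $X$ up to size exactly $k$ with arbitrary indices is always legal (each such index must not already be forced into $S$, which is fine since we may always pick indices outside $S$ as long as $n - |S| \ge k$, and the degenerate case $n - |S| < k$ is again small and checked directly) and never decreases $s(X,S)$, since each new empty interval adds one isolated component.
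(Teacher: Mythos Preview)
Your strategy is essentially the paper's: translate between vertex deletions in $\textsc{Scat}(G,k)$ and shrinking-to-empty in the assistance instance, with the $|X|$ newly isolated vertices accounting for the shift by $k$. The direction $s_2 \ge s_1 + k$, including the padding of $X$ up to size exactly $k$, is argued just as in the paper.

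There is a genuine gap in the other direction. When the optimal candidate pair $(X,S)$ satisfies $c\bigl((G-X)-S\bigr) \le 1$, you fall back on the claim ``$s_1 \ge 1$ whenever $G$ is non-complete,'' but this is false even for interval graphs with $k \ge 1$. Take $G$ with intervals $[0,1]$, $[2,3]$, and three copies of $[0,3]$, and let $k = 1$. Then $G$ is non-complete, yet every deletion of at most one vertex leaves a graph whose scattering number is at most $0$ (removing a universal interval gives $\scat = 0$; removing an endpoint interval gives $K_4$ with $\scat = -\infty$), so $s_1 = 0$. On the other hand, shrinking any one interval to empty and taking $S = \emptyset$ gives $\scat(G_X) = 2$, so $s_2 = 2$ and $s_2 - k = 1 \ne s_1$. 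Your boundary argument therefore cannot close the case; in fact this example shows the lemma as stated already fails here. The paper's proof of the direction $s_1 \ge s_2 - k$ has the very same blind spot: it asserts that deleting the optimal $X$ produces a valid $\textsc{Scat}$ solution of value $s_2 - k$, but for $X = \{1\}$ in the example one obtains $G - X = K_4$ with $\scat = -\infty$, not $1$.
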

\begin{proof}
Consider an optimal solution of $\textsc{Scat}(G,k)$.
There are $k' \leq k$ deleted vertices in this solution.
Replacing these vertices with empty intervals in the assistance problem yields a valid (but not necessarily optimal) solution with value $s_1 + k'$.
Since shrinking an interval always strictly increases the value of the scattering number, we shrink further $k - k'$ other intervals.
The value then becomes $s_1 + k$ and this is still a valid solution.
Hence, $s_1 + k \leq s_2$.

For the other direction, with the same observation that shrinking intervals always make the value better, there must exist an optimal solution of the assistance problem with exactly $k$ shrunk intervals.
We then delete these intervals in the problem $\textsc{Scat}(G,k)$, yielding a valid (but not necessarily optimal) solution with value $s_2 - k$.
Hence, $s_1 \geq s_2 - k$.

The lemma then follows. 
\qed
\end{proof}

Combining the above lemma with Theorem~\ref{thm:ham}(a), we obtain the following:
\begin{corollary}
The most vital nodes problem for Hamilton path, Hamilton cycle, and path cover can be solved in time $\bigO(kn^3)$
\end{corollary}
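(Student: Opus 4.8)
The plan is to derive the corollary by chaining the preceding lemma, Theorem~\ref{thm:ham}(a), and the scattering-number characterization of Hamiltonicity (Theorem~\ref{thm:scat}). First I would record that the quantity $\textsc{Scat}(G,k)$ introduced above is computable in $\bigO(kn^3)$ time: by the lemma it equals $s_2 - k$, where $s_2$ is the optimum of the assistance problem for the scattering number on the shrink-expand instance whose original intervals form an interval representation of $G$ and all of whose replacement intervals are empty; and by Theorem~\ref{thm:ham}(a) this $s_2$ can be computed in $\bigO(kn^3)$ time. (The subtlety of empty replacement intervals is already accounted for in the proof of Theorem~\ref{thm:ham}(a).)

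Next I would express each of the three most vital nodes objectives as a function of the single number $\textsc{Scat}(G,k)$. Because the class of interval graphs is hereditary, $G-S$ is again an interval graph for every $S \subseteq V(G)$, so Theorem~\ref{thm:scat} applies to $G-S$. Hence the interdictor can delete at most $k$ vertices so that the remaining graph has no Hamilton path if and only if some $S$ with $|S|\le k$ satisfies $\scat(G-S) \ge 2$, i.e.\ if and only if $\textsc{Scat}(G,k) \ge 2$; likewise the Hamilton cycle can be destroyed within budget $k$ if and only if $\textsc{Scat}(G,k) \ge 1$. For the path cover parameter, Theorem~\ref{thm:scat} shows that the minimum size of a path cover of $G-S$ equals $\max\{1,\scat(G-S)\}$, so the largest path cover size forced by at most $k$ deletions is exactly $\max\{1,\textsc{Scat}(G,k)\}$. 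In every case the answer is read off from $\textsc{Scat}(G,k)$ in constant additional time, giving the claimed $\bigO(kn^3)$ bound.

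There is no real obstacle here; the only points requiring care are the degenerate cases. When $G-S$ is complete we have $\scat(G-S) = -\infty$ by convention, and then the interdictor cannot destroy a Hamilton path or cycle and the forced path cover size is $1$ -- all consistent with the formulas above if one reads $\max\{1,-\infty\}$ as $1$ -- and graphs on at most two vertices (or a complete input $G$) can be treated directly. One should also note that only the ``$\Leftarrow$'' directions of Theorem~\ref{thm:scat} are invoked, and these are exactly the nontrivial directions that hold because $G-S$ is an interval graph.
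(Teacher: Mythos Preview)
Your proposal is correct and follows essentially the same route as the paper: compute $\textsc{Scat}(G,k)$ via the preceding lemma and Theorem~\ref{thm:ham}(a), then read off the three answers using the scattering-number characterization of Theorem~\ref{thm:scat}. One small inaccuracy in your final remark: you do in fact use \emph{both} directions of Theorem~\ref{thm:scat}, not only the ``$\Leftarrow$'' ones, since establishing e.g.\ ``$\textsc{Scat}(G,k)\ge 2$ iff some $|S|\le k$ makes $G-S$ non-traceable'' requires the full equivalence; but this does not affect the argument.
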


\subsection{Graph Toughness}
\label{subsection:tough}
In this subsection, we will briefly discuss a related parameter, graph toughness, which was introduced by Chv\'{a}tal~\cite{Chvatal1973} as a measure of graph connectivity. 
A graph $G$ is $t$-tough if $|S| \geq t \cdot c(G - S)$ for any cutset $S$, i.e., $S \subseteq V(G)$ and $c(G - S) > 1$.
In the interdiction problem for graph toughness, we are given shrinking intervals $\I, \I'$, such that $G(\I)$ is $t$-tough, and the interdictor wishes to select $X \subseteq \I$ of size at most $k$, such that $G(\I_X)$ is not $t$-tough, i.e., we have
\[
	\max \{t \cdot c(G_X - S) - |S|  : S \subseteq \I_X \text{ and } c(G_X - S) > 1 \} > 0.
\]

If the set on the left-hand side is empty, then the graph $G_X$ is a complete graph and we define the maximum to be $-\infty$.

Observe that the left-hand side of the above expression is almost identical to the definition of scattering number, except for the factor $t$ in front of $c(G_X - S)$.
Therefore, we can adapt the method to solve the assistance problem for scattering number to solve this problem.
In fact, the only modification to the process $P$ is to replace all instances of $\beta \leftarrow \beta + 1$ by $\beta \leftarrow \beta + t$.
Effectively, each component now contributes $t$ instead of 1 toward the final sum.
Hence, following a completely analogous proof to the proof of Theorem~\ref{thm:ham}(a), we obtain the following:

\begin{corollary}
In the shrink-expand framework, the interdiction problem for graph toughness can be solved in time $\bigO(kn^3)$.
\end{corollary}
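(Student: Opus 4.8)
The plan is to reuse the entire machinery developed for Theorem~\ref{thm:ham}(a) almost verbatim, changing only how the variable $\beta$ is incremented. The interdiction problem for graph toughness asks whether, within budget $k$ shrinkages, the interdictor can force $\max\{t \cdot c(G_X-S) - |S| : S \text{ a cutset of } G_X\} > 0$; equivalently, it asks to compute $\max_{|X|\le k}\max_{S}(t\cdot c(G_X - S) - |S|)$ (with the convention $-\infty$ when no cutset exists) and test whether this exceeds $0$. As before, I would impose the w.l.o.g.\ condition $X\cap S = \emptyset$: if some $i$ lies in both, remove $i$ from $X$, which leaves $G_X - S$ and $|S|$ unchanged but frees a unit of budget that can only help; this reduces the problem to maximizing a modified weight $s_t(X,S) := t\cdot c(G_X-S) - |S|$ (or $-\infty$ if $c(G_X-S)\le 1$) over all candidate pairs $(X,S)$.

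First I would note that the process $P$ from Section~\ref{subsection:scat} already enumerates exactly the candidate pairs, and that the only quantity it tracks which is sensitive to the objective is $\beta$: in the scattering-number setting, each newly-opened component contributes $+1$ and each deleted interval contributes $-1$. For toughness, the deletions still contribute $-|S|$ but each component should contribute $+t$ instead of $+1$. So the single modification is to replace every occurrence of the line $\beta \leftarrow \beta + 1$ in the definition of $P$ (there are exactly two: one in $\textsc{lock}(I_t)$ when $I_t\cap L = \emptyset$, and one in $\textsc{shrink}(I_t),\textsc{lock}(I'_t)$ when $I'_t\cap L=\emptyset$) by $\beta \leftarrow \beta + t$; the decrements $\beta\leftarrow\beta-1$ and the final ``if $\gamma < 2$ then $\beta\leftarrow-\infty$'' step are untouched, since the cutset condition $c(G_X-S)>1$ is unchanged.

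Next I would re-run the proof of Lemma~\ref{lem:process_P} with this change: part (i) still holds because each connected component of $\graph(R_0)$ still corresponds to exactly one moment where $\beta$ was incremented — now by $t$ rather than $1$ — and the count of decrements is still $|S|$, so the final $\beta$ equals $t\cdot c(\graph(R_0)) - |S| = s_t(X,S)$ (and $-\infty$ when $\gamma<2$, i.e.\ when there is no cutset). Part (ii) is the same argument: for any candidate pair $(X,S)$ the described execution produces $(X',S')$ with $X'\subseteq X$, $S'\subseteq S$, the same components on the real line, hence $t\cdot c(G_{X'}-S') - |S'| \ge t\cdot c(G_X-S) - |S|$. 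Then Lemma~\ref{lem:ham_path} goes through unchanged, because the DAG $H$ is built from the memory states $(i,\ell,r,k',\gamma)$ — which are exactly the same set of states — and only the edge costs change from $\{+1,-1,0,-\infty\}$ to $\{+t,-1,0,-\infty\}$; the DAG still has $\bigO(kn^3)$ vertices and edges, so a maximum-cost path is found in $\bigO(kn^3)$ time. Finally, following the proof of Theorem~\ref{thm:ham}(a), the maximum-cost path value equals $\max_{(X,S)} s_t(X,S)$, and comparing this against $0$ answers the interdiction question.

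I do not expect a genuine obstacle here, since the argument is a careful bookkeeping exercise rather than a new idea; the only point deserving care is making sure the cutset condition $c(G_X-S)>1$ is still enforced correctly (via $\gamma$ and the $-\infty$ sentinel) so that the reported value is $-\infty$ precisely when $G_X$ is complete, and that the strict-inequality test ``$>0$'' is the right threshold for ``not $t$-tough.'' One minor subtlety to double-check is the handling of empty replacement intervals, which contribute an isolated-vertex component worth $+t$ to $\beta$ and $0$ to $|S|$; this is already handled by the $I'_t=\emptyset$ bookkeeping in $P$ and carries over without change.
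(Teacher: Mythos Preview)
Your proposal is correct and follows essentially the same approach as the paper: the paper also observes that the only change needed to process $P$ is to replace every occurrence of $\beta \leftarrow \beta + 1$ by $\beta \leftarrow \beta + t$, so that each component contributes $t$ instead of $1$, and then defers to the proof of Theorem~\ref{thm:ham}(a). Your write-up is in fact more detailed than the paper's, explicitly re-verifying Lemmas~\ref{lem:process_P} and~\ref{lem:ham_path} and flagging the $\gamma$/cutset and empty-interval subtleties.
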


\section{Conclusion}
We have introduced a new framework of interdiction and assistance problems for interval graphs based on the shrinking and expanding operations.
In this framework, we have provided algorithms and classified the computational complexity of these problems for many classical parameters, including the shortest path, independence number, and clique number.
Except for the interdiction problem for the clique number which is NP-hard, the others are in P.
We have also formulated a polynomial-time algorithm for the assistance problem for the scattering number, which can be used to solve the interdiction problems for Hamiltonicity.
The interdiction problem for the scattering number, however, remains open.
We also noted that the most vital nodes problems of the above parameters are either special cases of the shrink-expand framework or can be easily solved by relevant problems in the framework. 
In particular, this has resolved an open problem posed by Diner et al.~\cite[(Q2)]{diner2018contractionDeletionBlockers}.

Finally, we note that our polynomial-time results do not generalize to superclasses of interval graphs: For the parameters $\alpha$ and $\omega$, Diner et al. proved that the most vital nodes problem is NP-complete on chordal graphs \cite{diner2018contractionDeletionBlockers}. It is also NP-complete to decide whether a chordal and bipartite graph has a Hamilton cycle \cite{muller1996hamiltonian}, so the most vital nodes problem is NP-complete too in this case. Finally, with respect to shortest path, the most vital nodes problem is NP-complete on general graphs \cite{complexityOfFindingMostVitalNodesShortestPath}, and the complexity on chordal graphs (to the best of our knowledge) is currently unknown. 

\subsubsection*{Acknowledgements.}

Stefan Lendl and Lasse Wulf acknowledge the support of the Austrian Science Fund (FWF):
W1230.

%
%

\bibliographystyle{splncs04}
\bibliography{literature}

\appendix

\section{Answer to question (Q1) by Diner et al.\cite{diner2018contractionDeletionBlockers}}
\label{app:contraction}

Let $G = (V, E)$ be a graph. For a set $X \subseteq E$, we denote by $G/X$ the graph obtained from $G$ by repeatedly contracting an edge from $X$, until no such edge remains. In \cref{sec:independence-interdiction}, we answered the open question (Q2) by Diner et al.\ \cite{diner2018contractionDeletionBlockers}. To do this, we used duality of the parameters of the independence number and the clique cover number. We show that using the same technique and a modification of the proof, we can also positively answer their question (Q1). The question is whether the \emph{contraction blocker} problem on interval graphs for parameter $\pi = \alpha$ can be solved in polynomial time. In this problem, one is given an interval graph $G = (V, E)$ and a threshold $t$, and wishes to find the minimum budget $k$, such that there exists $X \subseteq E$ of size $|X| \leq k$, such that $\alpha(G/X) \leq t$. By standard arguments, it suffices to solve the following problem instead: Given some budget $k \in \N$, compute
\begin{equation}
\min_{X \subseteq E, |X| \leq k}\alpha(G/X). \label{eq:contraction-problem}
\end{equation}  
We first show how to compute expression (\ref{eq:contraction-problem}) for connected interval graphs and then extend the answer to disconnected graphs. Like in \cref{sec:independence-interdiction}, let $\kappa(G)$ denote the clique-cover number of $G$. We also use the following notation in this section: We denote by $G = (V, E)$ some interval graph on $n$ vertices, with interval represention $\I = \fromto{[a_1, b_1]}{[a_n, b_n]}$. The set $\mathcal{B} =  \fromto{b_1}{b_n}$ is the set of all endpoints of intervals. Let $B \subseteq \mathcal{B}$. We call an interval $I \in \I$ \emph{disjoint from $B$}, if $I \cap B = \emptyset$. The following lemma establishes a connection between intervals disjoint to $B$ and problem (\ref{eq:contraction-problem}). Note that if $k = n$,  problem (\ref{eq:contraction-problem}) becomes trivial, so we can assume without loss of generality, that $k < n$.

\begin{lemma}
\label{lemma:contraction}
Let $G$ be a connected interval graph on $n$ vertices. If $k < n$, then
\begin{align*}
&\min_{X \subseteq E, |X| \leq k}\alpha(G/X) \\
= &\min\set{|B| : B \subseteq \mathcal{B}, \text{ at most $k$ intervals from $\I$ are disjoint from $B$}}.
\end{align*} 
\end{lemma}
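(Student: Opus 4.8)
The plan is to exploit the identity $\alpha(H) = \kappa(H)$ for interval graphs (which holds for $H = G/X$, since contracting edges of an interval graph yields an interval graph) and translate both sides of the claimed equation into statements about a set $B$ of "surviving" endpoints. The key observation is that contracting an edge $\{u,v\}$ of an interval graph, where $I_u, I_v$ intersect, can be realized on the interval representation by replacing $I_u$ and $I_v$ with their union $I_u \cup I_v$ (which is again an interval because they intersect); iterating this, $G/X$ has an interval representation whose intervals are unions of the blocks of the partition of $V$ induced by the connected components of the edge-subgraph $(V,X)$. So $\alpha(G/X) = \kappa(G/X)$ equals the minimum number of points needed to stab all these merged intervals (by \cref{prop:cliquecover}, such a stabbing set can be taken among endpoints of the original intervals). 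Thus the left-hand side asks: over all ways to merge vertices into at most $k$ "extra" groups (an edge contraction reduces the vertex count by one, and we are allowed $k$ contractions), minimize the number of endpoints needed to stab the resulting merged intervals.

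First I would make precise the correspondence between a choice of $B \subseteq \mathcal{B}$ and a contraction set $X$. Given $B$, the intervals of $\I$ not disjoint from $B$ are each stabbed by some point of $B$; the intervals disjoint from $B$ — at most $k$ of them by hypothesis — must be "absorbed." Here connectivity of $G$ is used: each interval $I$ disjoint from $B$ still intersects some other interval (else $G$ would be disconnected, using $k<n$ so not all intervals are removed), so I can contract $I$ into a neighbour, and after all such contractions every merged interval is stabbed by a point of $B$. This gives a clique cover of $G/X$ of size $|B|$, hence $\alpha(G/X) \le |B|$, proving the "$\le$" direction of the equation (the LHS is at most the RHS). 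I should be slightly careful that contracting a disjoint interval into a neighbour does not "un-stab" things — but since the neighbour's merged interval only grows, and $B$-points are preserved, a stabbing point is never lost.

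For the reverse inequality "$\ge$", I would start from an optimal $X$ with $|X| \le k$ achieving $\alpha(G/X) = \kappa(G/X) =: r$, take an optimal clique cover of $G/X$, and by \cref{prop:cliquecover} choose stabbing points $x_1,\dots,x_r$ among endpoints of the merged intervals; each such endpoint is an endpoint $b_i$ of some original interval, so set $B := \{x_1,\dots,x_r\} \subseteq \mathcal{B}$, giving $|B| \le r$. It then remains to bound the number of original intervals disjoint from this $B$ by $k$. An original interval $I_i$ disjoint from $B$ is contained in a merged interval $J$ of $G/X$ that is stabbed by some $x_j \notin I_i$, so $I_i$ is a strict sub-interval of $J$, meaning $I_i$ was genuinely merged with something — i.e. the block containing $i$ has size $\ge 2$. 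The number of vertices lying in blocks of size $\ge 2$ is at most $2 \cdot (\text{number of contractions}) = 2k$ in general; to get the sharp bound $k$ I expect the argument is rather that each original interval disjoint from $B$ corresponds to one edge of a spanning forest of the contracted blocks, and the total number of such edges is exactly the number of contractions $\le k$. I would make this counting rigorous by charging each disjoint interval to a distinct contraction.

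\textbf{Main obstacle.} The delicate point is this last counting step in the "$\ge$" direction: pinning down exactly why "at most $k$ intervals disjoint from $B$" and not "at most $2k$." The resolution I anticipate is to not count vertices in nontrivial blocks but to set up a bijection-style argument: within each merged block, pick a spanning tree; the interval stabbed by the chosen point $x_j$ can serve as a "root," and every other interval in the block is disjoint from $x_j$ hence disjoint from $B$ — and there are exactly $(\text{block size}) - 1$ such non-root intervals, which is exactly the number of contraction edges used inside that block. Summing over blocks gives that the number of disjoint intervals equals the number of contractions actually used, which is $\le k$. The one subtlety to double-check is that a disjoint interval could in principle be stabbed by no point of $B$ yet still be the unique "root" of its own singleton block — but a singleton block is just an original interval unchanged, and if it is disjoint from $B$ it contributes to neither side's count consistently; I would handle this by noting a singleton block disjoint from $B$ would force $G/X$ to have an unstabbed interval, contradicting that $\{x_1,\dots,x_r\}$ is a stabbing set, so every singleton block meets $B$. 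This closes the loop, and combined with the "$\le$" direction the equality follows.
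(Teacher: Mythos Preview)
Your plan follows the same strategy as the paper: realise contractions as interval unions, use $\alpha=\kappa$, and translate both sides into a stabbing problem with endpoints. The two directions you outline are the paper's two directions.

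There is one logical slip in your ``$\geq$'' argument. After choosing a root in each block (an interval containing the stabbing point $x_j$), you assert that \emph{every non-root interval is disjoint from $B$}. That is false in general: several intervals in a block can contain $x_j$, and intervals can also be hit by other points of $B$. Fortunately you only need the reverse inclusion, which you already noted one paragraph earlier: every interval disjoint from $B$ lies in some block and cannot be the root, so the number of disjoint intervals is \emph{at most} $\sum_{\text{blocks}}(\text{block size}-1)\le |X|\le k$. With ``$\le$'' in place of ``$=$'' the argument is complete. The paper expresses the identical count as: reverse the contractions one at a time; each decontraction splits one interval into two whose union was stabbed, so at most one of the two pieces can be disjoint from $B$.

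For the ``$\leq$'' direction your idea is right, but ``contract $I$ into a neighbour'' should be sharpened to ``contract $I$ into a \emph{stabbed} neighbour'': connectivity together with $k<n$ guarantees that whenever a disjoint interval remains, some disjoint--stabbed intersecting pair exists, and contracting that pair reduces the disjoint count by exactly one. This avoids having to argue separately that merging two disjoint intervals does not stall the process.
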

\begin{proof}
\enquote{$\leq$}: Suppose there is a set $B$ such that at most $k$ intervals are disjoint from $B$. Note that not all intervals are disjoint from $B$, because $k < n$. So there exists at least one interval intersecting $B$ and at most $k$ intervals disjoint from $B$. We describe an edge contraction, which reduces the number of intervals disjoint from $B$: Assume there exists at least one interval disjoint from $B$. Then there exists also a pair of an interval $I$ disjoint from $B$ and an interval $J$ intersecting $B$, such that $I \cap J \neq \emptyset$. Otherwise the set of  intervals disjoint from $B$ and the set of intervals intersecting $B$ induce at least two connected components; a contradiction. Because $I \cap J \neq \emptyset$, there is a corresponding edge in $G$. Contracting this edge yields a new interval graph with interval representation $\I' := \I \setminus \set{I, J} \cup \set{I \cup J}$. Note that the number of intervals disjoint from $B$ in $\I'$ is one less than the number of intervals disjoint from $B$ in $\I$. Therefore, we can repeat this procedure at most $k$ times, to obtain an edge set $X \subseteq E$ of size $|X| \leq k$, and an interval representation $\I''$, such that no interval in $\I''$ is disjoint from $B$ and $G/X = \graph(\I'')$. Using duality of $\alpha$ and the clique cover number $\kappa$, and the fact that no interval in $\I''$ is disjoint from $B$, we have
\[\alpha(G/X) = \kappa(G/X) = \kappa(\graph(\I'')) \leq |B|.  \]

\enquote{$\geq$}: Let $X \subseteq E$ with $|X| \leq k$, and let $\alpha(G/X) = t$. Then also $\kappa(G/X) = t$, so there exists a clique cover with $t$ cliques $C_1, \dots, C_t$. By \cref{prop:cliquecover}, each of the $t$ cliques is equal to (or contained in) the set $C(b) = \set{I \in \I : b \in I}$ for some $b \in \mathcal{B}$. Let $b'_1, \dots, b'_t$ be the corresponding $t$ points, such that $C_i \subseteq C(b'_i)$ for all $i = 1,\dots,t$. Let $B := \fromto{b'_1}{b'_t}$. We can now reverse the edge contractions, going from $G/X$ to $G$, and observe that each decontraction adds at most one interval disjoint to $B$. Hence, $B$ is a set of size $t$, such that at most $k$ intervals in $\I$ are disjoint to $B$. \qed
\end{proof}

\begin{theorem}
\label{thm:contraction-blocker}
Given an interval graph $G$ on $n$ vertices and $k \in \N$, problem (\ref{eq:contraction-problem}) can be computed in $\bigO(kn^2)$ time.
\end{theorem}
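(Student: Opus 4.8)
The plan is to build directly on \cref{lemma:contraction}. I would first handle a connected interval graph $G$ with $k < n$; the case $k \geq n$ is trivial, since then one can contract each connected component to a single vertex, which is optimal. By \cref{lemma:contraction}, for connected $G$ it suffices to compute
\[
\min\set{|B| : B \subseteq \mathcal{B},\ \text{at most $k$ intervals of $\I$ are disjoint from $B$}},
\]
i.e.\ to stab as many intervals as possible with few endpoints, being allowed to miss up to $k$ of them.

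The structural fact I would exploit is that the missed (disjoint) intervals decompose over the gaps between consecutive chosen endpoints. Write $\mathcal{B} = \set{d_1 < \dots < d_n}$ and set $d_0 := -\infty$, $d_{n+1} := +\infty$. For $B = \set{d_{i_1} < \dots < d_{i_m}} \subseteq \mathcal{B}$, put $i_0 := 0$ and $i_{m+1} := n+1$; then an interval $[a_t, b_t] \in \I$ is disjoint from $B$ if and only if $d_{i_\ell} < a_t$ and $b_t < d_{i_{\ell+1}}$ for a (necessarily unique) index $\ell \in \set{0, \dots, m}$. Hence, defining $h(i,j) := |\set{t \in [n] : d_i < a_t \text{ and } b_t < d_j}|$ for $0 \le i < j \le n+1$, the number of intervals disjoint from $B$ equals $\sum_{\ell=0}^{m} h(i_\ell, i_{\ell+1})$. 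All $\bigO(n^2)$ values $h(i,j)$ can be precomputed in $\bigO(n^2)$ time by, for each fixed $i$, sweeping the right endpoints in increasing order.

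The efficiency of the algorithm hinges on the choice of the second coordinate of the dynamic program: I would index by the \emph{number of disjoint intervals accumulated so far}, which never needs to exceed $k$, rather than by $|B|$, which can be as large as $\alpha(G) = \Theta(n)$. For $j \in \set{0} \cup [n]$ and $0 \le k' \le k$, let $F(j,k')$ be the minimum size of a set $B \subseteq \set{d_1, \dots, d_j}$ with $d_j \in B$ (and $B = \emptyset$ when $j = 0$) such that exactly $k'$ intervals $[a_t,b_t]$ with $b_t < d_j$ are disjoint from $B$. Then $F(0,0) = 0$ and $F(0,k') = \infty$ for $k' > 0$, while for $j \geq 1$,
\[
F(j,k') = 1 + \min\set{F(i, k' - h(i,j)) : 0 \le i < j,\ h(i,j) \le k'},
\]
which is correct by the gap decomposition, with $i$ ranging over the possible chosen endpoint immediately left of $d_j$. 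The answer to the connected problem is $\min\set{F(j,k') : j \in \set{0}\cup[n],\ k' + h(j, n+1) \le k}$; since $k < n$ while $h(0, n+1) = n$, this automatically rules out $B = \emptyset$. The table has $\bigO(kn)$ entries, each computed in $\bigO(n)$ time, for a total of $\bigO(kn^2)$.

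For a general (possibly disconnected) interval graph with connected components $G_1, \dots, G_c$ on $n_1, \dots, n_c$ vertices, contractions inside one component leave the others untouched and keep the components separate, so $\alpha(G/X) = \sum_i \alpha(G_i/X_i)$ with $X_i = X \cap E(G_i)$ and $|X| = \sum_i |X_i|$. For each component I would run the algorithm above with budget $k$; its table yields, for every $\kappa \in \set{0,\dots,k}$, the value $A_i(\kappa) := \min\set{\alpha(G_i/X) : X \subseteq E(G_i),\ |X| \le \kappa}$ (the boundary range $\kappa \ge n_i - 1$, where one contracts $G_i$ to a single vertex so $A_i(\kappa) = 1$, is handled directly). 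This costs $\bigO(kn_i^2)$ per component, hence $\bigO(kn^2)$ in total, after which $\min\set{\sum_i A_i(\kappa_i) : \kappa_i \ge 0,\ \sum_i \kappa_i \le k}$ is computed by a standard min-plus knapsack over the components, well within $\bigO(kn^2)$. As the substantive structural work is already done in \cref{lemma:contraction}, I expect the main difficulty to be organizational: parameterizing the dynamic program to run in $\bigO(kn^2)$ rather than $\bigO(n^3)$, and correctly handling the virtual endpoints, the exclusion of $B = \emptyset$, and the component-wise reduction.
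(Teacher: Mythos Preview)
Your proposal is correct and follows essentially the same approach as the paper: reduce via \cref{lemma:contraction} to a stabbing problem on endpoints, solve it with the same gap-based dynamic program indexed by (rightmost chosen endpoint, number of missed intervals so far), and combine components via a budget-splitting DP. The only cosmetic differences are that you track \emph{exactly} $k'$ missed intervals and minimise over the last chosen endpoint at the end, whereas the paper tracks \emph{at most} $k'$ and introduces a sentinel endpoint $b_{n+1}=\infty$ so the answer is $F(n{+}1,k)-1$; both variants yield the same $\bigO(kn^2)$ bound.
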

\begin{proof}
We distinguish three cases:

\textbf{Case 1:} $G$ is connected, and $k \geq n$. Then \cref{eq:contraction-problem} evaluates to 1.

\textbf{Case 2:} $G$ is connected, and $k < n$. Let $\mathcal{B} = \fromto{b_1}{b_n}$ be the set of endpoints, such that w.l.o.g.\ $b_1 < b_2 < \dots < b_n$. Additionally, let $b_0 := -\infty$ and $b_{n+1} := \infty$. For $i,j \in \fromto{0}{n+1}$ and $i < j$, let 
\[c(i, j) = |\set{k \in [n] : b_i < a_k < b_k < b_j}|. \]
Note that $c(i, j)$ is the number of intervals between $b_i$ and $b_j$ disjoint to $\set{b_i, b_j}$. It is easy to see that the set of all values $c(i,j)$ can be computed in $\bigO(n^2)$ time.

We define for $j \in \fromto{1}{n+1}$ and $k' \in \fromto{0}{k}$:
$F(j, k')$ is the minimum size of a set $|B| \subseteq \fromto{b_1}{b_j}$ subject to $b_j \in B$, and at most $k'$ intervals in $\fromto{[a_1,b_1]}{[a_j, b_j]}$ are disjoint to $B$. Furthermore, define $F(0, k') = 0$ for all $k'$. It is then easy to see that 
\begin{equation}
F(j,k') = \min \{ 1 + F(i, k'-c(i,j)) \colon 0 \leq i < j\text{ and } c(i,j) \leq k' \}.
\end{equation}  
Finally, due to \cref{lemma:contraction}, we have that 
\[\min_{X \subseteq E, |X| \leq k}\alpha(G/X) = F(n+1, k) - 1. \]
(Note that the $-1$ comes from the fact that $F(n+1, k)$ counts the point $b_{n+1}$.)

\textbf{Case 3:} $G$ is disconnected. Assume $G$ consists of connetced components $C_1, \dots, C_{\ell}$, where component $C_i$ has $n_i$ vertices. The interdictor needs to distribute the budget $k = k_1 + \dots + k_{\ell}$ to the $\ell$ components. By the above argument, we can precompute in time $\bigO(kn_i^2)$ a table of all the values $\min_{|X| \leq k_i} \alpha(C_i/X)$ for $k_i \in \fromto{1}{k}$. The interdictor now needs to solve 
\[ \min_{k = k_1 + \dots + k_{\ell}} \sum_{i=1}^{\ell} \min_{|X| \leq k_i} \alpha(C_i/X). \]
It is an easy exercise to prove that this formula can be evaluated in time $\bigO(k\ell)$ using dynamic programming. In total, we have a running time of $\bigO(\sum_i kn_i^2 + k\ell) = \bigO(kn^2)$. \qed
\end{proof}

\begin{corollary}
The contraction blocker problem on interval graphs for parameter $\pi = \alpha$ can be solved in time $\bigO(n^3)$.
\end{corollary}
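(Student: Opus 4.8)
The plan is to derive this corollary from \cref{thm:contraction-blocker} with essentially no extra work, by observing that the dynamic programs in that theorem produce an entire table of values at once rather than a single number. By the standard arguments alluded to at the start of this appendix, the contraction blocker problem for $\pi = \alpha$ with threshold $t$ amounts to finding the smallest $k$ for which expression (\ref{eq:contraction-problem}) drops to at most $t$. Since contracting additional edges can only decrease $\alpha(G/X)$, the map $k' \mapsto \min_{|X|\le k'}\alpha(G/X)$ is non-increasing in $k'$; hence, once the full table of these values for $k' = 0, \dots, n$ is available, a single linear scan (or a binary search) locates the required $k$ in $\bigO(n)$ additional time.

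First I would note that the dynamic program of \cref{thm:contraction-blocker} is already indexed by the remaining budget: in the connected case it fills in $F(j, k')$ for every $j \in \fromto{1}{n+1}$ and every $k' \in \fromto{0}{k}$, and $F(n+1, k') - 1 = \min_{|X| \le k'}\alpha(G/X)$; in the disconnected case the per-component tables and the budget-splitting step likewise yield the optimum for every budget value. Therefore, running the algorithm a single time with the maximal budget $k := n$ (the case $k \ge n$ being trivial anyway) produces simultaneously all the values $\min_{|X|\le k'}\alpha(G/X)$ for $k' = 0, \dots, n$, from which the answer is read off directly.

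It then remains to confirm the $\bigO(n^3)$ time bound. In the connected case this is immediate from the $\bigO(kn^2)$ bound of \cref{thm:contraction-blocker} with $k = n$. In the disconnected case, precomputing each component's table up to budget $n$ costs $\bigO(n \cdot n_i^2)$, so $\bigO(n\sum_i n_i^2) \le \bigO(n^3)$ overall, and the budget-distribution dynamic program over the $\ell$ components runs in $\bigO(n^2)$ per component (a $(\min,+)$-style combination of two length-$\bigO(n)$ arrays), hence $\bigO(\ell n^2) \le \bigO(n^3)$ in total. I expect no genuine obstacle here: the only point needing care is to make sure that extracting \emph{all} budget values costs no more than extracting one, so that no spurious factor of $n$ is introduced — and this is exactly how the recursions in \cref{thm:contraction-blocker} are organized, the substantive technical work having already been done in the proof of that theorem and in \cref{lemma:contraction}.
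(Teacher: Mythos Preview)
Your proposal is correct and follows essentially the same approach as the paper: run the dynamic program of \cref{thm:contraction-blocker} once with $k=n$ to obtain the full table of optima for every budget $k'\in\fromto{0}{n}$, then search for the smallest $k'$ meeting the threshold. Your write-up is in fact more careful than the paper's, explicitly checking monotonicity in $k'$ and separately verifying the $\bigO(n^3)$ bound in the disconnected case.
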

\begin{proof}
Running the algorithm in \cref{thm:contraction-blocker} for $k = n$, we obtain a dynamic programming table containing for each $k' \in \fromto{1}{n}$ the maximum possible effect the interdictor can achieve with a budget of $k'$. Using binary search, we can find the minimum budget $k'$ necessary to reduce $\alpha$ down to the threshold $t$.
\end{proof}
\end{document}